\newtheorem{definition}{Definition}[section]
\newtheorem{theorem}{Theorem}[section]
\newtheorem{proposition}{Proposition}[section]
\newtheorem{corollary}{Corollary}[section]
\newtheorem{lemma}{Lemma}[section]
\newtheorem{claim}{Claim}[section]
\newtheorem{example}{Example}
\crefname{lemma}{\textbf{Lemma}}{}
\crefname{theorem}{\textbf{Theorem}}{}
\crefname{corollary}{\textbf{Corollary}}{}
\crefname{observation}{\textbf{Observation}}{}
\crefname{proposition}{\textbf{Proposition}}{}
\newcommand{\R}{\mathbb{R}}
\DeclarePairedDelimiterX\brackets[1]{(}{)}{
	
	#1
}
\DeclarePairedDelimiterX\squarebrackets[1]{[}{]}{
	
	#1
}
\DeclarePairedDelimiterX\figbrackets[1]{\{}{\}}{
	
	#1
}
\let\P\undefined
\NewDocumentCommand{\P}{o g}{
	\IfNoValueTF{#1}{
		\IfNoValueTF{#2}{
			\mathop{\mathds{P}}
		}{
			\mathop{\mathds{P}}\squarebrackets*{#2}
		}
	}{
		\IfNoValueTF{#2}{
			\mathop{\mathds{P}}_{#1}
		}{
			\mathop{\mathds{P}}_{#1}\squarebrackets*{#2}
		}
	}
}
\NewDocumentCommand{\E}{o g}{
	\IfNoValueTF{#1}{
		\IfNoValueTF{#2}{
			\mathop{\mathds{E}}
		}{
			\mathop{\mathds{E}}\squarebrackets*{#2}
		}
	}{
		\IfNoValueTF{#2}{
			\mathop{\mathds{E}}_{#1}
		}{
			\mathop{\mathds{E}}_{#1}\squarebrackets*{#2}
		}
	}
}
\NewDocumentCommand{\D}{o g}{
	\IfNoValueTF{#1}{
		\IfNoValueTF{#2}{
			\mathop{\mathds{D}}
		}{
			\mathop{\mathds{D}}\squarebrackets*{#2}
		}
	}{
		\IfNoValueTF{#2}{
			\mathop{\mathds{D}}_{#1}
		}{
			\mathop{\mathds{D}}_{#1}\squarebrackets*{#2}
		}
	}
}
\NewDocumentCommand{\cov}{o g g}{
	\IfNoValueTF{#1}{
		\IfNoValueTF{#2}{
			\mathop{\mathrm{cov}}
		}{
			\IfNoValueTF{#3}{
				\mathop{\mathrm{cov}}
			}{
				\mathop{\mathrm{cov}}\brackets*{#2, #3}
			}
		}
	}{
		\IfNoValueTF{#2}{
			\mathop{\mathrm{cov}}_{#1}
		}{
			\IfNoValueTF{#3}{
				\mathop{\mathrm{cov}}_{#1}
			}{
				\mathop{\mathrm{cov}}_{#1}\brackets*{#2, #3}
			}
		}
	}
}
\NewDocumentCommand{\I}{g}{
	\IfNoValueTF{#1}{
		\mathop{\mathds{I}}
	}{
		\mathop{\mathds{I}}\figbrackets*{#1}
	}
}
\NewDocumentCommand{\V}{o g}{
	\IfNoValueTF{#1}{
		\IfNoValueTF{#2}{
			\mathop{\mathds{V}}
		}{
			\mathop{\mathds{V}}\squarebrackets*{#2}
		}
	}{
		\IfNoValueTF{#2}{
			\mathop{\mathds{V}}_{#1}
		}{
			\mathop{\mathds{V}}_{#1}\squarebrackets*{#2}
		}
	}
}
\newcommand{\eps}{\varepsilon}
\newcommand{\Bin}{\mathrm{Bin}}
\newcommand{\tr}{\mathrm{tr}}
\newcommand{\Rr}{\mathcal{R}}
\newcommand{\wR}{\widetilde{R}}
\newcommand{\oG}{\bar{G}}
\newcommand{\wG}{\tilde{G}}
\date{}
\title{\Large \bf How to hide a clique?}
\author{
    {\rm Uriel Feige}\\
    uriel.feige@weizmann.ac.il \\
	Weizmann Institute of Science\\
	Rehovot, Israel
	\and
	{\rm Vadim Grinberg}\\
	vgm@ttic.edu\\
	Toyota Technological Institute at Chicago\footnote{Part of the work was done while the author was a visiting student in the Department of Computer Science and Applied Mathematics, Weizmann Institute of Science, and a full-time undergraduate student in the Faculty of Computer Science, Higher School of Economics, Moscow, Russia.}\\
	Chicago, IL, USA
}
\begin{document}

\maketitle

\begin{abstract}
In the well known planted clique problem, a clique (or alternatively, an independent set) of size $k$ is planted at random in an Erdos-Renyi random $G(n, p)$ graph, and the goal is to design an algorithm that finds the maximum clique (or independent set) in the resulting graph. We introduce a variation on this problem, where instead of planting the clique at random, the clique is planted by an adversary who attempts to make it difficult to find the maximum clique in the resulting graph. We show that for the standard setting of the parameters of the problem, namely, a clique of size $k = \sqrt{n}$ planted in a random $G(n, \frac{1}{2})$ graph, the known polynomial time algorithms can be extended (in a non-trivial way) to work also in the adversarial setting. In contrast, we show that for other natural settings of the parameters, such as planting an independent set of size $k=\frac{n}{2}$ in a $G(n, p)$ graph with $p = n^{-\frac{1}{2}}$, there is no polynomial time algorithm that finds an independent set of size $k$, unless NP has randomized polynomial time algorithms.
\end{abstract}

\section{Introduction}\label{sec:intro}

The planted clique problem, also referred to as hidden clique, is a problem of central importance in the design of algorithms. We introduce a variation of this problem where instead of planting the clique at random, an adversary plants the clique. Our main results are that in certain regimes of the parameters of the problem, the known polynomial time algorithms can be extended to work also in the adversarial settings, whereas for other regimes, the adversarial planting version becomes NP-hard. We find the results interesting for three reasons. One is that they concern an extensively studied problem (planted clique), but from a new direction, and we find that the results lead to a better understanding of what aspects of the planted clique problem are made use of by the known algorithms. Another is that extending the known algorithms (based on semidefinite programming) to the adversarial planted setting involves some new techniques regarding how semidefinite programming can be used and analysed. Finally, the NP-hardness results are interesting as they are proven in a semi-random model in which most of the input instance is random, and the adversary controls only a relatively small aspect of the input instance. One may hope that this brings us closer to proving NP-hardness results for purely random models, a task whose achievement would be a breakthrough in complexity theory.

\subsection{The random planted clique model}
\label{sec:random}

Our starting point is the Erdos-Renyi $G(n,  p)$ random graph model, which generates graphs on $n$ vertices, and every two vertices are connected by an edge independently with probability $p$. We start our discussion with the special case in which $p = \frac{1}{2}$, and other values of $p$ will be considered later. Given a graph $G$, let $\omega(G)$ denote the size of the maximum clique in $G$, and let $\alpha(G)$ denote the size of the maximum independent set. Given a distribution $D$ over graphs, we use the notation $G \sim D$ for denoting a graph sampled at random according to $D$. The (edge) complement of a graph $G \sim G(n,   \frac{1}{2})$ is by itself a graph sampled from $G(n, \frac{1}{2})$, and the complement of a clique is an independent set, and hence the discussion concerning cliques in $G(n, \frac{1}{2})$ extends without change to independent sets (and vice versa).

It is well known (proved by computing the expectation and variance of the number of cliques of the appropriate size) that for $G \sim G(n, \frac{1}{2})$, w.h.p. $\omega(G) \simeq 2\log n$ (the logarithm is in base~2). However, there is no known polynomial time algorithm that can find cliques of size $2\log n$ in such graphs. A polynomial time greedy algorithm can find a clique of size $(1 + o(1))\log n$. The existence of $\rho > 1$ for which polynomial time algorithms can find cliques of size $\rho \log n$ is a longstanding open problem.

In the classical planted clique problem, one starts with a graph $G' \sim G(n,  \frac{1}{2})$ and a parameter $k$. In $G'$ one chooses at random a set $K$ of $k$ vertices, and makes this set into a clique by inserting all missing edges between pairs of vertices with $K$. We refer to $K$ as the planted clique, and say that the resulting graph $G$ is distributed according to $G(n, \frac{1}{2}, k)$. Given $G \sim G(n, \frac{1}{2}, k)$, the algorithmic goal can be one of the following three: find $K$, find a clique of maximum size, or find any clique of size at least $k$. It is not difficult to show that when $k$ is sufficiently large (say, $k > 3\log n$), then with high probability $K$ is the unique maximum size clique in $G \sim G(n, \frac{1}{2}, k)$, and hence all three goals coincide. Hence in the planted clique problem, the goal is simply to design polynomial time algorithms that (with high probability over the choice of $G \sim G(n, \frac{1}{2}, k)$) find the planted clique $K$. The question is how large should $k$ be (as a function of $n$) so as to make this task feasible. 

For some sufficiently large constant $c > 0$ (throughout, we use $c$ to denote a sufficiently large constant), if $k > c\sqrt{n \log n}$, with high probability the the vertices of $K$ are simply the $k$ vertices of highest degree in $G$ (see~\cite{Kucera95}), and hence $K$ can easily be recovered. Alon, Krivelevich and Sudakov~\cite{AKS98} managed to shave the $\sqrt{\log n}$ factor, designing a spectral algorithm that recovers $K$ when $k > c\sqrt{n}$. They also showed that $c$ can be made an arbitrarily small constant, by increased the running time by a factor of $n^{O(\log(\frac{1}{c}))}$ (this is done by ``guessing" a set $K'$ of $O(\log(\frac{1}{c}))$ vertices of $K$, and finding the maximum clique in the subgraph induced on their common neighbors). Subsequently, additional algorithms were developed that find the planted clique when $k > c\sqrt{n}$. They include algorithms based on the Lovasz theta function, which is a form of semi-definite programming~\cite{FK00}, algorithms based on a ``reverse-greedy" principle~\cite{FR10, DGP14}, and message passing algorithms~\cite{DM15}. There have been many attempts to find polynomial time algorithms that succeed when $k = o(\sqrt{n})$, but so far all of them failed (see for example~\cite{Jerrum92, FK03, MPW15}). It is a major open problem whether there is any such polynomial time algorithm.

Planted clique when $p \not= \frac{1}{2}$ was not studied as extensively, but it is quite well understood how results from the $G(n, \frac{1}{2}, k)$ model transfer to the $G(n, p, k)$ model. For $p$ much smaller that $\frac{1}{2}$, say $p = n^{\delta-1}$ for some $0 < \delta < 1$ (hence average degree $n^{\delta}$), the problem changes completely. Even without planting, with high probability over the choice of $G \sim G(n, p)$ (with $p = n^{\delta-1}$) we have that $\omega(G) = O(\frac{1}{1 - \delta})$, and the maximum clique can be found in polynomial time. This also extends to finding maximum cliques in the planted setting, regardless of the value of $k$. (We are not aware of such results being previously published, but they are not difficult. See \cref{sec:enumeration}.) For $p > \frac{1}{2}$, it is more convenient to instead look at the equivalent problem in which $p < \frac{1}{2}$, but with the goal of finding a planted independent set instead of a planted clique. We refer to this model as $\bar{G}(n, p, k)$. For $G \sim G(n, p)$ (with $p = n^{\delta - 1}$) we have that with high probability $\alpha(G) = \Theta(n^{1 - \delta}\log n)$. For $G \sim \bar{G}(n, p, k)$ the known algorithms extend to finding planted independent sets of size $k = cn^{1 - \frac{\delta}{2}}$ in polynomial time. We remark that the approach of~\cite{AKS98} of making $c$ arbitrarily small does not work for such sparse graphs.

\subsection{The adversarial planted clique model}

In this paper we introduce a variation on the planted clique model (and planted independent set model) that we refer to as the adversarial planted clique model. As in the random planted clique model, we start with a graph $G' \sim G(n, p)$ and a parameter $k$. However, now a computationally unbounded adversary may inspect $G'$, select within it a subset $K$ of $k$ vertices of its choice, and make this set into a clique by inserting all missing edges between pairs of vertices with $K$. We refer to this model as $AG(n, p, k)$ (and the corresponding model for planted independent set as $A\bar{G}(n, p, k)$). As shorthand notation shall use $G \sim AG(n, p, k)$ to denote a graph generated by this process. Let us clarify that $AG(n, p, k)$ is not a distribution over graphs, but rather a family of distributions, where each adversarial strategy (where a strategy of an adversary is a mapping from $G'$ to a choice of $K$) gives rise to a different distribution. 

In the adversarial planted model, it is no longer true that the planted clique is the one of maximum size in the resulting graph $G$. Moreover, finding $K$ itself may be information theoretically impossible, as $K$ might be statistically indistinguishable from some other clique of size $k$ (that differs from $K$ by a small number of vertices). The three goals, that of finding $K$, finding a clique of maximum size, or finding any clique of size at least $k$, are no longer equivalent. Consequently, for our algorithmic results we shall aim at the more demanding goal of finding a clique of maximum size, whereas for our hardness results, we shall want them to hold even for the less demanding goal of finding an arbitrary clique of size $k$.

\subsection{Our results}

Our results cover a wide range of values of $0 < p < 1$, where $p$ may be a function of $n$. For simplicity of the presentation and to convey the main insights of our results, we present here the results for three representative regimes: $p = \frac{1}{2}$, $p = n^{\delta - 1}$ for $0 < \delta < 1$, and $p = 1 - n^{\delta - 1}$. For the latter regime, it will be more convenient to replace it by the equivalent problem of finding adversarially planted independent sets when $p = n^{\delta - 1}$.

Informally, our results show the following phenomenon. We consider only the case that $p \le \frac{1}{2}$, but consider both the planted clique and the planted independent set problems, and hence the results can be translated to $p > \frac{1}{2}$ as well. For clique, we show (\cref{thres1} and~\cref{thres2}) how to extend the algorithmic results known for the random planted clique setting to the adversarial planted clique setting. However, for independent set, we show that this is no longer possible. Specifically, when $p$ is sufficiently small, we prove (\cref{thres3}) that finding an independent set of size $k$ (any independent set, not necessarily the planted one) in the adversarial planted independent set setting is NP-hard. Moreover, the NP-hardness result holds even for large values of $k$ for which  finding a random planted independent set is trivial.

\begin{theorem}\label{thres1}
	For every fixed $\eps > 0$ and for every $k \geq \eps\sqrt{n}$, there is an (explicitly described) algorithm running in time $n^{O(\log(\frac{1}{\eps}))}$ which almost surely finds the maximum clique in a graph $G \sim AG(n, \frac{1}{2}, k)$. The statement holds for every adversarial planting strategy (choice of $k$ vertices as a function of $G' \sim G(n, \frac{1}{2})$), and the probability of success is taken over the choice of  $G' \sim G(n, \frac{1}{2})$.
\end{theorem}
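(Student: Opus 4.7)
The plan is to prove the theorem in two stages, closely mirroring the structure used by Alon--Krivelevich--Sudakov and Feige--Krauthgamer for the non-adversarial case, but with the key analytic steps replaced by arguments that are robust to adversarial planting. First, I would establish a polynomial-time algorithm for the base case $k \geq C\sqrt{n}$, where $C$ is a sufficiently large absolute constant. Then I would reduce the general case $k \geq \eps\sqrt{n}$ to the base case via an exhaustive guessing step over $t = O(\log(1/\eps))$ vertices.

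For the base case, I would use the Lov\'asz $\vartheta$-function of the complement graph as an SDP relaxation of clique. The relevant properties of $G' \sim G(n, \tfrac{1}{2})$ are (i) $\vartheta(\bar{G'}) = O(\sqrt{n})$ with high probability, and (ii) this bound holds uniformly, in the sense that for every linear-sized vertex subset $T \subseteq V(G')$ one has $\vartheta(\bar{G'|_T}) = O(\sqrt{|T|})$ with high probability. In the adversarially planted graph $G$, the uniform vector on $K$ witnesses $\vartheta(\bar{G}) \geq k$, while property (ii) applied to $T = V \setminus K$ shows that no SDP mass of comparable order can live outside $K$. Consequently, any near-optimal SDP solution must concentrate most of its weight on $K$, and a simple rounding (take the $k$ vertices of largest diagonal mass, then locally clean up by intersecting neighborhoods) recovers $K$ exactly. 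Uniqueness of the maximum clique for $k \gg 2\log n$ follows from the standard first-moment bound on cliques in $G'$, so the algorithm indeed outputs the maximum clique.

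For the reduction to the general case, set $t = \lceil 2\log_2(C/\eps) \rceil = O(\log(1/\eps))$ and enumerate all $\binom{n}{t} = n^{O(\log(1/\eps))}$ subsets $S \subseteq V(G)$ of size $t$. For each $S$ that is a clique in $G$, compute the common neighborhood $N(S)$ and invoke the base-case algorithm on $G|_{N(S)}$. For the correct guess $S \subseteq K$, every vertex of $V \setminus K$ is independently in $N(S)$ with probability $2^{-t}$, so $|N(S)| = (1+o(1))n/2^t$ w.h.p., while the $k - t$ remaining vertices of $K$ all lie in $N(S)$ and form a clique of size at least $\eps\sqrt{n} - t \geq C\sqrt{n/2^t}$ by the choice of $t$. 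The subproblem on $G|_{N(S)}$ is therefore a fresh instance of adversarial planted clique satisfying the base-case hypothesis, so the algorithm finds its maximum clique; returning the largest output across all guesses completes the job.

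The main obstacle is the base-case analysis, specifically step (ii) above. In the random planting model one exploits a dual certificate whose construction depends on $K$ being a uniformly random subset, so the rows of the adjacency matrix indexed by $V \setminus K$ are independent of $K$. An adversary, however, may choose $K$ to exploit any atypical local structure of $G'$, for example placing $K$ where $G'|_K$ is unusually dense or where $G'|_{V\setminus K}$ has an unusually large $\vartheta$-value. Overcoming this requires a \emph{uniform} bound on $\vartheta(\bar{G'|_T})$ that holds simultaneously over all subsets $T$ that the adversary could leave as the complement of $K$, so that the adversary cannot hide in any corner of $G'$. Establishing this uniform bound, either via sharp concentration for $\vartheta$ combined with a union bound over the relevant subsets, or by a more refined deterministic pseudo-randomness property of $G'$ that the SDP can see, is the technical heart of the proof; once it is in hand, both the base case and the guessing reduction fall into place routinely.
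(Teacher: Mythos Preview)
Your two-stage architecture (SDP base case plus guessing $t = O(\log(1/\eps))$ vertices and restricting to their common neighborhood) matches the paper. The gap is in the base-case analysis.

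You assert that $K$ is the unique maximum clique and that the rounding ``recovers $K$ exactly.'' Both statements are false in the adversarial model: the adversary may place $K$ inside the common neighborhood of a small set $T \subseteq V(G')$, so that $K$ together with a clique in $T$ is strictly larger than $K$, and in fact one can have $\vartheta(\bar G) > \omega(G) > k$ (this is Example~2.1 in the paper). The algorithm must output the true maximum clique $K^*$, which agrees with $K$ only up to $O(\log n)$ vertices; your cleanup ``by intersecting neighborhoods'' cannot isolate $K$, let alone $K^*$, since the adversary can arrange for outside vertices to be adjacent to all of $K$. Separately, your uniform bound (ii) is immediate from monotonicity of $\vartheta$ and needs no work, but via subadditivity over the partition $K \cup (V\setminus K)$ it only yields $\vartheta(\bar G) \le k + O(\sqrt n)$, so the top-$k$-mass set $H$ satisfies merely $|H \triangle K| = O(\sqrt n)$, not $H = K$.

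The paper takes a sharper route that bypasses these issues. It proves $\vartheta(\bar G) \le k + O(\log n)$ by an explicit dual matrix: since the Feige--Krauthgamer certificate built around $K$ itself would fail (precisely because $K$ need not be maximum), one first enlarges $K$ to a set $Q$ by adjoining every vertex with at least $\tfrac{1+p}{2}k$ neighbors in $K$ (only $O(\log n)$ such vertices exist, uniformly over all adversarial placements), builds the dual around $Q$, and shows $\lambda_1(M) = |Q| \le k + O(\log n)$. The eigenvalue analysis must hold simultaneously for all $\binom{n}{k}$ possible placements of $K$, which is substantially more delicate than in the random case and is the actual technical heart. Given the $O(\log n)$ theta gap, the SDP rounding plus a degree-based filter produces a set $F \supseteq K^*$ with $|F| \le k + O(\log n)$, and a depth-$O(\log n)$ vertex-cover branching on $F$ then extracts $K^*$ in polynomial time.
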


\begin{theorem}\label{thres2}
Let $p = n^{\delta-1}$ for $0 < \delta < 1$. Then for every $k$, there is an (explicitly described) algorithm running in time $n^{O(\frac{1}{1-\delta})}$ which almost surely finds the maximum clique in a graph $G \sim AG(n, p, k)$. The statement holds for every adversarial planting strategy, and the probability of success is taken over the choice of  $G' \sim G(n, p)$.
\end{theorem}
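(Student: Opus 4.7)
The plan is to leverage the fact that when $p = n^{\delta-1}$, the ambient random graph $G'$ has only tiny cliques and tiny common neighborhoods, so the adversarial clique $K$ is essentially the only source of size in the maximum clique of $G$. First I would fix $r = \lceil c/(1-\delta) \rceil$ for a sufficiently large absolute constant $c$. Direct union bounds on the expected number of $(r+1)$-cliques in $G'$ and on the expected number of pairs of disjoint subsets of sizes $r$ and $r+1$ with all $r(r+1)$ crossing edges present show that, with probability $1 - o(1)$ over $G' \sim G(n,p)$, two properties hold simultaneously: (i) $\omega(G') \le r$, and (ii) every vertex subset $S \subseteq V$ with $|S| \ge r$ satisfies $|N_{G'}(S) \setminus S| \le r$. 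Fix such a $G'$. Decompose the maximum clique $C^*$ of $G$ as $C^* = A^* \sqcup B^*$ with $A^* = C^* \cap K$ and $B^* = C^* \setminus K$; since $B^*$ is a clique of $G'[V \setminus K]$, property (i) yields $|B^*| \le r$ and hence $|C^*| \le k + r$.

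Given this structure, I would run the following algorithm. If $|C^*| \le 4r$, a brute-force enumeration of all subsets of $V$ of size at most $4r$ already succeeds in time $n^{O(r)}$. Otherwise $|A^*| > 3r$, so $A^*$ contains an $(r+1)$-subset. Iterate over every $(r+1)$-subset $S \subseteq V$ that is a clique in $G$ (at most $n^{r+1}$ choices), and for each compute $T_S = S \cup N_G(S)$ and solve a maximum-clique subproblem on the induced subgraph $G[T_S]$, keeping the largest clique found. Correctness follows by considering the iteration $S = S^* \subseteq A^* \subseteq K$: since $K$ is a clique, $K \subseteq T_{S^*}$, and by (ii) the set $T_{S^*} \setminus K = N_{G'}(S^*) \setminus K$ has size at most $r$. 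Consequently $G[T_{S^*}]$ is the clique $K$ of size $k$ plus at most $r$ extra vertices, and $C^* \subseteq T_{S^*}$.

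The main step is extracting the maximum clique from $G[T_S]$. The key observation is that the complement graph $\overline{G[T_S]}$ has $K \cap T_S$ as an independent set of size at least $|T_S| - r$, so $T_S \setminus K$ is a vertex cover of $\overline{G[T_S]}$ of size at most $r$. Maximum clique in a graph whose complement admits a vertex cover of size $\tau$ is solvable in time $2^{\tau} \cdot \mathrm{poly}(n)$ by the standard FPT recipe: enumerate candidate small vertex covers (for example, by trying all $\le r$-subsets of $T_S$ and checking the vertex-cover condition), and for each independent subset $J$ of the vertex cover in the complement, output $J$ together with all vertices outside the cover that have no complement-edge to $J$. Combining this subroutine with the outer enumeration yields a total running time of $n^{O(r)} = n^{O(1/(1-\delta))}$, matching the claimed bound.

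The main obstacle is the last step: without property (ii), the subproblem on $G[T_S]$ would be a generic max-clique instance on up to $k + r$ vertices, and therefore NP-hard. Property (ii) rescues us by enforcing the ``one giant clique plus a tiny perturbation'' structure on $G[T_{S^*}]$, which is exactly what the FPT algorithm needs. A secondary subtlety is the degenerate case $B^* = \emptyset$ (i.e., $C^* = K$, when the adversary plants $K$ with no possible external extension); this is still handled by the same enumeration, since any $(r+1)$-subset $S \subseteq K$ produces $T_S \supseteq K$ and the subroutine correctly identifies $K$ as the maximum clique of $G[T_S]$.
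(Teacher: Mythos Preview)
Your argument is correct, but it takes a genuinely different route from the paper. The paper's proof is shorter and more direct: it bounds the expected total number of cliques (of all sizes) in $G' \sim G(n,p)$ by $n^{e_\delta}$ with $e_\delta = O(1/(1-\delta))$, uses Markov's inequality to get this bound almost surely, and then observes that the map $C \mapsto C \setminus K$ is an injection from \emph{maximal} cliques of $G$ into cliques of $G'$ (since $C$ is recovered from $C' = C \setminus K$ as $C' \cup (N_G(C') \cap K)$). Hence $G$ has at most $n^{e_\delta + 1}$ maximal cliques, and a standard polynomial-delay maximal-clique enumeration algorithm lists them all and outputs the largest. Your approach instead mirrors the paper's strategy for the $p=1/2$ regime: guess a small seed inside $K$, restrict to its common neighborhood, and finish with the FPT vertex-cover routine on the complement. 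Both yield the $n^{O(1/(1-\delta))}$ bound; the paper's argument is more economical and uses only one probabilistic estimate, while yours is self-contained (no external maximal-clique enumeration result) and reuses machinery already present elsewhere in the paper.

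One small presentational point: your case split ``if $|C^*|\le 4r$ do brute force, otherwise enumerate'' reads as if the algorithm knows $|C^*|$ in advance. Of course you mean to run both branches unconditionally and return the larger clique; it would be cleaner to say so explicitly.
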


\begin{theorem}\label{thres3}
For $p = n^{\delta-1}$ with $0 < \delta < 1$, $0 < \gamma < 1$, and $cn^{1 - \delta}\log n \le k \le \frac{2}{3}n$ (where $c$ is a sufficiently large constant, and the constant $\frac{2}{3}$ was chosen for concreteness -- any other constant smaller than~1 will work as well) the following holds.
There is no polynomial time algorithm that has probability at least $\gamma$ of finding an independent set of size $k$ in $G \sim A\bar{G}(n, p, k)$, unless NP has randomized polynomial time algorithms (NP=RP). (The algorithm is required to succeed against every adversarial planting strategy, and the probability of success is taken over the choice of  $G' \sim G(n, p)$.)
\end{theorem}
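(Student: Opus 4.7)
The plan is to reduce from a standard NP-hard graph problem (for instance, finding a $k$-independent set in a worst-case graph $H$ on $m = n^{\Theta(1)}$ vertices) to the task of finding an independent set of size $k$ in $G \sim A\bar{G}(n,p,k)$. The first step is a structural lemma showing that any IS of size $k$ in $G$ must essentially coincide with the planted set $K$: since $G[V \setminus K]$ is distributed as $G(n-k,p)$ with maximum IS of size $O(n^{1-\delta}\log n) = o(k)$, any IS of size $k$ can contribute at most $o(k)$ vertices in $V \setminus K$; moreover, the random bipartite edges between $K$ and $V \setminus K$ rule out swapping more than $o(k)$ vertices while preserving independence (a routine union bound using $kp \ge c\log n$). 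Hence any polynomial-time algorithm outputting some IS of size $k$ effectively recovers $K$ up to a negligible number of vertices.

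Given this reduction of the algorithmic task to recovering $K$, the plan is a randomized polynomial-time reduction: on input $H$, sample $G' \sim G(n,p)$, designate a sub-universe $V_H \subseteq V(G')$ of $m$ vertices to host $H$, and let the adversary strategy $\sigma$ pick $K$ so that $K \cap V_H$ encodes a large IS in $H$, with $K \setminus V_H$ providing enough padding from $V \setminus V_H$ to reach size $k$. The resulting graph $G = G' \setminus E(G'[K])$ is a valid sample from $A\bar{G}(n,p,k)$ under $\sigma$; running a hypothetical algorithm $A$ on $G$ outputs an IS of size $k$ which, by the structural lemma, is essentially $K$, and restricting the output to $V_H$ then exposes a large IS of $H$, solving the source instance.

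The main obstacle is implementing this encoding in polynomial time without the reducer itself solving the NP-hard instance. The idea is to use the randomness of $G'$ on $V_H$ as a combinatorial ``gadget'' so that the map $(H,G') \mapsto K$ is polynomial-time computable while inverting it from $G$ alone is as hard as finding a large IS in $H$; the random edges of $G'$ between $V_H$ and $V \setminus V_H$ serve to camouflage the worst-case structure of $H$ inside the $A\bar{G}$ distribution. One must also verify that with high probability over $G'$ no spurious IS of size $k$ exists that lies largely outside $V_H$ and would let $A$ bypass the hard instance; this is again a union-bound / concentration argument exploiting $k \gg n^{1-\delta}\log n$. Once these technical points are established, any polynomial-time algorithm succeeding with probability $\gamma$ against every adversary yields a randomized polynomial-time algorithm for worst-case $k$-IS, and hence NP $=$ RP.
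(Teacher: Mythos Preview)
Your proposal correctly identifies the central obstacle but does not resolve it. You note that ``implementing this encoding in polynomial time without the reducer itself solving the NP-hard instance'' is the crux, and then offer only a vague gesture toward ``using the randomness of $G'$ on $V_H$ as a combinatorial gadget.'' This is not a proof sketch; it is a restatement of the difficulty. Concretely: in the $A\bar G(n,p,k)$ model the adversary only plants an independent set in $G'\sim G(n,p)$ --- it does \emph{not} get to plant the worst-case graph $H$. If your reducer overwrites $G'[V_H]$ by $H$ and then hands the result to ALG, the input is no longer distributed as $A\bar G(n,p,k)$ under any adversary strategy, so ALG's guarantee does not apply. If instead your reducer leaves $G'$ untouched and merely \emph{defines} an adversary strategy $\sigma$ that places $K$ so as to encode a large IS of $H$, then the reducer cannot construct the instance $G$ to feed to ALG without computing $K$, i.e.\ without already solving IS in $H$. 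Either way the reduction is not polynomial-time.

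The paper closes exactly this gap with a statistical-indistinguishability argument that your outline lacks. The reducer \emph{does} plant $H$ (on a random $m$-subset) and then plants a random IS of size $k-k'$ among non-neighbours of $H$, producing a distribution $G_H(n,p,k)$ that the reducer can sample in polynomial time without knowing any IS of $H$. Separately, one defines a legitimate adversary strategy $A_H$: in $G'\sim G(n,p)$, locate a uniformly random induced copy of $H$ and plant $K$ as an IS of size $k'$ inside that copy together with $k-k'$ random non-neighbours. The heart of the proof is that $G_H(n,p,k)$ and $A_HG(n,p,k)$ are statistically close; this is established by a second-moment computation showing that $G'\sim G(n,p)$ typically contains many induced copies of $H$ (here the choice of $H$ as a \emph{balanced} graph of average degree just above $2$ on $n^{\rho}$ vertices, with $\rho$ small, is essential), so that adding one more copy is nearly invisible. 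ALG is promised to succeed on $A_HG(n,p,k)$; by indistinguishability it succeeds on $G_H(n,p,k)$; and any size-$k$ IS there must intersect the planted $H$ in at least $k'$ vertices, revealing an IS of $H$. Your structural lemma about recovering $K$ up to $o(k)$ vertices is in the right spirit for this last step, but the decisive idea you are missing is the coupling/indistinguishability between ``plant $H$ then an IS'' and ``find $H$ in $G'$ then plant an IS.''
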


\subsection{Related work}

Some related work was already mentioned in \cref{sec:random}. 

Our algorithm for \cref{thres1} is based on an adaptation of the algorithm of~\cite{FK00} that applied to the random planted clique setting. In turn, that algorithm is based on the theta function of Lovasz~\cite{Lovasz}.

A work that is closely related to ours and served as an inspiration both to the model that we study, and to the techniques that are used in the proof of the NP-hardness result (\cref{thres3}) is the work of David and Feige~\cite{DF16} on adversarially planted 3-colorings. That work uncovers a phenomenon similar to the one displayed in the current work. Specifically, for the problem of 3-coloring (rather than clique or independent set) it shows that for certain values of $p$, algorithms that work in the random planted setting can be extended to the adversarial planted setting, and for other values of $p$, finding a 3-coloring in the adversarial planted setting becomes NP-hard. However, there are large gaps left open in the picture that emerges from the work of~\cite{DF16}. For large ranges of the values of $p$, specifically, $n^{-1/2} < p < n^{-1/3}$ and $p < n^{-2/3}$, there are neither algorithmic results nor hardness results in the work of~\cite{DF16}. Unfortunately, the most interesting values of $p$ for the 3-coloring problem, which are $p \le \frac{c\log n}{n}$, lie within these gaps, and hence the results of~\cite{DF16} do not apply to them. Our work addresses a different problem (planted clique instead of planted 3-coloring), and for our problem, our analysis leaves almost no such gaps. We are able to determine for which values of $p$ the problem is polynomial time solvable, and for which values it is NP-hard. See \cref{sec:discussion} for more details.

Our model is an example of a {\em semi-random} model, in which part of the input is determined at random and part is determined by an adversary. There are many other semi-random models, both for the clique problem and for other problems. Describing all these models is beyond the scope of this paper, and the interested reader is referred to~\cite{Fei20} and references therein for additional information. 


\section{Overview of the proofs}

In this section we provide an overview of the proofs for our three main theorems. Further details, as well as extensions to the results, appear in the appendix.

The term {\em almost surely} denotes a probability that tends to~1 as $n$ grows. The term {\em extremely high probability} denotes a probability of the form $1 - e^{-n^r}$ for some $r > 0$. 
By $\exp(x)$ for some expression $x$ we mean $e^x$.

\subsection{Finding cliques using the theta function}
\label{sec:algorithm}

In this section we provide an overview of the proof of \cref{thres1}. Our algorithm is an adaptation of the algorithm of~\cite{FK00} that finds the maximum clique in the random planted model. We shall first review that algorithm, then describe why it does not apply in our setting in which an adversary plants the clique, and finally explain how we modify that algorithm and its analysis so as to apply it in the adversarial planted setting. 

The key ingredient in the algorithm of~\cite{FK00} is the theta function of Lovasz, denoted by $\vartheta$. Given a graph $G$, $\vartheta(G)$ can be computed in polynomial time (up to arbitrary precision, using semidefinite programming (SDP)), and satisfies $\vartheta(G) \ge \alpha(G)$. As we are interested here in cliques and not in independent sets, we shall consider $\bar{G}$, the edge complement of $G$, and then $\vartheta(\bar{G}) \ge \omega(G)$. The theta function has several equivalent definitions, and the one that we shall use here (referred to as $\vartheta_4$ in~\cite{Lovasz}) is the following.

Given a graph $G = G(V, E)$, a collection of unit vectors $s_i \in \R^n$ (one vector for every vertex $i \in V$) is an {\em orthonormal representation} of $G$, if $s_i$ and $s_j$ are orthogonal ($s_i \cdot s_j = 0$) whenever $(i, j) \in E$. The theta function is the maximum value of the following expression, where maximization is over all orthonormal representations $\{s_i\}$ of $G$ and over all unit vectors $h$ ($h$ is referred to as the {\em handle}):

\begin{equation}\label{eq:thetaPrimal}
       \vartheta(G) = \max_{h,\{s_i\}} \sum_{i\in V} (h \cdot s_i)^2
\end{equation}

The optimal orthonormal representation and the associated handle that maximize the above formulation for $\vartheta$ can be found (up to arbitrary precision) in polynomial time by formulating the problem as an SDP (details omitted). Observe that for any independent set $S$ the following is a feasible solution for the SDP: choose $s_i = h$ for all $i \in S$, and choose all remaining vectors $s_j$ for $j \not\in S$ to be orthogonal to $h$ and to each other. Consequently, $\vartheta(G) \ge \alpha(G)$, as claimed.

The main content of the algorithm of~\cite{FK00} is summarized in the following theorem. We phrased it in a way that addresses cliques rather than independent sets, implicitly using $\alpha(\bar{G}) = \omega(G)$. We also remind the reader that in the random planted model, the planted clique $K$ is almost surely the unique maximum clique.

\begin{theorem}[Results of~\cite{FK00}]\label{thm:FK00}
Consider $G \sim G(n, \frac{1}{2}, k)$, a graph selected in the random planted clique model, with $k \ge c\sqrt{n}$ for some sufficiently large constant $c$.  Then with extremely high probability (over choice of $G$) it holds that $\vartheta(\bar{G}) = \omega(G)$.

Moreover, for every vertex $i$ that belongs to the planted clique $K$, the corresponding vector $s_i$ has inner product larger than $1 - \frac{1}{n}$ with the handle $h$, and for every other vertex, the corresponding inner product is at most $\frac{1}{n}$.
\end{theorem}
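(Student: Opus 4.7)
The lower bound $\vartheta(\bar G)\ge k$ is immediate from \eqref{eq:thetaPrimal}: because $K$ is an independent set in $\bar G$, one may take any unit $h$, set $s_i=h$ for $i\in K$, and complete the remaining $s_j$ to an orthonormal representation, giving $\sum_i(h\cdot s_i)^2\ge k$. Since almost surely $\omega(G)=k$, it suffices to prove the matching upper bound $\vartheta(\bar G)\le k$. For this I would work with the standard SDP dual of \eqref{eq:thetaPrimal} for $\bar G$, which can be written as $\vartheta(\bar G)=\min_Y\lambda_{\max}(J+Y)$, where $J$ is the all-ones matrix and $Y$ ranges over symmetric matrices with $Y_{ii}=0$ and $Y_{ij}=0$ on every edge $(i,j)\in E(G)$ (so $Y$ is supported on the non-edges of $G$).

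The key step is to exhibit an explicit $Y$ for which $x_K$, the $\{0,1\}$-indicator of $K$, is an exact eigenvector of $J+Y$ with eigenvalue $k$ and the rest of the spectrum has norm $O(\sqrt n)$. The eigenvector condition is $Yx_K=-kx_{K^c}$: inside $K\times K$ this is automatic, since $K$ is a clique of $G$ and thus $Y[K,K]=0$. For each $i\in K^c$, let $\bar d_K(i):=k-|N_G(i)\cap K|$ be the number of non-neighbors of $i$ inside $K$; standard binomial concentration gives $\bar d_K(i)=k/2\pm O(\sqrt{k\log n})>0$ for all $i\in K^c$ with extremely high probability, so one may set $Y_{ij}=-k/\bar d_K(i)$ on every non-edge $(i,j)$ with $i\in K^c,\,j\in K$ (symmetrised), which yields $(Yx_K)_i=-k$ on $K^c$ as required. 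On the remaining non-edges (both endpoints in $K^c$), $Y_{ij}$ is chosen to equal $-2$ plus a small correction designed to cancel a low-rank perturbation coming from the fluctuations of $\bar d_K(i)$ around $k/2$.

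Writing $M_2:=J+Y-x_Kx_K^{\top}$, the construction ensures $M_2x_K=0$ and $M_2[K,K]=0$, while a direct block computation shows the remaining blocks of $M_2$ equal, up to a bounded additive correction, $2\bigl(A_{G'}-\tfrac12(J-I)\bigr)$ restricted to the appropriate sub-matrix. Standard Wigner-type spectral concentration gives $\bigl\|A_{G'}-\tfrac12(J-I)\bigr\|_{\mathrm{op}}=O(\sqrt n)$ with extremely high probability, hence $\|M_2\|_{\mathrm{op}}=O(\sqrt n)$. Because $M_2$ is symmetric and $M_2x_K=0$, its nonzero spectrum lies in $x_K^{\perp}$, so $\lambda_{\max}(J+Y)=\lambda_{\max}(x_Kx_K^{\top}+M_2)=\max\{k,\,O(\sqrt n)\}=k$ whenever $k\ge c\sqrt n$ with $c$ sufficiently large. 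Combined with the lower bound this proves $\vartheta(\bar G)=\omega(G)=k$.

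For the \emph{moreover} statement I would invoke SDP complementary slackness. The dual slack $M^{*}:=kI-(J+Y)\succeq 0$ has kernel exactly $\mathrm{span}(x_K)$ and spectral gap $\Omega(k)$, because all other eigenvalues of $J+Y$ are bounded by $O(\sqrt n)\le k/2$. Any optimal primal $X^{*}$ must satisfy $X^{*}M^{*}=0$, so $\mathrm{range}(X^{*})\subseteq\mathrm{span}(x_K)$, and together with the trace constraint this forces $X^{*}=\tfrac1kx_Kx_K^{\top}$. Translated through the vector/handle formulation this gives $h\cdot s_i=1$ for $i\in K$ and $h\cdot s_i=0$ for $i\notin K$ at the exact optimum; the stated $1-1/n$ and $1/n$ tolerances then follow from a Davis--Kahan-type eigenvector-stability estimate exploiting the $\Omega(k)$ spectral gap, together with the $1/\mathrm{poly}(n)$ precision to which the SDP is solved in polynomial time. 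The main obstacle is quantitative sharpness: a naive version of the construction gives $\|M_2\|=O(\sqrt{n\log n})$, and shaving the extra $\sqrt{\log n}$ factor (necessary to handle $k\ge c\sqrt n$ with a constant $c$) is what forces the finer tuning of the $K^c\times K^c$ non-edge entries of $Y$ described above; without it the same argument only yields $\vartheta(\bar G)=k$ for $k\ge c\sqrt{n\log n}$.
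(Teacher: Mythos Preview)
Your approach is essentially the same as the one the paper sketches (this theorem is quoted from~\cite{FK00}, and the paper only reviews the proof). The dual formulation $\min_Y\lambda_{\max}(J+Y)$ is exactly the paper's formulation~\eqref{eq:thetaDual} with $M=J+Y$; your choice of $Y$ on the $K\times K^c$ non-edges unwinds to $M_{ij}=-d_{j,K}/(k-d_{j,K})$, matching the paper's construction (modulo an index typo in the paper's sketch); and the spectral step---show $x_K$ is an eigenvector with eigenvalue $k$ and that the rest of the spectrum is $O(\sqrt n)$---is the same, though the paper bounds $\lambda_2(M)$ via a $U+V+W$ decomposition and Weyl's inequality rather than by subtracting $x_Kx_K^\top$ and bounding the remainder directly. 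Your treatment of the ``moreover'' clause via complementary slackness and the spectral gap is the standard route and is fine.

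One correction: your claim that the $K^c\times K^c$ non-edge entries must be tuned away from $-2$ to shave a $\sqrt{\log n}$ factor is wrong. The original FK00 construction (and the paper's sketch) simply sets $M_{ij}=-1$ on those entries, and this already yields $\lambda_2(M)=O(\sqrt n)$ with no logarithmic loss. In the paper's $U+V+W$ decomposition, the $K^c\times K^c$ block sits inside $U$, a centered $\pm 1$ Wigner-type matrix with $\|U\|=O(\sqrt n)$; the only delicate piece is $W$, the rank-correction on the $K\times K^c$ block, and there the bound $\|W\|^2\le\operatorname{tr}(W^2)=O(n)$ comes from concentration of $\sum_i(d_{i,K}-k/2)^2$ around its mean $(n-k)k/4$ (this is the content of \cref{proba} in the paper, specialized to the random-planting case). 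So no extra tuning of $Y$ on $K^c\times K^c$ is needed, and the ``main obstacle'' you identify is not actually present.
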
  

Given \cref{thm:FK00}, the following algorithm finds the planted clique when $G \sim  G(n, \frac{1}{2}, k)$, and $k \ge c\sqrt{n}$ for some sufficiently large constant $c$. Solve the optimization problem (\ref{eq:thetaPrimal}) (on $\bar{G}$) to sufficiently high precision, and output all vertices whose corresponding inner product with $h$ is at least $\frac{1}{2}$.

The algorithm above does not apply to $G \sim AG(n, \frac{1}{2}, k)$, a graph selected in the adversarial planted clique model, for the simple reason that \cref{thm:FK00} is incorrect in that model. The following example illustrates what might go wrong, 

\begin{example}\label{ex:theta}
Consider a graph $G' \sim G(n, \frac{1}{2})$. In $G'$ first select a random vertex set $T$ of size slightly smaller than $\frac{1}{2}\log n$. Observe that the number of vertices in $G'$ that are in the common neighborhood of all vertices of $T$ is roughly $2^{-|T|}n > \sqrt{n}$. Plant a clique $K$ of size $k$ in the common neighborhood of $T$. In this construction, $K$ is no longer the largest clique in $G$. This is because $T$ (being a random graph) is expected to have a clique $K'$ of size $2\log |T| \simeq 2\log\log n$, and $K' \cup K$ forms a clique of size roughly $k + 2\log\log n$ in $G$. Moreover, as $T$ itself is a random graph with edge probability $\frac{1}{2}$, the value of the theta function on $T$ is roughly $\sqrt{|T|}$ (see~\cite{Juhasz}), and consequently one would expect the value of $\vartheta(\bar{G})$ to be roughly $k + \sqrt{\log n}$. 
\end{example}

Summarizing, it is not difficult to come up with strategies for planting cliques of size $k$ that result in the maximum clique having size strictly larger than $k$, and the value of $\vartheta(\bar{G})$ being even larger. Consequently, the solution of the optimization problem~(\ref{eq:thetaPrimal}) by itself is not expected to correspond to the maximum clique in $G$.

We now explain how we overcome the above difficulty. A relatively simple, yet important, observation is the following.

\begin{proposition}
\label{pro:maxIS}
Let $G \sim AG(n, p, k)$ with $p = 1/2$ and $k > \sqrt{n}$, and let $K'$ be the maximum clique in $G$ (which may differ from the planted clique $K$).
	Then with extremely high probability over the choice of $G' \sim G(n, \frac{1}{2})$, for every possible choice of $k$ vertices by the adversary, $K'$ contains at least $k - O(\log n)$ vertices from $K$, and at most $O(\log n)$ additional vertices.
\end{proposition}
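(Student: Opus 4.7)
The plan is to reduce this adversarial statement to a purely probabilistic event about $G'$ alone, via a structural decomposition of $K'$. Write $A = K' \cap K$ and $B = K' \setminus K$, so that $|A| + |B| = |K'| \geq k$, where the last inequality holds because $K$ itself is a clique of size $k$ in $G$. The key observation is that the adversary only inserts edges \emph{inside} $K$, so every edge of $G$ that touches $B \subseteq V \setminus K$ is already present in $G'$. In particular $B$ is a clique in $G'$ and $A \subseteq N_{G'}(B)$, the common neighborhood of $B$ in $G'$.

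I would then isolate the following event $\mathcal{E}$, which depends only on $G'$: for every $t \geq C\log n$ (with $C$ a sufficiently large absolute constant) and every set $B$ of $t$ vertices, if $B$ is a clique in $G'$ then $|N_{G'}(B)| < k - t$. If $\mathcal{E}$ holds, the decomposition above forces $|B| < C\log n$ (otherwise $|A| \leq |N_{G'}(B)| < k - |B|$ contradicts $|A| + |B| \geq k$), and consequently $|K' \setminus K| = O(\log n)$ and $|K' \cap K| \geq |K'| - |B| \geq k - O(\log n)$ simultaneously. Since $\mathcal{E}$ makes no reference to $K$, this works uniformly across every adversarial strategy.

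It then remains to verify that $\mathcal{E}$ holds with extremely high probability. For disjoint sets $B$ of size $t$ and $A'$ of size $k-t$, the events ``$B$ is a clique in $G'$'' and ``$A' \subseteq N_{G'}(B)$'' depend on disjoint edge sets, so their joint probability is exactly $2^{-\binom{t}{2} - t(k-t)}$, and a union bound gives
\[
\Pr[\,\overline{\mathcal{E}}\,] \;\leq\; \sum_{t \geq C\log n} \binom{n}{t}\binom{n}{k-t}\, 2^{-\binom{t}{2} - t(k-t)}.
\]
I would split the sum at $t = k/2$. For $C\log n \leq t \leq k/2$, the exponent $t(k-t) \geq tk/2 \geq (C/2) k \log n$ dominates the combinatorial entropy $\log_2\!\bigl(\binom{n}{t}\binom{n}{k-t}\bigr) \leq k\log n$, so taking $C$ large (say $C = 4$) makes each term at most $n^{-k}$. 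For $t > k/2 \geq \sqrt{n}/2$, the factor $2^{-\binom{t}{2}}$ already has exponent $\Omega(k^2) \geq \Omega(n)$, which dwarfs the $n^k$ entropy once $k \geq \sqrt{n}$. Combining both ranges, the total is $\exp(-\Omega(\sqrt{n}\log n))$, which meets the paper's definition of extremely high probability.

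The main conceptual step is the structural observation in the first paragraph, that $B$ is a clique in $G'$ and not merely in $G$; once the adversary has been abstracted away via the event $\mathcal{E}$, what remains is a routine union bound. The only subtlety I anticipate is quantitative: the hypothesis $k > \sqrt{n}$ is used essentially to make the $2^{-t(k-t)}$ suppression dominate the entropy $k\log n$ of choosing $A'$, which is what upgrades ``with high probability'' to extremely high probability.
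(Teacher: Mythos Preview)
Your proof is correct and follows essentially the same approach as the paper: both hinge on the structural observation that edges incident to $B = K' \setminus K$ already lie in $G'$, and both finish via a union bound over candidate sets. The paper splits the argument into two steps (first $\omega(G') \le k/2$ gives $|K' \cap K| \ge k/2$, then a complete-bipartite containment bound $K_{2\log n,\,\Omega(\sqrt{n})} \not\subseteq G'$ gives $|B| \le 2\log n$), whereas you fold both into the single event $\mathcal{E}$ --- a minor streamlining rather than a genuinely different route.
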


\begin{proof}
Standard probabilistic arguments show that with extremely high probability, the largest clique in $G'$ (prior to planting a clique of size $k$) is of size at most $\frac{k}{2}$. When this holds, $K'$ contains at least $\frac{k}{2}$ vertices from $K$. Each of the remaining vertices of $K'$ needs to be connected to all vertices in $K' \cap K$. Consequently, with extremely high probability, $K'$ contains at most $2\log n$ vertices not from $K$. This is because a $G' \sim G(n, \frac{1}{2})$ graph, with extremely high probability, does not contain two sets of vertices $A$ and $B$, with $|A| = 2\log n$, $|B| = \Omega(\sqrt{n})$, such that all pairs of vertices in $A \times B$ induce edges in $G$.

As $|K'| \ge k$, we conclude that all but $O(\log n)$ vertices of $K$ must be members of $K'$.
\end{proof}

A key theorem that we prove is:

\begin{theorem}\label{thm:ourtheta}
	Let $G \sim AG(n, p, k)$ with $p = 1/2$ and $k = k(n) \geq 10\sqrt{n}$.
	Then $k \leq \vartheta(\bar{G}) \leq k + O(\log n)$ with extremely high probability over the choice of $G' \sim G(n, \frac{1}{2})$, for every possible choice of $k$ vertices by the adversary.
\end{theorem}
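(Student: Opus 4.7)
The lower bound $\vartheta(\bar G) \ge k$ is immediate: the planted $k$-clique $K$ gives $\omega(G) \ge k$, and $\vartheta(\bar G) \ge \omega(G)$ by the sandwich inequality. For the upper bound I plan to use the dual (``minimum'') characterization of the theta function,
\[
\vartheta(\bar G) = \min\bigl\{\lambda_{\max}(A) : A=A^T,\ A_{ii}=1,\ A_{ij}=1 \text{ for all } (i,j)\in E(G)\bigr\},
\]
and to exhibit, for every possible adversarial $K$, an explicit dual witness $A = A(G',K)$ with $\lambda_{\max}(A) \le k + O(\log n)$. The argument must hold simultaneously for every $K$; this will be handled by a union bound over the $\binom{n}{k}$ candidate sets, justified by extremely-high-probability concentration estimates on $G'$.

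The construction of $A$ extends the Feige--Krivelevich certificate used in the random planted setting~\cite{FK00}. The starting point is the $\pm 1$ encoding of $G$: $A_{ii}=1$, $A_{ij}=+1$ on edges and $-1$ on non-edges. This matches the dual constraints and has $A|_{K\times K}=J_K$. Next, to force $\mathbf 1_K$ to be an exact eigenvector of eigenvalue $k$, I exploit the freedom on non-edges between $K$ and $V\setminus K$: for each $j \in V\setminus K$, replace $-1$ at each non-edge $(i,j)$ with $i\in K$ by the common value $c_j := -d_j^K/(k - d_j^K)$, where $d_j^K$ is the degree of $j$ into $K$. By Chernoff and a union bound, w.e.h.p.\ $d_j^K \in (k/3, 2k/3)$ uniformly in $j$ and in the choice of $K$, so $c_j$ is bounded, and $\sum_{i \in K} A_{ij}=0$ for every $j \notin K$. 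Combined with $A|_K=J_K$ this gives $A\mathbf 1_K = k\mathbf 1_K$, hence $\lambda_{\max}(A) = \max\bigl(k,\ \lambda_{\max}(A^\perp)\bigr)$, where $A^\perp := A - \mathbf 1_K \mathbf 1_K^T$ vanishes on $\mathbf 1_K$.

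The main technical obstacle is sharpening the bound on $\lambda_{\max}(A^\perp)$ to $O(\log n)$. Classical random matrix estimates applied to the $\pm 1$ Wigner-type $(V\setminus K)\times(V\setminus K)$ block of $A^\perp$, and to the bipartite $K\times(V\setminus K)$ block, yield only $\|A^\perp\| = O(\sqrt n)$, which gives the weaker bound $\vartheta(\bar G) \le k + O(\sqrt n)$. The improvement to $O(\log n)$ -- the heart of the proof -- will exploit the remaining freedom at non-edges inside $V\setminus K$ to perform further corrections that, uniformly in $K$, drive the residual spectrum down to order $\log n$. The delicate point is to perform these corrections while (i) preserving the eigenvector $\mathbf 1_K$ with eigenvalue $k$, and (ii) controlling, via union bound over $K$, the deterioration that they induce on the $K \times (V\setminus K)$ block. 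I anticipate that this refined analysis will lean heavily on the hypothesis $k \ge 10\sqrt n$, which ensures that the Wigner-bulk contribution from the random part of $G'$ is subdominant to the planted clique.
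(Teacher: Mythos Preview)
Your proposal contains a genuine gap at a decisive point, and a related misconception about where the $O(\log n)$ term comes from.

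\medskip
\textbf{The gap.} You assert that ``by Chernoff and a union bound, w.e.h.p.\ $d_j^K \in (k/3,2k/3)$ uniformly in $j$ and in the choice of $K$''. This is false in the adversarial model. For a \emph{fixed} $K$ the Chernoff tail is $\exp(-\Theta(k))$, but the union bound over $\binom{n}{k}\ge\exp(\Theta(k\log n))$ adversarial choices of $K$ swamps it. And this is not just a weakness of the union bound: Example~\ref{ex:theta} shows the adversary can first pick a small set $T$ of roughly $\tfrac12\log n$ vertices and then plant $K$ entirely inside their common neighbourhood, so every vertex of $T$ has $d_j^K = k$. For those $j$ your correction $c_j = -d_j^K/(k-d_j^K)$ is undefined, and your certificate $A$ cannot even be formed. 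This is precisely the obstruction that separates the adversarial from the random planted setting.

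\medskip
\textbf{The misconception.} You write that the ``heart of the proof'' is to drive $\lambda_{\max}(A^\perp)$ down to $O(\log n)$ by further corrections on non-edges inside $V\setminus K$. In the paper's argument the residual spectrum is \emph{not} pushed to $O(\log n)$; it is only shown to be at most roughly $\tfrac{4}{1-p}\sqrt{np}=O(\sqrt n)$, which is below $k$ by the hypothesis $k\ge 10\sqrt n$. The $O(\log n)$ additive term has a completely different origin: before constructing the dual matrix, one \emph{enlarges} $K$ to the set $Q$ of all vertices with at least $\tfrac{1+p}{2}k$ neighbours in $K$. A counting argument that \emph{does} survive the union bound over all $K$ (Corollary~\ref{Gcloseneigh}) gives $|Q|\le k+O(\log n)$. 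One then builds the Feige--Krivelevich style matrix $M$ relative to $Q$, not $K$; by construction every $j\notin Q$ has $d_j^Q\le\tfrac{1+p}{2}k$, so the row-sum corrections are uniformly bounded. One shows $\lambda_1(M)=|Q|$ (with eigenvector $\mathbf 1_Q$) and $\lambda_2(M)=O(\sqrt n)<|Q|$, whence $\vartheta(\bar G)\le\vartheta(\bar G_Q)\le|Q|\le k+O(\log n)$.

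In short: the missing idea is the passage from $K$ to $Q$. Without it your $c_j$'s are unbounded and the certificate collapses; with it, the spectral analysis only needs to beat $k$, not $\log n$.
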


We now explain how \cref{thm:ourtheta} is proved. The bound $\vartheta(\bar{G}) \ge k$ was already explained above. Hence it remains to show that $\vartheta(\bar{G}) \leq k + O(\log n)$. In general, to bound $\vartheta(G)$ from above for a graph $G(V, E)$, one considers the following dual formulation of $\vartheta$, as a minimization problem.  
\begin{equation}\label{eq:thetaDual}
       \vartheta(G) = \min_M[\lambda_1(M)]
\end{equation}
Here $M$ ranges over all $n$ by $n$ symmetric matrices in which $M_{ij} = 1$ whenever $(i, j) \not\in E$, and $\lambda_1(M)$ denotes the largest eigenvalue of $M$. (Observe that if $G$ has an independent set $S$ of size $k$, then $M$ contains a $k$ by $k$ block of~1 entries. A Rayleigh quotient argument then implies that $\lambda_1(M) \ge k$, thus verifying the inequality $\vartheta(G) \ge \alpha(G)$.) To prove \cref{thm:ourtheta} we exhibit a matrix $M$ as above (for the graph $\bar{G}$) for which we prove that $\lambda_1(M) \le k + O(\log n)$.

We first review how a matrix $M$ was chosen by~\cite{FK00} in the proof of \cref{thm:FK00}. First, recall that we consider $\bar{G}$, and let $E$ be the set of edges of $\bar{G}$ (non-edges of $G$).
We need to associate values with the entries $M_{ij}$ for $(i, j)\in E$ (as other entries are~1). The matrix block corresponding to the planted clique $K$ (planted independent set in $\bar{G}$) is all~1 (by necessity). For every $(i, j)\in E$ where both vertices are not in $K$ one sets $M_{ij} = -1$. For every other pair $(i, j)\in E$ (say, $i \in K$ and $j \not\in K$) one sets $M_{i,j} = -\frac{k - d_{i,K}}{d_{i,K}}$, where $d_{i,K}$ is the number of neighbors that vertex $i$ has in the set $K$. In order to show that $\lambda_1(M) = k$, one first observes that the vector $x_K$ (with value~1 at entries that correspond to vertices of $K$, and value~0 elsewhere) is an eigenvector of $M$ with eigenvalue $k$. Then one proves that $\lambda_2(M)$, the second largest eigenvalue of $M$, has value smaller than $k$. This is done by decomposing $M$ into a sum of several matrices, bounding the second largest eigenvalue for one of these matrices, and the largest eigenvalue for the other matrices. By Weyl's inequality, the sum of these eigenvalues is an upper bound on $\lambda_2(M)$. This upper bound is not tight, but it does show that $\lambda_2(M) < k$. It follows that the eigenvalue $k$ associated with $x_K$ is indeed $\lambda_1(M)$. Further details are omitted.

We now explain how to choose a matrix $M$ so as to prove the bound $\vartheta(\bar{G}) \leq k + O(\log n)$ in \cref{thm:ourtheta}. Recall (see \cref{ex:theta}) that we might be in a situation in which $\vartheta(\bar{G}) > \alpha(\bar{G}) > k$ (with all inequalities being strict). In this case, let $K'$ denote the largest independent set in $\bar{G}$, and note that $K'$ is larger than $K$. In $M$, the matrix block corresponding to $K'$ is all~1. One may attempt to complete the construction of $M$ as described above for the random planting case, but replacing $K$ by $K'$ everywhere in that construction. If one does so, the vector $x_{K'}$ (with value~1 at entries that correspond to vertices of $K'$, and value~0 elsewhere) is an eigenvector of $M$ with eigenvalue $\alpha(\bar{G}) > k$. However, $M$ would necessarily have another eigenvector with a larger eigenvalue, because $\vartheta(\bar{G}) > \alpha(\bar{G})$. Hence we are still left with the problem of bounding $\lambda_1(M)$, rather than bounding $\lambda_2(M)$. Having failed to identify an eigenvector for $\lambda_1(M)$, we may still obtain an upper bound on $\lambda_1(M)$ by using approaches based on  Weyl's inequality (or other approaches). However, these upper bounds are not tight, and it seems difficult to limit the error that they introduce to be as small as $O(\log n)$, which is needed for proving the inequality $\lambda_1(M) \le k + O(\log n)$. 


For the above reason, we choose $M$ differently. For some constant $\frac{1}{2} < \rho < 1$, we extend the clique $K$ to a possibly larger clique $Q$, by adding to it every vertex that has  $\rho k$ neighbors in $K$. (In \cref{ex:theta}, the corresponding clique $Q$ will include all vertices of $K \cup T$. In contrast, if $K$ is planted at random and not adversarially, then we will simply have $Q = K$.) Importantly, we prove (see \cref{Gcloseneigh}) that if $G' \sim G(n, \frac{1}{2})$, then with high probability $|Q| < k + O(\log n)$ (for every possible choice of planting a clique of size $k$ by the adversary). 
For the resulting graph $G_Q$, we choose the corresponding matrix $M$ in the same way as it was chosen for the random planting case. 
Now we do manage to show that the eigenvector $x_Q$ (with eigenvalue $|Q|$) associated with this $M$ indeed has the largest eigenvalue. This part is highly technical, and significantly more difficult than the corresponding proof for the random planting case. 
The reason for the added level of difficulty is that, unlike the random planting case in which we are dealing with only one random graph, here the adversary can plant the clique in any one of ${n \choose k}$ locations, and our analysis needs to hold simultaneously for all ${n \choose k}$ graphs that may result from such plantings.  Further details can be found in \cref{sec:bounding}. 

Having established that $\vartheta(\bar{G}_Q) = |Q| \le k + O(\log n)$, we use monotonicity of the theta function to conclude that $\vartheta(\bar{G}) \le k + O(\log n)$. This concludes our overview for the proof of \cref{thm:ourtheta}.
 
Given \cref{thm:ourtheta}, let us now explain our algorithm for finding a maximum clique in $G \sim AG(n, \frac{1}{2}, k)$. 

Given a graph $G \sim AG(n, \frac{1}{2}, k)$, the first step in our algorithm is to solve the optimization problem~(\ref{eq:thetaPrimal}) on the complement graph $\bar{G}$. By \cref{thm:ourtheta}, we will have $\vartheta(\bar{G}) \le k + c\log n$ for some constant $c > 0$. Let $\{s_i\}$ denote the orthonormal representation found by our solution, and let $h$ be the corresponding handle.

The second step of our algorithm it to extract from $G$ a set of vertices that we shall refer to as $H$, that contains all those vertices $i$ for which $(h \cdot s_i)^2 \ge \frac{3}{4}$. 

\begin{lemma}
\label{lem:setH}
For $H$ as defined above, with extremely high probability, at least $k - O(\log n)$ vertices of $K$ are in $H$, and most $O(\log n)$ vertices not from $K$ are in $H$. 
\end{lemma}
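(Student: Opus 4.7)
The plan is to handle the two assertions of the lemma separately.

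For $|H \setminus K| \le O(\log n)$, I would exploit the orthonormal-representation constraint directly. Writing $s_i = a_i h + \sqrt{1-a_i^2}\, e_i$ with $e_i \perp h$ a unit vector, an edge $(i,j) \in E(\bar{G})$ forces $s_i \cdot s_j = 0$, which on expanding and applying $|e_i \cdot e_j| \le 1$ yields $a_i^2 + a_j^2 \le 1$. Hence any two distinct vertices of $H$ are adjacent in $G$, i.e., $H$ is a clique in $G$. Consequently $H \setminus K$ is a clique in the random graph $G'$ (it contains no planted edges), and since $\omega(G') \le 2 \log n$ with extremely high probability, $|H \setminus K| \le 2 \log n$.

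For $|H \cap K| \ge k - O(\log n)$ I would use approximate complementary slackness with the dual matrix $M$ constructed in the proof of \cref{thm:ourtheta}. Passing to the standard equivalent $n \times n$ primal formulation ($\max \mathbf{1}^T X \mathbf{1}$ subject to $X \succeq 0$, $\tr(X) = 1$, and $X_{ij} = 0$ for $(i,j) \in E(\bar{G})$), the optimum $X^*$ has value $V := \vartheta(\bar{G})$ and, via the standard correspondence, satisfies $a_i^2 = V X^*_{ii}$. The matrix $M$ is dual-feasible for $\bar{G}$ with $\lambda_1(M) = |Q|$ and top eigenvector $\hat{x}_Q := x_Q/\sqrt{|Q|}$, and the spectral analysis underlying \cref{thm:ourtheta} delivers $\lambda_2(M) = O(\sqrt{n})$, so the spectral gap $g := \lambda_1(M) - \lambda_2(M)$ is $\Omega(k)$ once $k \ge 10 \sqrt{n}$. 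Combining $V \ge k$, $\lambda_1(M) = |Q| \le k + O(\log n)$, and weak duality,
\[ O(\log n) \;\ge\; \lambda_1(M) - V \;=\; \langle \lambda_1(M)\, I - M,\; X^* \rangle \;=\; \sum_{\ell \ge 2} (\lambda_1(M) - \lambda_\ell)\, v_\ell^T X^* v_\ell, \]
where $\{v_\ell\}$ is an orthonormal eigenbasis of $M$ with $v_1 = \hat{x}_Q$ and each summand is nonnegative. Since every factor $\lambda_1(M) - \lambda_\ell$ is at least $g$ for $\ell \ge 2$, this gives $\hat{x}_Q^T X^* \hat{x}_Q \ge 1 - O(\log n / k)$.

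To pass from mass concentration to a pointwise bound, I would use that the principal block $X^*|_Q$ is PSD, so $\mathbf{1}_Q^T X^*|_Q \mathbf{1}_Q \le |Q|\, \lambda_1(X^*|_Q) \le |Q| \tr(X^*|_Q)$, yielding $\tr(X^*|_Q) \ge \hat{x}_Q^T X^* \hat{x}_Q \ge 1 - O(\log n / k)$. Then $\sum_{i \notin Q} a_i^2 = V\bigl(1 - \tr(X^*|_Q)\bigr) \le O(\log n)$, and combining with $|Q \setminus K| = O(\log n)$ (from \cref{Gcloseneigh}) gives $\sum_{i \notin K} a_i^2 \le O(\log n)$. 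Subtracting from $V \ge k$ yields $\sum_{i \in K}(1 - a_i^2) \le O(\log n)$, and since every vertex in $K \setminus H$ contributes more than $1/4$ to this last sum, $|K \setminus H| \le O(\log n)$. The main obstacle is verifying the spectral-gap claim $\lambda_2(M) = O(\sqrt{n})$ uniformly over all ${n \choose k}$ possible adversarial plantings; this is precisely the role of the Weyl-type decomposition of $M$ in the proof of \cref{thm:ourtheta}, and its robustness against the worst-case planting is the main technical burden inherited by the argument above.
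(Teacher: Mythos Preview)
Your proof is correct, but the route for bounding $|K\setminus H|$ is genuinely different from the paper's.

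For $|H\setminus K|$ the two arguments rest on the same orthogonality observation. You package it a bit more cleanly: from $a_i^2+a_j^2\le 1$ for non-adjacent pairs you conclude that $H$ itself is a clique in $G$, hence $H\setminus K$ is a clique in $G'\sim G(n,\tfrac12)$ and has size at most $2\log n$. The paper first bounds $|K\setminus H|$ and then argues that every vertex of $H\setminus K$ is adjacent in $G'$ to all of $K\cap H$ (a set of size $\Omega(\sqrt n)$), invoking a bipartite-neighborhood bound. Your version has the minor advantage of being independent of the other half of the lemma.

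For $|K\setminus H|$ the approaches diverge. The paper uses a \emph{deletion} argument: with $T=K\setminus H$, removing $T$ gives $\vartheta(\overline{G\setminus T})\ge \vartheta(\bar G)-\tfrac34|T|\ge k-\tfrac34|T|$, while \cref{Gaddtheta}(i) (which holds uniformly over all $T\subset K$ with the stated $\exp(-2k\log n)$ failure probability) gives $\vartheta(\overline{G\setminus T})\le k-|T|+O(\log n)$; comparing forces $|T|=O(\log n)$. This uses \cref{thm:ourtheta} purely as a black box on the \emph{value} of $\vartheta$. Your argument instead opens the dual certificate $M$ and runs an approximate complementary-slackness computation against the specific primal optimum $X^*_{ij}=V^{-1}(h\cdot s_i)(h\cdot s_j)(s_i\cdot s_j)$, which indeed satisfies $X^*_{ii}=a_i^2/V$; the identity $\lambda_1(M)-V=\sum_{\ell\ge 2}(\lambda_1-\lambda_\ell)\,v_\ell^\top X^* v_\ell$ follows from $\langle M,X^*\rangle=\mathbf 1^\top X^*\mathbf 1$ using only feasibility. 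The spectral gap $\lambda_1(M)-\lambda_2(M)=\Omega(k)$ you need is precisely what \cref{eigenclaim} supplies, uniformly over all plantings, so the burden you flag is already discharged there.

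What each buys: the paper's deletion argument is shorter, uses only the value of $\vartheta$, and dovetails with the extension in \cref{sec:mainalgo} where the theta bound is anyway needed uniformly over induced subgraphs. Your spectral route yields more structural information (concentration of the primal mass on $Q$) and is closer in spirit to the original analysis of \cite{FK00}, at the cost of tying the proof to the particular dual matrix $M$.
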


\begin{proof}
Let $T$ denote the set of those vertices in $K$ for which $(h \cdot s_i)^2 < \frac{3}{4}$. Remove $T$ from $G$, thus obtaining the graph $G_T$. This graph can be thought of as a subgraph with $n - |T|$ vertices of the random graph $G' \sim G(n, \frac{1}{2})$, in which an adversary planted a clique of size $k - |T|$. We also have that $\vartheta(\bar{G}_T) \ge \vartheta(\bar{G}) - \sum_{i\in T} (h\cdot s_i)^2 \ge k  - \frac{3}{4}|T|$. If $|T|$ is large (larger than $c'\log n$ for some sufficiently large constant $c' > 0$), the gap of $\frac{|T|}{4}$ between the size of the planted clique and the value of the theta function contradicts \cref{thm:ourtheta} for the graph $G_T$. (Technical remark: this last argument uses the fact that \cref{thm:ourtheta} holds with extremely high probability, as we take a union bound over all choices of $T$.)

Having established that $T$ is small, let $R$ be the set of vertices not in $K$  for which $(h \cdot s_i)^2 \ge \frac{3}{4}$. We claim that every such vertex $i\in R$ is a neighbor of every vertex $j\in K \setminus T$. This is because in the orthogonal representation (for $\bar{G}$), if $i$ and $j$ are not neighbors we have that $s_i \cdot s_j = 0$, and then the fact that $s_i,s_j$ and $h$ are unit vectors implies that $(h\cdot s_i)^2 < 1 - (h\cdot s_j)^2 \le \frac{1}{4}$. Having this claim and using the fact that $|K \setminus T| > \sqrt{n}$, it follows that $|R| \le 2\log n$. This is because a $G' \sim G(n, \frac{1}{2})$ graph, with extremely high probability, does not contain two sets of vertices $A$ and $B$, with $|A| = 2\log n$, $|B| = \sqrt{n}$, such that all pairs of vertices in $A \times B$ induce edges in $G$.
\end{proof}


The third step of our algorithm constructs a set $F$ that contains all those vertices that have at least $\frac{3k}{4}$ neighbors in $H$. 

\begin{lemma}
\label{lem:setF}
With extremely high probability, the set $F$ described above contains the maximum clique in $G$, and at most $O(\log n)$ additional vertices.
\end{lemma}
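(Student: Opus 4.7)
The plan is to establish the two required properties separately. Containment of the maximum clique $K'$ will follow directly from \cref{lem:setH} together with \cref{pro:maxIS}: every vertex of $K'$ has essentially all of $K' \cap K$ as neighbors, and nearly all of these lie in $H$. The size bound is the harder part and reduces to a statement about the underlying random graph $G'$, namely that no $k$-subset can have many outside vertices with abnormally high in-degree to it.

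First, I would show $K' \subseteq F$. Fix $v \in K'$. By \cref{pro:maxIS}, with extremely high probability $|K' \cap K| \ge k - O(\log n)$, and by \cref{lem:setH}, $|K \setminus H| \le O(\log n)$, so $|K' \cap K \cap H| \ge k - O(\log n)$. Every vertex of $K' \cap K \cap H$ other than $v$ is a neighbor of $v$ in $G$ (all of them lie in the clique $K'$), so $v$ has at least $k - O(\log n) - 1 \ge \tfrac{3k}{4}$ neighbors in $H$ for $k \ge 10\sqrt{n}$ and $n$ large. Hence $v \in F$.

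Next, for $|F \setminus K'| \le O(\log n)$, I would first observe that any $v \in F$ has at least $\tfrac{3k}{4}$ neighbors in $H$, of which at most $|H \setminus K| = O(\log n)$ lie outside $K$; so every $v \in F$ has at least $\tfrac{3k}{4} - O(\log n) \ge \tfrac{2k}{3}$ neighbors in $K$. Split $F \setminus K'$ into $v \in K$ and $v \notin K$. For $v \in K \setminus K'$, \cref{pro:maxIS} already gives $|K \setminus K'| \le O(\log n)$. For $v \notin K$, every neighbor of $v$ in $K$ is inherited from $G'$, since the adversary only adds edges inside $K$. Thus it suffices to establish the following claim, which is the main technical point.

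\smallskip
\noindent\textbf{Claim.} With extremely high probability over $G' \sim G(n, \tfrac{1}{2})$, for every $S \subseteq V$ with $|S| = k$ the number of vertices $v \in V \setminus S$ with $|N_{G'}(v) \cap S| \ge \tfrac{2k}{3}$ is at most $C \log n$, where $C$ is a sufficiently large constant.
\smallskip

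I would prove the claim by a union bound over disjoint pairs $(S, R)$ with $|S| = k$ and $|R| = r := \lceil C \log n \rceil$. For a fixed such pair, the events $\{|N_{G'}(v) \cap S| \ge \tfrac{2k}{3}\}$ for $v \in R$ depend on disjoint sets of edges of $G'$, so by a Chernoff bound on a sum of $k$ independent Bernoulli$(\tfrac{1}{2})$ variables their joint probability is at most $e^{-c_0 k r}$ for an absolute constant $c_0 > 0$. The number of ordered pairs is at most $\binom{n}{k}\binom{n}{r} \le e^{(k + r)\log n}$, and since $r/k = O(\log n / \sqrt{n}) = o(1)$, choosing $C$ with $c_0 C > 1$ makes the total bound $e^{-\Omega(k \log n)}$, which is extremely small. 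The main obstacle is precisely this balancing act: the combinatorial factor $\binom{n}{k}$ is enormous for $k = \Theta(\sqrt{n})$, and one must verify that the constant in the Chernoff large-deviation exponent for the deviation $\tfrac{2k}{3} - \tfrac{k}{2} = \tfrac{k}{6}$ is strong enough, with a large enough choice of $C$, to absorb both $\log\binom{n}{k}$ and $\log\binom{n}{r}$. Once the claim is in hand, combining it with the $v \in K$ case gives $|F \setminus K'| \le O(\log n)$, completing the proof.
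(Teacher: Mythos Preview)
Your proof is correct and follows essentially the same route as the paper: containment of $K'$ in $F$ via \cref{pro:maxIS} and \cref{lem:setH}, then bounding $|F\setminus K'|$ by splitting into $K\setminus K'$ (handled by \cref{pro:maxIS}) and $F\setminus K$ (handled by the probabilistic Claim). The paper's proof merely gestures at the last step as ``a probabilistic argument (similar to the end of the proof of \cref{lem:setH})''; your Claim makes this explicit, and is in fact the $p=\tfrac12$, $\mu=1$, $\nu=\tfrac23$ special case of \cref{Gneigh} in the appendix.
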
 

\begin{proof}
We may assume that $H$ satisfies the properties of \cref{lem:setH}. \cref{pro:maxIS} then implies that with extremely high probability, every vertex of the maximum clique in $G$ has at least $\frac{3k}{4}$ neighbors in $H$, and hence is contained in $F$. A probabilistic argument (similar to the end of the proof of \cref{lem:setH}) establishes that $F$ has at most $O(\log n)$ vertices not from $K$. As $K$ itself has at most $O(\log n)$ vertices not from the maximum clique (by \cref{pro:maxIS}), the total number of vertices in $F$ that are not members of the maximum clique is at most $O(\log n)$.
\end{proof}

Finally, in the last step of our algorithm we find a maximum clique in $F$, and this is a maximum clique in $G$. This last step can be performed in polynomial time by a standard algorithm (used for example to show that vertex cover is fixed parameter tractable). For every non-edge in the subgraph induced on $F$, at least one of its end-vertices needs to be removed. Try both possibilities in parallel, and recurse on each subgraph that remains. The recursion terminates when the graph is a clique. The shortest branch in the recursion gives the maximum clique. As only $O(\log n)$ vertices need to be removed in order to obtain a clique, the depth of the recursion is at most $O(\log n)$, and consequently the running time (which is exponential in the depth) is polynomial in $n$.

This completes our overview of our algorithm for finding a clique in $G \sim AG(n, \frac{1}{2}, k)$ when $k > c\sqrt{n}$ for a sufficiently large constant $c > 0$. To complete the proof of \cref{thres1} we need to also address the case that $k > \eps \sqrt{n}$ for arbitrarily small constant $\eps$. This we do (as in~\cite{AKS98}) by guessing $t \simeq 2\log \frac{c}{\epsilon}$ vertices from $K$ (there are $n^{t}$ possibilities to try, and we try all of them), and considering the subgraph of $G$ induced on their common neighbors. This subgraph corresponds to a subgraph of $G' \simeq G(n, \frac{1}{2})$ with roughy $n'\simeq  2^{-t} n$ vertices, and a planted clique of size $\eps\sqrt{n} - t \simeq c\sqrt{n'}$. Now on this new graph $G"$ we can invoke the algorithm based on the theta function. 
(Technical remark. The proof that $\vartheta(\bar{G}") \le k + O(\log n)$ uses the fact that \cref{thm:ourtheta} holds with extremely high probability. See more details in \cref{sec:mainalgo}.)

The many details that were omitted from the above overview of the proof of \cref{thres1} can be found in in the appendix.
Specifically, in \cref{sec:bounding} we present the proof of \cref{thm:ourtheta}, generalized to values of $p$ other than $1/2$, and $k \ge c\sqrt{np}$. (A technical lemma that is needed for this proof appears in \cref{sec:probbound}.) 
In \cref{sec:mainalgo} we present the proof of \cref{thres1}, first addressing the case that $c$ is sufficiently large, and  then extending the results to the case that $c$ can be arbitrarily small.

\subsection{Finding cliques by enumeration}
\label{sec:enumeration}

In this section we prove \cref{thres2}. 

Let $p = n^{\delta - 1}$ for $0 < \delta < 1$, and consider first $G' \sim G(n, p)$ (hence $G'$ has average degree roughly $n^{\delta}$). For every size $t \ge 1$, let $N_t$ denote the number of cliques of size $t$ in $G'$. The expectation (over choice of $G' \sim G(n, p)$) satisfies:
$$\E{N_t} = \binom{n}{t}p^{\binom{t}{2}} \le \frac{1}{t!}n^{\frac{\delta - 1}{2}t^2 + \frac{3-\delta}{2}t}$$
The exponent is maximized when $t = \frac{3 - \delta}{2(1 - \delta)}$. For the maximizing (not necessarily integer) $t$, the exponent equals $\frac{(3-\delta)^2}{8(1 - \delta)}$. We denote this last expression by $e_{\delta}$, and note that $e_{\delta} = O(\frac{1}{1 - \delta})$. 
The expected number of cliques of all sizes is then:
$$\sum_{t\ge 1} \E{N_t} \le n + \sum_{t\ge 2} \frac{1}{t!}n^{\frac{\delta - 1}{2}t^2 + \frac{3-\delta}{2}t} \le n^{e_{\delta}}$$
(The last inequality holds for sufficiently large $n$.)
By Markov's inequality, with probability at least $1 - \frac{1}{n}$, the actual number of cliques in $G'$ is at most $n^{e_{\delta} + 1}$. (Stronger concentration results can be used here, but are not needed for the proof of \cref{thres2}.)


Now, for arbitrary $1 \le k \le n$, let the adversary plant a clique $K$ of size $k$ in $G'$, thus creating the graph $G \sim G(n, p, k)$. As every subgraph of $K$ is a clique, the total number of cliques in $G$ is at least $2^k$, which might be exponential in $n$ (if $k$ is large). However, the number of maximal cliques in $G$ (a clique is maximal if it is not contained in any larger clique) is much smaller. Given a maximal clique $C$ in $G$, consider $C'$, the subgraph of $C$ not containing any vertex from $K$. $C'$ is a clique in $G'$ (which is nonempty, except for one special case of $C = K$). $C'$ uniquely determines $C$, as the remaining vertices in $C$ are precisely the set of common neighbors of $C'$ in $K$ (this is because the clique $C$ is maximal). Consequently, the number of maximal cliques in $G$ is not larger than the number of cliques in $G'$. 

As all maximal cliques in a graph can be enumerated in time linear in their number times some polynomial in $n$ (see e.g. \cite{MU04} and references therein), one can list all maximal cliques in $G$ in time $n^{e_{\beta} + O(1)}$ (this holds with probability at least $1 - \frac{1}{n}$, over the choice of $G'$, regardless of where the adversary plants clique $K$), and output the largest one.

This completes the proof of \cref{thres2}.

\subsection{Proving NP-hardness results}
\label{sec:hardness}

In this section we provide an overview of the proof of \cref{thres3}. Our proof is an adaptation to our setting of a proof technique developed in~\cite{DF16}.

Recall that we are considering a graph $G \sim A\bar{G}(n, p, k)$ (adversarial planted independent set) with $p = n^{\delta - 1}$ and $0 < \delta < 1$.  Let us first explain why the algorithm described in \cref{sec:algorithm} fails when $k = cn^{1 - \frac{\delta}{2}}$ (whereas if the independent set is planted at random, algorithms based on the theta function are known to succeed). The problem is that the bound in \cref{thm:ourtheta} is not true anymore, and instead one has the much weaker bound of $\vartheta(G) \le k + n^{1 - \delta}\log n$. Following the steps of the algorithm of \cref{sec:algorithm}, in the final step, we would need to remove a minimum vertex cover from $F$. However, now the upper bound on the size of this vertex cover is $O(n^{1 - \delta}\log n)$ rather than $O(\log n)$. Consequently, we do not know of a polynomial time algorithm that will do so. It may seem that we also do not know that no such algorithm exists. After all, $F$ is not an arbitrary worst case instance for vertex cover, but rather an instance derived from a random graph. However, our NP-hardness result shows that indeed this obstacle is insurmountable, unless NP has randomized polynomial time algorithms.  We remark that using an approximation algorithm for vertex cover in the last step of the algorithm of \cref{sec:algorithm} does allow one to find in $G$ an independent set of size $k - O(n^{1 - \delta}\log n) = (1 - o(1))k$, and the NP-hardness result applies only because we insist on finding an independent set of size at least $k$.

Let us proceed now with an overview of our NP hardness proof. We do so for the case that $k = \frac{n}{3}$ (for which we can easily find the maximum independent set if the planted independent set is random). Assume for the sake of contradiction that ALG is a polynomial time algorithm that with high probability over choice of $G' \sim G(n, p)$, for every planted independent set of size $k = \frac{n}{3}$, it finds in the resulting graph $G$ an independent set of size $k$.

We now introduce a class $\cal{H}$ of graphs that, in anticipation of the proofs that will follow, is required to have the following three properties. (Two of the properties are stated below in a qualitative manner, but they have precise quantitative requirements in the proofs that follow.)

\begin{enumerate}
    \item Solving maximum independent set on graphs from this class is NP-hard.
    \item Graphs in this class are very sparse.
    \item The number of vertices in each graph is small. 
\end{enumerate}

Given the above requirements, we choose  $0 < \eps < \min[\frac{\delta}{2}, 1 - \delta]$, and let $\cal{H}$ be the class of {\em balanced} graphs on $n^{\epsilon}$ vertices, and of average degree $2 + \delta$. (A graph $H$ is {\em balanced} if no subgraph of $H$ has average degree larger than the average degree of $H$.) Given a graph $H \in \cal{H}$ and a parameter $k'$, it is NP-hard to determine whether $H$ has an independent of size at least $k'$ or not (see \cref{nph}). We will reach a contradiction to the existence of ALG by showing how ALG could be used in order to find in $H$ an independent set of size $k'$, if one exists. For this, we use the following randomized algorithm ALGRAND.

\begin{enumerate}
    \item Generate a random graph $G' \sim G(n, p)$.
    \item Plant in $G'$ a random copy of $H$ (that is, pick $|H|$ random vertices in $G'$ and replace the subgraph induced on them by $H$). We refer to the resulting distribution as $G_H(n,p)$, and to the graph sampled from this distribution as $G_H$. 
    Observe that the number of vertices in $G_H$ that have a neighbor in $H$ is with high probability not larger than $|H|n^{\delta} \le \frac{n}{2}$.
    \item Within the non-neighbors of $H$, plant at random an independent set of size $k - k'$. We refer to the resulting distribution as $G_H(n,p,k)$, and to the graph sampled from this distribution as $\wG_H$. 
    Observe that with extremely high probability, $\alpha(\wG_H \setminus H) = k - k'$. Hence we may assume that this indeed holds. If furthermore $\alpha(H) \ge k'$, then $\alpha(\wG_H) \ge k$.
    \item Run ALG on $\wG_H$. We say that ALGRAND succeeds if ALG outputs an independent set $IS$ of size $k$. Observe that then at least $k'$ vertices of $H$ are in $IS$, and hence ALGRAND finds an independent set of size $k'$ in $H$. 
\end{enumerate}

If $H$ does not have an independent set of size $k'$, ALGRAND surely fails to output such an independent set. But if $H$ does have an independent set of size $k'$, why should ALGRAND succeed? This is because ALG (which is used in ALGRAND) is fooled to think that the graph $\wG_H$ generated by ALGRAND was generated from $A\bar{G}(n, p, k)$, and on such graphs ALG does find independent sets of size $k$. And why is ALG fooled? This is because the distribution of graphs generated by ALGRAND is statistically close to a distribution that can be created by the adversary in the $A\bar{G}(n, p, k)$ model. Specifically, consider the following distribution that we refer to as $A_HG(n, p, k)$.

\begin{enumerate}
    \item Generate $G' \sim G(n, p)$.
    \item The computationally unbounded  adversary finds within $G'$ all subsets of vertices of size $|H|$ such that the subgraph induced on them is $H$. (If there is no such subset, fail.) Choose one such copy of $H$ uniformly at random.
    \item As $H$ is assumed to have an independent set of size $k'$, plant an independent set $K$ of size $k$ as follows. $k'$ of the vertices of $K$ are vertices of an independent set in the selected copy of $H$. The remaining $k - k'$ vertices of $K$ are chosen at random among the vertices of $G'$ that have no neighbor at all in the copy of $H$. (Observe that we expect there to be at least roughly $n - |H|n^{\delta} \ge \frac{n}{2}$ such vertices, and with extremely high probability the actual number will be at least $\frac{n}{3} > k - k'$.)
\end{enumerate}

\begin{theorem}
\label{thm:closeStatistics}
The two distributions, $\wG_H \sim G_H(n, p, k)$ generated by ALGRAND and $G \sim A_HG(n, p, k)$ generated by the adversary, are statistically similar to each other.
\end{theorem}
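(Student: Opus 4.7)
The plan is to place both distributions on an augmented space of tuples $(G, L, I)$, where $L \subseteq V$ is the vertex set of size $|H|$ on which the copy of $H$ sits and $I$ is the planted independent set of size $k - k'$, and to bound the total variation distance between the joint distributions; the claim on the marginals then follows by the data-processing inequality. Conditional on any $(G, L)$ with $G|_L \cong H$, the planted independent set $I$ in both constructions is drawn uniformly at random from the $(k-k')$-subsets of the set $U(G, L)$ of vertices in $V \setminus L$ with no neighbor in $L$, a set determined purely by the cross-edges of $G$. Hence the conditional distributions of $I$ match exactly, and it suffices to bound the total variation distance between the $(G, L)$ marginals.

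Let $N(G)$ denote the number of copies of $H$ in $G$ and let $c_H$ denote the probability, under $G(n,p)$, that a fixed labeled vertex set of size $|H|$ carries a copy of $H$. By independence of edges in $G(n,p)$, for any $(G, L)$ with $G|_L \cong H$ one obtains the explicit formulas
\begin{equation*}
P_{\mathrm{ALGRAND}}(G, L) = \binom{n}{|H|}^{-1} c_H^{-1}\, \mathds{P}_{G(n,p)}[G], \qquad P_{A_H G}(G, L) = \frac{\mathds{P}_{G(n,p)}[G]}{N(G)\, \mathds{P}[N \geq 1]}.
\end{equation*}
Dividing, the likelihood ratio reduces on its support to
\begin{equation*}
\frac{P_{A_H G}(G, L)}{P_{\mathrm{ALGRAND}}(G, L)} = \frac{\binom{n}{|H|}\, c_H}{N(G)\, \mathds{P}[N \geq 1]} = \frac{\mathds{E}[N]}{N(G)\, \mathds{P}[N \geq 1]}.
\end{equation*}
Thus the two distributions are close in total variation precisely when $N(G)$ concentrates sharply about its mean.

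The main step is therefore a concentration estimate for $N(G)$ under $G(n,p)$, where $p = n^{\delta-1}$ and $H$ is balanced on $|H| = n^\epsilon$ vertices of average degree $2 + \delta$. I would carry this out by the second moment method: expanding
\begin{equation*}
\mathrm{Var}(N) = \sum_{L_1, L_2}\, \mathrm{Cov}\!\left( \mathds{1}[G|_{L_1} \cong H],\, \mathds{1}[G|_{L_2} \cong H] \right),
\end{equation*}
only pairs with $|L_1 \cap L_2| \geq 2$ contribute. The balance hypothesis on $H$---every subgraph $H' \subseteq H$ on $v'$ vertices and $e'$ edges satisfies $e'/v' \leq (2 + \delta)/2$---is precisely the structural input that controls the contribution of pairs sharing $j$ vertices, and the parameter window $\epsilon < \min(\delta/2,\, 1-\delta)$ is chosen so that $\mathds{E}[N]$ grows polynomially in $n$ while the overlap terms remain negligible. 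A standard random-graph computation then yields $\mathrm{Var}(N)/\mathds{E}[N]^2 = o(1)$, so by Chebyshev $N(G)$ concentrates around $\mathds{E}[N]$ and $\mathds{P}[N \geq 1] = 1 - o(1)$.

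With concentration in hand, the likelihood ratio is $1 \pm o(1)$ on an event of $P_{\mathrm{ALGRAND}}$-probability $1 - o(1)$; a Cauchy--Schwarz bound handles the complementary event, since the reweighting factor $N/\mathds{E}[N]$ is under control when $\mathds{E}[N^2] = (1 + o(1))\,\mathds{E}[N]^2$. This gives total variation $o(1)$ between the $(G, L)$ marginals, and combining with the identical conditional distribution of $I$ yields the theorem. The main obstacle is the variance calculation: because $H$ itself grows with $n$, the bookkeeping over overlap patterns $|L_1 \cap L_2| = j$ is more delicate than in the classical constant-$H$ regime, and the balance hypothesis together with the $\epsilon$ window is precisely what keeps the estimate under control.
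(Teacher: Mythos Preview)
Your proposal is correct and follows essentially the same route as the paper: both arguments reduce the comparison to a concentration estimate for the number of induced copies of $H$ in $G(n,p)$, proved by the second moment method using the balance hypothesis on $H$, and then observe that the likelihood ratio between the two models equals (up to a $1-o(1)$ factor) the ratio $N(G)/\E{N}$. The paper organizes this by first comparing the marginals $G(n,p)$ and $G_H(n,p)$ (\cref{lem:closeStatistics}, whose proof is exactly your ratio computation $p_H(G)/p(G)=X_H(G)/\E{X_H(G)}$) and then noting that, conditional on the intermediate graph and the designated copy of $H$, the remaining planting step is literally the same in both processes; your direct treatment on the augmented $(G,L)$ space is an equivalent packaging of the same idea.
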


The proof of \cref{thm:closeStatistics} appears in \cref{sec:proofThm2.3}. Here we explain the main ideas in the proof. A minimum requirement for the theorem to hold is that $G' \sim G(n, p)$ typically contains at least one copy of $H$ (otherwise $A_HG(n, p, k)$ fails to produce any output). But this by itself does not suffice. Intuitively, the condition we need is that $G'$ typically contains many copies of $H$. Then the fact that $G_H(n, p)$ of ALGRAND adds another copy of $H$ to $G'$ does not appear to make much of a difference to $G'$, because $G'$ anyway has many copies of $H$.  Hopefully, this will imply that $G' \sim G(n, p)$ and $G_H \sim G_H(n, p)$ come from two distributions that are statistically close. This intuition is basically correct, though another ingredient (a concentration result) is also needed. Specifically, we need the following lemma (stated informally).

\begin{lemma}
\label{lem:manyH}
For $G' \in G(n, p)$ (with $p$ and $H$ as above), the expected number of copies of $H$ in $G'$ is very high ($2^{n^{\eta}}$ for some $\eta > 0$ that depends on $\delta$ and $\epsilon$). Moreover, with high probability, the actual number of copies of $H$ in $G'$ is very close to its expectation.
\end{lemma}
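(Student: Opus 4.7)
\textbf{Proof plan for \cref{lem:manyH}.} Set $v := |V(H)| = n^{\epsilon}$ and $e := |E(H)| = (1+\delta/2)n^{\epsilon}$, and let $X$ denote the number of copies of $H$ in $G' \sim G(n,p)$. The plan is the standard first-and-second moment method: compute $\E{X}$ by linearity, bound $\V{X}/\E{X}^{2}$ using the balanced property of $H$, and conclude via Chebyshev's inequality.

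For the first moment, linearity gives $\E{X} = n^{\underline{v}}\, p^{e}/|\mathrm{Aut}(H)|$. Substituting $p=n^{\delta-1}$ and using the identity
\begin{equation*}
v - e(1-\delta) \;=\; n^{\epsilon}\bigl(\,1-(1+\delta/2)(1-\delta)\,\bigr) \;=\; \tfrac{\delta(1+\delta)}{2}\,n^{\epsilon},
\end{equation*}
together with the crude bound $\log_{2}|\mathrm{Aut}(H)| \le \log_{2}(v!) \le \epsilon\, n^{\epsilon}\log_{2}n$, one obtains
\begin{equation*}
\log_{2}\E{X} \;\ge\; n^{\epsilon}\log_{2}n \cdot \Bigl(\tfrac{\delta(1+\delta)}{2} - \epsilon\Bigr)(1-o(1)).
\end{equation*}
Because $\epsilon < \delta/2 < \delta(1+\delta)/2$, the coefficient is a positive constant, which yields $\E{X} \ge 2^{n^{\eta}}$ for any $\eta < \epsilon$ and all sufficiently large $n$, establishing the first part of the lemma.

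For concentration I expand $\E{X(X-1)}$ as a sum over ordered pairs of embeddings of $H$ into $V$, and group the summands by the overlap subgraph $J \subseteq H$ (the subgraph of $H$ induced on the vertices whose images coincide in the two embeddings). The standard book-keeping reduces this to
\begin{equation*}
\frac{\V{X}}{\E{X}^{2}} \;\le\; \sum_{\substack{J \subseteq H\\ e(J) \ge 1}} \frac{c(J,H)}{n^{v(J)}\, p^{e(J)}},
\end{equation*}
where $c(J,H)$ is a combinatorial multiplicity counting the ways a shared copy of $J$ can sit inside $H$ on each side. The balanced hypothesis on $H$ forces $v(J)/e(J) \ge v/e = 2/(2+\delta)$ for every nonempty $J \subseteq H$, hence
\begin{equation*}
n^{v(J)}\, p^{e(J)} \;=\; n^{v(J) - e(J)(1-\delta)} \;\ge\; n^{\alpha\, e(J)}, \qquad \alpha := \frac{\delta(1+\delta)}{2+\delta} > 0.
\end{equation*}
Grouping subgraphs of $H$ by their edge count $f = e(J)$ (of which there are at most $\binom{e}{f}$ relevant types after absorbing $c(J,H)$ into the grouping), the overlap sum is bounded by
\begin{equation*}
\sum_{f \ge 1} \binom{e}{f}\, n^{-\alpha f} \;\le\; (1+n^{-\alpha})^{e} - 1 \;\le\; 2e\, n^{-\alpha} \;=\; O\bigl(n^{\epsilon-\alpha}\bigr) \;=\; o(1),
\end{equation*}
where the last equality uses $\epsilon < \delta/2 < \alpha$. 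Chebyshev's inequality then delivers $X = (1\pm n^{-\Omega(1)})\,\E{X}$ with probability $1-n^{-\Omega(1)}$, which is the quantitative form of the concentration claim.

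The main obstacle is the second-moment step, because $H$ has $n^{\epsilon}$ vertices and $\Theta(n^{\epsilon})$ edges and therefore $2^{\Theta(n^{\epsilon})}$ subgraphs, so one needs a per-edge \emph{exponential} saving to make the overlap sum small. The balanced hypothesis is precisely what supplies this (a factor of $n^{-\alpha}$ per shared edge), and the requirement $\epsilon < \delta/2$ built into the choice of $\mathcal{H}$ is exactly what guarantees $\epsilon < \alpha$ so that $e \cdot n^{-\alpha} = o(1)$. The routine but laborious piece I am glossing over is verifying that the multiplicities $c(J,H)$ are genuinely absorbed by the edge-count grouping---this amounts to tracking how different labeled placements of the overlap inside $H$ are consistent with the same unordered shared subgraph, and using that adding isolated vertices to $J$ only increases $v(J)$ and thus improves the estimate.
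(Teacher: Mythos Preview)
Your high-level plan---first and second moment using the balanced hypothesis, then Chebyshev---is the same as the paper's, and the first-moment computation is fine. The gap is in the second moment, and it is not the routine bookkeeping you suggest.

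The quantity $c(J,H)$ in your variance expansion is essentially the number of \emph{pairs} of embeddings of $J$ into $H$ (one for each side of the pair $(\phi_1,\phi_2)$), so $c(J,H)\asymp N_J(H)^2$ where $N_J(H)$ counts copies of $J$ inside $H$. Because $H$ itself has $n^{\epsilon}$ vertices, these multiplicities are polynomial in $n$, not constants: already for $J$ a single edge one has $c(J,H)\asymp e(H)^2=\Theta(n^{2\epsilon})$, whereas your grouping allots only $\binom{e}{1}=\Theta(n^{\epsilon})$. More generally, without any extra structure two labeled copies can overlap in a ``twisted'' way ($\phi_1(i)=\phi_2(j)$ with $i\neq j$), so the number of overlap patterns on $s$ shared vertices is of order $\binom{v}{s}^2 s!$ rather than $\binom{v}{s}$, and these extra powers of $v=n^{\epsilon}$ compete with your per-edge saving $n^{-\alpha}$. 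Your claimed absorption of $c(J,H)$ into $\binom{e}{f}$ therefore fails as stated.

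The paper sidesteps this entirely by a device you do not use: it fixes a partition of $V(G')$ into $m=|V(H)|$ equal parts and counts only those copies of $H$ in which vertex $i$ of $H$ lands in part $i$. For two such partition-respecting copies, vertex $i$ of one can coincide only with vertex $i$ of the other, so the overlap is completely described by a subset $T\subseteq[m]$ of shared coordinates, with multiplicity exactly $\binom{m}{|T|}$---no embedding counts whatsoever. Grouping by $t=m-|T|$ and using balancedness to lower bound the number of fresh edges by $\alpha t/2$ yields the clean sum $\sum_{t}\binom{m}{t}(n/m)^{t}p^{\alpha t/2}$, which the paper controls under the size constraint $m\le \min[\sqrt{\eps/p},\,2^{-1/4}p^{\alpha/4}\sqrt{\eps n}]$ (in exponent form, $\rho<\min[\tfrac{1-\delta}{2},\tfrac{2-\alpha(1-\delta)}{4}]$, which is somewhat tighter than the informal $\epsilon<\min[\delta/2,1-\delta]$ in the overview). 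This partition trick is the missing idea in your plan; once it is in place the second-moment step really is short, but without it the multiplicity accounting you defer is the whole difficulty.
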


The proof of \cref{lem:manyH} is based on known techniques (first and second moment methods). It uses in an essential way the fact that the graph $H$ is sparse (average degree barely above~2) and does not have many vertices (these properties hold by definition of the class $\cal{H}$).  See more details in \cref{sec:proofLem2.3}. Armed with \cref{lem:manyH}, we then prove the following Lemma. 

\begin{lemma}
\label{lem:closeStatistics}
The two distributions $G(n, p)$ and $G_H(n, p)$ are statistically similar to each other.
\end{lemma}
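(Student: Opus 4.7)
The plan is to write the likelihood ratio between $G_H(n,p)$ and $G(n,p)$ in closed form and then show it is close to $1$ in $L^1$, invoking \cref{lem:manyH}. Fix a labeled graph $G$ on $n$ vertices and let $X_H(G)$ denote the number of injective maps $\phi\colon V(H)\to V(G)$ whose image induces a copy of $H$. Under $G(n,p)$ the mass at $G$ is $p^{e(G)}(1-p)^{\binom{n}{2}-e(G)}$. Under $G_H(n,p)$ the sampler first picks a uniform injection $\phi$ (probability $1/n^{\underline{|H|}}$), sets the $\binom{|H|}{2}$ edges inside $\phi(V(H))$ to equal $H$, and samples the remaining $\binom{n}{2}-\binom{|H|}{2}$ potential edges independently with probability $p$. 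Summing over the $X_H(G)$ injections compatible with $G$,
\begin{equation*}
\mathbb{P}_{G_H(n,p)}[G] \;=\; \frac{X_H(G)}{n^{\underline{|H|}}}\, p^{e(G)-e(H)}(1-p)^{\binom{n}{2}-\binom{|H|}{2}-(e(G)-e(H))}.
\end{equation*}
Dividing by $\mathbb{P}_{G(n,p)}[G]$ everything cancels except an $H$-dependent factor, giving the clean identity
\begin{equation*}
\frac{\mathbb{P}_{G_H(n,p)}[G]}{\mathbb{P}_{G(n,p)}[G]} \;=\; \frac{X_H(G)}{\mu_H},\qquad \mu_H \;:=\; n^{\underline{|H|}}\, p^{e(H)}(1-p)^{\binom{|H|}{2}-e(H)} \;=\; \mathbb{E}_{G'\sim G(n,p)}\bigl[X_H(G')\bigr].
\end{equation*}

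Consequently the total variation distance has the exact expression
\begin{equation*}
d_{TV}\bigl(G(n,p),\, G_H(n,p)\bigr) \;=\; \mathbb{E}_{G'\sim G(n,p)}\!\left[\left(\frac{X_H(G')}{\mu_H}-1\right)^{\!+}\right].
\end{equation*}
By \cref{lem:manyH}, with probability $1 - \exp(-\Omega(n^{\eta}))$ we have $X_H(G')\in[(1-\eps')\mu_H,(1+\eps')\mu_H]$ for some $\eps'=o(1)$. On this good event the positive part of the integrand is at most $\eps'$. On the complementary bad event I would crudely bound the ratio by $X_H/\mu_H \le n^{\underline{|H|}}/\mu_H \le \exp\!\bigl(|H|\log n - n^{\eta}\bigr)$ and multiply by the bad-event probability $\exp(-\Omega(n^{\eta}))$. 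Combining gives $d_{TV} \le \eps' + \exp\!\bigl(n^{\eps}\log n - \Omega(n^{\eta})\bigr)$.

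The main obstacle, and the reason the parameters in the construction of $\Hh$ are carefully pinned down, is ensuring that $n^{\eta}$ dominates the entropy $|H|\log n = n^{\eps}\log n$ coming from the crude count of possible embeddings on the bad event. Because $\Hh$ is built from balanced sparse graphs on only $n^{\eps}$ vertices, the second-moment computation underlying \cref{lem:manyH} yields an $\eta$ that is an explicit positive function of $\delta$ (and does not shrink with $\eps$), so choosing $\eps$ small enough relative to this $\eta$ forces the bad-event term to be $o(1)$. With that calibration, the identity and tail bound above yield $d_{TV}=o(1)$, which is \cref{lem:closeStatistics}.
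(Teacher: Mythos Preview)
Your likelihood-ratio identity $\mathbb{P}_{G_H(n,p)}[G]/\mathbb{P}_{G(n,p)}[G] = X_H(G)/\mu_H$ is exactly what the paper derives (see \cref{closedist}), so the core of the argument matches. The gap is in how you invoke \cref{lem:manyH}. You write ``with probability $1 - \exp(-\Omega(n^{\eta}))$ we have $X_H(G')\in[(1-\eps')\mu_H,(1+\eps')\mu_H]$'', but the lemma does not give exponential tails. In its precise form (\cref{copy}, \cref{subcopy}) it is a second-moment bound: $\V{X_H}\le 4\eps\,\mu_H^2$, and the only concentration statement is Chebyshev, i.e.\ $\P{X_H\le\beta\mu_H}\le 4\eps/(1-\beta)^2$, a \emph{constant} failure probability. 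The $2^{n^{\eta}}$ in the informal statement of \cref{lem:manyH} refers to the size of $\mu_H$, not to a tail exponent. With only a constant bad-event probability, your crude bound $X_H/\mu_H\le n^{\underline{|H|}}/\mu_H$ (which is of order $\exp(\Theta(n^{\rho}\log n))$) blows up the bad-event contribution, and the parameter balancing you describe in your last paragraph cannot save it: the second-moment method simply does not yield an $\eta>0$ in the tail.

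There are two clean ways out. The paper's route: it never bounds total variation. Using only the lower-tail Chebyshev bound it concludes that $p_H(G)\ge\beta\,p(G)$ with probability $1-4\eps/(1-\beta)^2$ over $G\sim G(n,p)$, and this one-sided domination is exactly what the downstream argument (\cref{thm:closedist}) consumes. Alternatively, if you want an honest TV bound, skip the good/bad split and use the variance directly: since $\E[X_H/\mu_H]=1$,
\[
d_{TV}=\E\!\left[\left(\tfrac{X_H}{\mu_H}-1\right)^{\!+}\right]=\tfrac12\,\E\!\left|\tfrac{X_H}{\mu_H}-1\right|\le \tfrac12\sqrt{\V{X_H}}/\mu_H\le \sqrt{\eps},
\]
which is small for small constant $\eps$ and avoids any tail-vs-entropy balancing.
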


\cref{lem:closeStatistics} is proved by considering graphs $G' \sim G(n, p)$ that do contain a copy of $H$ (\cref{lem:manyH} establishes that this is a typical case), and comparing for each such graph the probability of it being generated by $G_H(n, p)$ with the probability of it being generated by $G(n, p)$. Conveniently, the ratio between these probabilities is the same as the ratio between the actual number of copies of $H$ in the given graph $G'$, and the expected number of copies of $H$ in a random $G' \sim G(n, p)$. By \cref{lem:manyH}, for most graphs, this ratio is close to~1. 
For more details, see \cref{sec:proofThm2.3}. 

\cref{thm:closeStatistics} follows quite easily from \cref{lem:closeStatistics}. Consequently ALG's performance on the distributions $G_H(n, p, k)$ and $A_HG(n,  p, k)$  is similar. By our assumption, ALG finds (with high probability) an independent set of size $k$ in $G \sim A_HG(n, p, k)$, which now implies that it also does so for $\wG_H \sim G_H(n, p, k)$. But as argued above, finding an independent set of size $k$ in $\wG_H \sim G_H(n, p, k)$ implies that
ALGRAND finds an independent set of size $k'$ in $H \in \cal{H}$, thus solving an NP-hard problem. 
Hence the assumption that there is a polynomial time algorithm ALG that can find independent sets of size $k$ in $G \sim A\oG(n, p, k)$ implies that NP has randomized polynomial time algorithms.

\section{Additional results}
\label{sec:discussion}

In the main part of the paper we only described what we view as our main results. The appendix contains all missing proofs, and some additional results and extensions, not described above. For example, one may ask for which value of $p \le \frac{1}{2}$ the transition occurs from being able to find the maximum independent set in $G \sim A\bar{G}(n, p, k)$ in polynomial time, to the problem becoming NP hard. Our results show a gradual transition. For constant $p$ the problem remains polynomial time solvable, and then, as $p$ continues to decrease, the running time of our algorithms becomes super polynomial, and grows gradually towards exponential complexity. Establishing this type of behavior does not require new proof ideas, but rather only the substitution of different parameters in the existing proofs. Consequently, some theorems that were stated here only in special cases (e.g., \cref{thm:ourtheta}  that was stated only for $p = \frac{1}{2}$) are restated in the appendix in a more general way (e.g., replacing $\frac{1}{2}$ by $p$), and a more general proof is provided.

Though this is not shown in the appendix, our hardness results (for finding adversarially planted independent sets) also imply a gradual transition, 
providing NP-hardness results when $p = n^{\delta-1}$, and as $p$ grows (e.g., into the range $p = \frac{1}{(\log n)^c}$) the NP-hardness results are replaced by hardness results under stronger assumptions, such as  (a randomized version of) the exponential time hypothesis. This is because for $p = \frac{1}{(\log n)^c}$ we need to limit the size of the graphs $H \in {\cal{H}}$ to be only polylogarithmic in $n$, as for larger sizes the proofs in \cref{sec:hardness} fail. 

An interesting range of parameters that remains open is that of $p = \frac{d}{n}$ for some large constant $d$. The case of a random planted independent set of size $\sqrt{\frac{c}{d}} n$ (for some sufficiently large constant $c > 0$ independent of $d$) was addressed in~\cite{FO08}. In such sparse graphs, the planted independent set is unlikely to be the maximum independent set.
The main result in~\cite{FO08} is a polynomial time algorithm that with high probability finds the maximum independent set in that range of parameters. It would be interesting to see whether the positive results extend to the case of adversarial planted independent set. We remark that neither \cref{thres1} nor \cref{thres3} apply in this range of parameters.

\subsection*{Acknowledgements}

The work of Uriel Feige is supported in part by the Israel Science Foundation (grant
No. 1388/16).
We are very grateful to Danila Kutenin for suggesting using the Bernstein inequality in \cref{varbound} to significantly simplify the proof.

\bibliographystyle{alpha}
\bibliography{main}

\appendix

\section{Bounding the theta function}\label{sec:bounding}

In this section we will prove \cref{thm:ourtheta}.
\begin{theorem}[\cref{thm:ourtheta} restated]
	Let $G \sim AG(n, p, k)$ with $p = 1/2$ and $k = k(n) = 10\sqrt{n}$.
	Then $k \leq \vartheta(\oG) \leq k + 96\log n$ with probability at least $1 - \exp(-2k\log n)$, for every possible choice of $k$ vertices by the adversary.
\end{theorem}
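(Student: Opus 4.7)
The lower bound $\vartheta(\bar{G}) \geq k$ is immediate: for the planted clique $K$ one takes the orthonormal representation that sets $s_i = h$ for every $i \in K$ and extends the $s_j$'s for $j \notin K$ to an orthonormal system orthogonal to $h$, which is feasible since $K$ is an independent set in $\bar{G}$. The bulk of the work is in the upper bound, for which the plan is to exhibit a feasible matrix $M$ in the dual formulation~\eqref{eq:thetaDual} whose largest eigenvalue is at most $k + 96\log n$. Following the sketch in~\cref{sec:algorithm}, I will not try to build $M$ for $\bar{G}$ directly, but for a closely related graph $\bar{G}_Q$ obtained by enlarging the planted clique, and then push the bound back to $\bar{G}$ by monotonicity.

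The first step is to define $Q$ as the closed neighborhood $K \cup \{v : |N(v) \cap K| \geq \rho k\}$ for some fixed $\rho \in (\tfrac{1}{2},1)$, and invoke the auxiliary lemma (Gcloseneigh, referenced in the overview) which, by a Chernoff bound plus a union bound over the $\binom{n}{k}$ choices of $K$, yields $|Q| \leq k + O(\log n)$ with failure probability $\exp(-\Omega(k\log n))$. Now add to $G$ all missing edges among $Q$ to produce $G_Q$; then $Q$ is a clique of $G_Q$ and $Q$ is an independent set of $\bar{G}_Q$. Since $E(\bar{G}_Q) \subseteq E(\bar{G})$, monotonicity of the theta function gives $\vartheta(\bar{G}) \leq \vartheta(\bar{G}_Q)$, so it suffices to prove $\vartheta(\bar{G}_Q) \leq |Q|$.

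For this, the plan is to mimic the~\cite{FK00} construction but with $Q$ in the role of the planted independent set. Define the symmetric matrix $M$ on $\bar{G}_Q$ by $M_{ij} = 1$ if $(i,j) \notin E(\bar{G}_Q)$; $M_{ij} = -1$ if $(i,j) \in E(\bar{G}_Q)$ and $i,j \notin Q$; and $M_{ij} = -\frac{|Q|-d_{i,Q}}{d_{i,Q}}$ if $i \in Q$, $j \notin Q$, $(i,j) \in E(\bar{G}_Q)$ (where $d_{i,Q}$ is the number of $Q$-neighbors of $j$ in $\bar{G}_Q$; symmetric case analogous). A direct calculation shows that $x_Q$, the indicator of $Q$, is an eigenvector with eigenvalue $|Q|$, and the construction is designed so that $Mx = 0$ for the "row-sum cancellation" type test vectors supported outside $Q$. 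It then suffices to show $\lambda_1(M) = |Q|$, i.e., no other eigenvalue exceeds $|Q|$. I would do this by writing $M = xx^T + M_0 + M_1$, where $xx^T$ contributes the eigenvalue $|Q|$ on its $x_Q$-eigenvector, $M_0$ is the restriction of the "FK00 matrix" to the random part of the graph (roughly a shifted and rescaled $\pm 1$ Wigner matrix), and $M_1$ is a low-rank correction accounting for $i \in Q$. Using Weyl's inequality, $\lambda_2(M)$ is bounded by the sum of top eigenvalues of the three pieces. Standard concentration for random symmetric $\pm 1$ matrices controls $\|M_0\|_{\mathrm{op}} = O(\sqrt{n})$, while the low-rank $M_1$ is bounded by a Rayleigh quotient calculation that exploits $d_{i,Q} \approx |Q|/2$ uniformly over $i$, which is the content of an auxiliary concentration bound (the "technical lemma" in \cref{sec:probbound}). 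Putting the pieces together should give $\lambda_2(M) < |Q|$ with the required probability, hence $\lambda_1(M) = |Q|$ and $\vartheta(\bar{G}_Q) \leq |Q| \leq k + 96\log n$.

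The main obstacle, and the reason the analysis is substantially harder than in the random-planting case, is that the desired statement must hold simultaneously for all $\binom{n}{k} \leq \exp(k\log n)$ adversarial choices of $K$. Every concentration estimate used in the analysis of $M_0$ and $M_1$ (operator norm of the random part, uniform control of the $d_{i,Q}$'s, control of $|Q|$) therefore has to survive a union bound of this size, which forces every intermediate failure probability to be of the form $\exp(-\Omega(k\log n))$ rather than merely $o(1)$. This is what drives the strong quantitative conclusion "with probability at least $1 - \exp(-2k\log n)$" in the statement, and is where most of the technical work goes.
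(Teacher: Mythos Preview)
Your approach is essentially the paper's: enlarge $K$ to $Q$, build the FK00-style dual matrix with $Q$ playing the role of the planted independent set, verify that $x_Q$ is an eigenvector with eigenvalue $|Q|$, and bound $\lambda_2(M)$ via a three-term decomposition (the paper writes $M = U + V + W$) and Weyl's inequality, with every concentration estimate calibrated to beat the $\exp(k\log n)$ union bound over adversarial choices of $K$. One correction worth flagging: the off-diagonal correction piece (the paper's $W$, your $M_1$) is \emph{not} low-rank and is not handled by a Rayleigh-quotient argument; the paper bounds it crudely via $\lambda_1(W)^2 \le \mathrm{tr}(W^2)$, which reduces to controlling $\sum_{i\notin Q}(d(i,Q)-|Q|p)^2$ --- this is exactly \cref{proba}, and it is concentration of the \emph{sum of squared deviations} (not uniform closeness of each individual $d_{i,Q}$ to $|Q|/2$) that is needed and proved.
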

Instead of proving exactly this theorem, we will prove a generalization to other values of $p$.
Let $c \in (0, 1)$ be an arbitrary constant. 
\begin{theorem}\label{Gsemitheta}
    Consider an arbitrary function $w(n)$, such that $n^{2/3} \ll w(n) \leq cn$.
    Let $G \sim AG(n, p, k)$, where $p = w(n)/n$ and $k = Cw(n)^{1/2}$ for constant $C > 0$ large enough ($C = \frac{5}{1 - p}$ suffices).
    Let $a(n, p) := \frac{48}{(1 - p)^2}p\log n$.
    Then $k \leq \vartheta(\oG) \leq k + a(n, p)$, for every possible choice of $k$ vertices by the adversary, with probability at least $1 - \exp(-2k\log n)$.
\end{theorem}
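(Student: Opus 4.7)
The lower bound $\vartheta(\bar{G})\ge k$ is immediate, since $K$ is an independent set of size $k$ in $\bar{G}$ and $\vartheta(\bar{G})\ge\alpha(\bar{G})$. For the upper bound, the plan is to exhibit, for a suitable auxiliary graph, a feasible matrix for the dual formulation of the theta function whose largest eigenvalue is at most $k+a(n,p)$, and then transfer the bound back to $\bar{G}$ by monotonicity.

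The first step is to enlarge $K$ into a (possibly strictly larger) clique $Q\supseteq K$ in $G$ by adjoining every vertex of $G$ that has at least $\rho k$ $G$-neighbours in $K$, for some fixed $\rho\in(p,1)$ bounded away from $p$ (for concreteness $\rho=(1+p)/2$). Since in $G'\sim G(n,p)$ a random vertex has $\mathrm{Bin}(k,p)$-many $G'$-neighbours in $K$ with mean $pk$ bounded below $\rho k$ by a constant factor, Bernstein's inequality (see \cref{sec:probbound}) gives a super-exponentially small probability that any particular vertex clears the threshold $\rho k$. A union bound over all $\binom{n}{k}$ adversarial choices of $K$ and over candidate subsets $S=Q\setminus K$ of size up to $\tfrac{1}{2}a(n,p)$ then yields $|Q|\le k+\tfrac{1}{2}a(n,p)$ with probability at least $1-\exp(-3k\log n)$; this is the content of \cref{Gcloseneigh}, where the precise dependence on $p$ is what produces the prefactor $48/(1-p)^2$ in $a(n,p)$.

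Let $G_Q$ be the graph obtained from $G$ by completing $Q$ into a clique (no actual edges are added, since $Q$ is already a clique in $G$ by construction). By monotonicity of the Lovasz theta function under edge addition to the primal graph, $\vartheta(\bar{G})\le\vartheta(\bar{G_Q})$, so it suffices to show $\vartheta(\bar{G_Q})\le|Q|$. To this end I construct a dual-feasible matrix $M$ for $\bar{G_Q}$ exactly as in~\cite{FK00}: $M_{ij}=1$ on all $G_Q$-edges and on the diagonal; $M_{ij}=-1$ on $\bar{G}$-edges with both endpoints outside $Q$; and $M_{ij}=-d_{j,Q}/(|Q|-d_{j,Q})$ on $\bar{G}$-edges with one endpoint $i\in Q$ and the other $j\notin Q$, where $d_{j,Q}$ denotes the number of $G$-neighbours of $j$ in $Q$. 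A direct row-sum computation verifies that the indicator vector $x_Q$ is an eigenvector of $M$ with eigenvalue $|Q|$; the cross weights were chosen precisely to force this. It then remains to prove $\lambda_1(M)=|Q|$. Decomposing $M=M_Q+M_{\bar Q}+M_{\mathrm{cross}}$ into its $Q\times Q$, $\bar Q\times\bar Q$ and cross blocks, the $Q\times Q$ block contributes $|Q|$ only in the $x_Q$ direction; for $M_{\bar Q}$, which is a $\pm 1$-valued random matrix induced by $G'$ on $V\setminus Q$ with mean $2p-1$, spectral concentration (again a Bernstein-type bound) gives $\|M_{\bar Q}\|=O(\sqrt{pn})$ after removing the all-ones component; and for $M_{\mathrm{cross}}$, the weights $d_{j,Q}/(|Q|-d_{j,Q})$ concentrate sharply around $p/(1-p)$, so that a Rayleigh-quotient argument on vectors orthogonal to $x_Q$ bounds its contribution by $O(\sqrt{pn}/(1-p))$. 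Weyl's inequality then yields $\lambda_2(M)=O(\sqrt{pn}/(1-p))$, which is $o(|Q|)$ because $k=Cw(n)^{1/2}$ with $C\ge 5/(1-p)$, and hence $\lambda_1(M)=|Q|\le k+a(n,p)$.

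The principal obstacle, and the reason the argument needs the sharp Bernstein-type tail bounds of \cref{sec:probbound} rather than softer second-moment methods, is that every probabilistic estimate has to survive a union bound over all $\binom{n}{k}\le\exp(k\log n)$ adversarial placements of $K$ (and, implicitly, over all induced enlarged cliques $Q$); this dictates both the exponent $-2k\log n$ in the final probability and the precise quantitative form of $a(n,p)$. Once $|Q|\le k+a(n,p)$ and $\vartheta(\bar{G_Q})=|Q|$ are established uniformly over adversarial strategies, the theorem follows from $\vartheta(\bar{G})\le\vartheta(\bar{G_Q})$.
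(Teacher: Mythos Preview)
Your high-level strategy matches the paper's---enlarge $K$ to $Q$, exhibit a dual matrix $M$ with $x_Q$ as top eigenvector, and bound $|Q|$ via \cref{Gcloseneigh}---but the execution has two real gaps. First, you set $M_{ij}=-1$ on non-edges outside $Q$, which is the \cite{FK00} choice specific to $p=\tfrac12$; for general $p$ the paper uses $-p/(1-p)$, so that those entries are mean-zero. With your choice $M_{\bar Q}$ has mean $2p-1$, and for $p>\tfrac12$ (permitted, since $c\in(0,1)$ is arbitrary) its rank-one ``all-ones component'' contributes $(2p-1)(n-|Q|)=\Theta(n)\gg k$ to $\lambda_1(M_{\bar Q})$; ``removing'' that component is not compatible with the Weyl bound you invoke, since the subtracted rank-one piece must reappear somewhere in the decomposition. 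Second, your block decomposition $M=M_Q+M_{\bar Q}+M_{\mathrm{cross}}$ makes every summand depend on $Q$ and hence on the adversary, so each spectral bound must survive a union bound over $\binom{n}{k}$ placements---yet Vu-type eigenvalue bounds for random matrices give only polynomially small failure probability. The paper sidesteps this with a different additive decomposition $M=U+V+W$ in which $U$ is the centered adjacency matrix of \emph{all} of $G'$ and is therefore adversary-independent (no union bound needed for it); $V$ is the $Q\times Q$ correction, union-bounded via the Alon--Krivelevich--Vu concentration inequality; and $W$ is the small cross-correction, whose spectral norm is bounded through $\operatorname{tr}(W^2)$ and the dedicated \cref{proba}. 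Your ``Rayleigh-quotient argument'' for the cross block is too vague to substitute for the latter, which is the technical heart of the proof.

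A minor point: $Q$ is not a clique by construction---a vertex with only $\rho k<k$ neighbours in $K$ need not be adjacent to all of $K$, nor to the other vertices of $Q\setminus K$. This is harmless, since either one adds the missing edges and invokes monotonicity as you intend, or one observes directly that the all-ones $Q$-block keeps $M$ dual-feasible for $\bar G$ regardless; but the assertion as written is false.
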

This theorem has a very important corollary, which follows from the Lipschitz property of Lovasz theta function \cite{Lovasz}.
\begin{corollary}\label{Gaddtheta}
    Let $p$ and $k$ be as in \cref{Gsemitheta}, and let $K \subset V$ be the vertices belonging to the planted clique of $G\sim AG(n, p, k)$. 
    Then, with probability at least $1 - \exp(-2k\log n)$,
    \begin{enumerate}[label=(\roman*)]
        \item for every subset $T\subset K$, 
        \[k - |T| \leq \vartheta(\overline{G\setminus T}) \leq k  - |T| + a(n, p),\]
        where $G\setminus T$ denotes the graph $G$ with vertices from $T$ deleted;
        \item for every subset $S\subset V\setminus K$, if we ``add'' $S$ to the planted clique by drawing all edges between $S$ and $S\cup K$, for the resulting graph $G_S$
        \[k + |S| \leq \vartheta(\overline{G_S}) \leq k  + |S| + a(n, p).\]
    \end{enumerate}
\end{corollary}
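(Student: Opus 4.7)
The plan is to derive both parts of the corollary by reinterpreting the modified graph as a new instance of the adversarial planted clique model and invoking \cref{Gsemitheta}, using the Lipschitz property $|\vartheta(H) - \vartheta(H - v)| \leq 1$ of the Lovasz theta function (from \cite{Lovasz}) to absorb any parameter mismatch that survives the reinterpretation.

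Part (ii) is direct. The graph $G_S$ is obtained from the same random graph $G' \sim G(n, p)$ as $G$ by planting the enlarged clique $K \cup S$ of size $k + |S|$; since the adversary could legitimately have chosen $K \cup S$ instead of $K$ (both are valid functions of $G'$), $G_S$ is itself a sample from $AG(n, p, k+|S|)$. Applying \cref{Gsemitheta} with parameter $k + |S|$ in place of $k$ yields the upper bound $\vartheta(\overline{G_S}) \leq (k+|S|) + a(n,p)$, while the matching lower bound $\vartheta(\overline{G_S}) \geq k + |S|$ follows from $\omega(G_S) \geq |K \cup S|$. Because \cref{Gsemitheta}'s guarantee is already uniform over adversarial strategies, no extra union bound over $S$ is needed.

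For part (i), the lower bound $\vartheta(\overline{G \setminus T}) \geq k - |T|$ follows immediately from $K \setminus T$ being a clique of that size in $G \setminus T$. For the upper bound, I would view $G \setminus T$ as a sample from $AG(n - |T|, p, k-|T|)$: the restriction $G' \setminus T$ is distributed as $G(n - |T|, p)$, and $K \setminus T$ is a legitimate adversarial planting in it. Whenever the reduced parameters still satisfy \cref{Gsemitheta}'s hypotheses, the theorem directly delivers $\vartheta(\overline{G \setminus T}) \leq (k - |T|) + a(n - |T|, p) \leq (k - |T|) + a(n, p)$, the second inequality using monotonicity of $a(\cdot, p)$ in its first argument.

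The main obstacle is the complementary regime in which $k - |T|$ dips below \cref{Gsemitheta}'s threshold, where a naive combination of the Lipschitz bound and \cref{Gsemitheta} only gives $\vartheta(\overline{G \setminus T}) \leq \vartheta(\oG) \leq k + a(n, p)$, looser than the claim. To close this gap I would reopen the proof of \cref{Gsemitheta} and re-examine the specific dual certificate matrix $M$ built there, showing that its principal submatrix $M|_{V \setminus T}$ (obtained by deleting the rows and columns indexed by $T \subset K \subset Q$, with $Q$ the extended clique from that proof) remains a valid dual certificate for $\overline{G \setminus T}$ and satisfies $\lambda_1(M|_{V \setminus T}) \leq |Q| - |T| \leq (k - |T|) + a(n, p)$. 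This step reruns the eigenvalue analysis of $M$ with $Q$ replaced by $Q \setminus T$ throughout, and relies on the $\exp(-2k \log n)$ failure probability of \cref{Gsemitheta} comfortably absorbing a union bound over the at most $2^k$ subsets $T$; the ``for every adversarial strategy'' clause of \cref{Gsemitheta} covers the fact that the effective adversary on $G' \setminus T$ may depend on the deleted edges.
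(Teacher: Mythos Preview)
Your treatment of part (ii) is correct. The paper's own argument is slightly different but no more involved: since $G_S$ and $G$ coincide on $V\setminus S$, the Lipschitz property gives
\[
\vartheta(\overline{G_S}) \;\leq\; \vartheta(\overline{G_S\setminus S}) + |S| \;=\; \vartheta(\overline{G\setminus S}) + |S| \;\leq\; \vartheta(\overline{G}) + |S| \;\leq\; k + |S| + a(n,p),
\]
where the third inequality is vertex-deletion monotonicity and the last is \cref{Gsemitheta}. Your route through $AG(n,p,k+|S|)$ is a legitimate alternative.

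For part (i), your plan to reinterpret $G\setminus T$ as an instance of $AG(n-|T|,p,k-|T|)$ and invoke \cref{Gsemitheta} is exactly right in the regime where $k-|T|$ remains above the threshold, and that is the only regime the paper ever uses (in the proof of \cref{Gsemialgo} one has $|T|\le 4a(n,p)$). However, your proposed workaround for the below-threshold regime cannot succeed, because the statement itself is \emph{false} there. Take the extreme case $T=K$: then $G\setminus T = G'[V\setminus K]$ is a $G(n-k,p)$ random graph, and by Juh\'asz's theorem $\vartheta(\overline{G\setminus K})=\Theta(\sqrt{np})$, far exceeding $k-|T|+a(n,p)=a(n,p)=O(p\log n)$. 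Since any feasible dual matrix $M'$ for $\overline{G\setminus T}$ satisfies $\lambda_1(M')\ge\vartheta(\overline{G\setminus T})$, the principal submatrix $M|_{V\setminus T}$ cannot have $\lambda_1\le |Q|-|T|$ as you claim; rerunning the eigenvalue analysis with $Q$ replaced by $Q\setminus T$ will not (and cannot) close this gap.

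Note that the paper's one-line ``follows from the Lipschitz property'' does not deliver the part (i) upper bound either: Lipschitz and monotonicity only give $\vartheta(\overline{G\setminus T})\le\vartheta(\overline{G})\le k+a(n,p)$, not $k-|T|+a(n,p)$. The corollary as written is thus overstated; it holds, via your argument, precisely when $k-|T|$ stays above the threshold of \cref{Gsemitheta}, and this suffices for the application.
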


We now prove \cref{Gsemitheta}.
For $G \sim AG(n, p, k)$, its complement graph $\oG$ contains a planted independent set of size $k$, so $\vartheta(\oG) \geq \alpha(\oG) \geq k$.
It remains to prove the upper bound.
We will use the formulation of the theta function as an eigenvalue minimization problem: \begin{equation}
       \vartheta(G) = \min_M[\lambda_1(M)]
\end{equation}
Here $M$ ranges over all $n$ by $n$ symmetric matrices in which $M_{ij} = 1$ whenever $(i, j) \not\in E$, and $\lambda_1(M)$ denotes the largest eigenvalue of $M$.

The following proposition will be used in the proof of \cref{Gsemitheta}.

\begin{proposition}\label{Gneigh}
    Let $k$ and $p$ be as in \cref{Gsemitheta}.
    Let $G ' \sim G(n, p)$, $G' = (V, E)$.
    Let $\mu p < \nu \leq \mu \in (0, 1]$ be arbitrary constants.
    For any $t \geq 0$, for every set $Q\subset V$ of size $(\mu + o(1))k$, there are at most $g(n, p, \mu, \nu, t) := \frac{6\mu }{(\nu - p\mu)^2}p\log n + t$ vertices from $V\setminus Q$ that have at least $(\nu - o(1))k$ neighbors in $Q$, with probability at least $1 - \exp\left(-\frac{(\nu - p\mu)^2tk}{3\mu p}\right)$.
    Here $o(1)$ is any function of $n$ tending to $0$.
\end{proposition}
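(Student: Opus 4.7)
The plan is to exploit the independence of the events ``$v$ has many neighbors in $Q$'' across vertices $v\notin Q$, apply a Chernoff-type estimate twice (first per vertex, then to the total count), and finally absorb a union bound over the choice of $Q$ into the surplus of the second estimate.

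Fix $Q \subset V$ with $|Q| = (\mu+o(1))k$ and, for each $v \in V\setminus Q$, set $X_v := \I{|N_{G'}(v) \cap Q| \geq (\nu - o(1))k}$. Since $G' \sim G(n, p)$, the edges from distinct vertices $v \notin Q$ into $Q$ are mutually independent, so $\{X_v\}_{v \notin Q}$ is a family of independent Bernoulli variables. Each $|N_{G'}(v) \cap Q| \sim \Bin(|Q|, p)$ has mean $(\mu + o(1))pk$, and the threshold $(\nu - o(1))k$ exceeds this mean by a multiplicative factor $1+\delta$ with $\delta = (\nu - \mu p)/(\mu p) + o(1)$. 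The multiplicative Chernoff bound thus gives
\[
    \P{X_v = 1} \;\leq\; q \;:=\; \exp\!\left(-\frac{(\nu - \mu p)^2 k}{3\mu p}(1-o(1))\right).
\]
Abbreviate $\lambda := (\nu - \mu p)^2/(3\mu p)$, so $q \leq e^{-\lambda k(1-o(1))}$.

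By independence, the count $|B_Q| := |\{v \notin Q : X_v = 1\}|$ is stochastically dominated by $\Bin(n, q)$. Apply the standard tail bound $\P{\Bin(n,q) \geq m} \leq \binom{n}{m} q^m \leq (enq/m)^m$, which is effective once $m \geq enq$. Set $m = g := \tfrac{2\log n}{\lambda} + t$. The assumption $w(n) \gg n^{2/3}$ together with $k = C\sqrt{w(n)}$ gives $\lambda k \gg \log n$, hence $enq \ll 1 \ll g$, and a short calculation yields
\[
    \P{|B_Q| \geq g} \;\leq\; \exp\!\big(-g\lambda k\,(1-o(1))\big) \;\leq\; \exp\!\big(-(2k\log n + \lambda t k)(1-o(1))\big).
\]
A final union bound over all sets $Q$ of size $(\mu + o(1))k$, of which there are at most $\binom{n}{|Q|} \leq \exp(\mu k \log n\,(1+o(1)))$, gives total failure probability at most
\[
    \exp\!\big(-(2-\mu-o(1))\,k\log n - \lambda t k (1-o(1))\big) \;\leq\; \exp(-\lambda t k),
\]
using $\mu \leq 1$ and $k\log n \to \infty$, which is exactly the claim.

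The main obstacle is tuning constants so that the per-$Q$ Chernoff bound saves the additional factor $\exp(-2k\log n)$ beyond the required $\exp(-\lambda tk)$: this surplus is precisely what enables absorbing the $\binom{n}{\mu k}$ choices of $Q$, and it is why $g$ must include the additive term $\frac{6\mu p}{(\nu-p\mu)^2}\log n = \frac{2\log n}{\lambda}$ beyond $t$. A minor secondary point is that the Chernoff estimate used for the single-vertex bound is cleanest in the regime $\delta \leq 1$; when $\delta > 1$ the Chernoff exponent is different in form, but it still suffices for the subsequent sum-level Chernoff bound, since one only needs $q \leq e^{-\Omega(\lambda k)}$ to push the calculation through.
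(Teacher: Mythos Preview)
Your argument is essentially the paper's: both amount to a Chernoff estimate followed by a union bound over the $\binom{n}{m}$ candidate ``bad'' sets $I$ and the choices of $Q$, except that you package the union over $I$ implicitly as the binomial tail bound $\P{\Bin(n,q)\ge g}\le\binom{n}{g}q^g$ on the count $|B_Q|$, whereas the paper fixes $I$ explicitly and applies Chernoff to the total edge count $e(I,Q)$. Since the per-vertex events are independent, $\P{T(I,Q)}=q^m$ exactly, so the two routes give the same inequality.

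One bookkeeping point is worth flagging. You invoke the per-vertex Chernoff form with constant $3$ in the denominator from the outset, so that your exponent is already $\lambda k$ with the target $\lambda=(\nu-\mu p)^2/(3\mu p)$. This leaves no slack to absorb the factor $(en/g)^g$ coming from $\binom{n}{g}$: your displayed bound carries a $(1-o(1))$ on the $\lambda tk$ term, and the final step ``$\le\exp(-\lambda tk)$'' then needs $(2-\mu)k\log n\gtrsim t\log n$, i.e.\ $t=O(k)$, which is not part of the hypothesis. The paper sidesteps this by starting from the tighter Chernoff constant $2$ and only then relaxing to $3$, using the difference precisely to swallow the $2m\log n$ from the union over $I$; with that adjustment your computation closes for all $t\ge0$. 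Your remark about the $\delta>1$ regime of Chernoff applies equally to the paper's use of the same simplified form.
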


\begin{proof}
    For convenience, we will consider the size of $Q$ to be exactly $\mu k$, and consider the set of vertices that have at least $\nu k$ neighbors in $Q$, as addition of $o(1)$-function does not affect anything in the proof. We shall also use $g(n, p, t)$ as shorthand notation for $g(n, p, \mu, \nu, t)$.
    
    Fix some set $Q\subset V$ of size $\mu k$, and a set $I \subset V\setminus Q$ of size $m$. Let $T(I, Q)$ denote the event that every vertex in $I$ has at least $\nu k$ neighbors in $Q$.
	Consider a random bipartite graph with parts $I$ and $Q$ and edge probability $p$, and let $e(I, Q)$ be the number of edges between $I$ and $Q$.
	It is clear that $\E{e(I, Q)} = m\mu kp$, and the event $T(I, Q)$ implies the event $\{e(I, Q) \geq m\nu k\}$.
	Hence
	\begin{multline*}
	\P{T(I, Q)} \leq \P{e(I, Q) \geq m\nu k } = \\
	=\P{e(I, Q) \geq \E{e(I, Q)} + m\cdot (\nu - p\mu)k} \leq \\ \leq
	2\exp\left(-\frac{\left(\nu - p\mu\right)^2mk}{2\mu p}\right).
	\end{multline*}
	There are $\binom{n}{m}\leq \left(\frac{ne}{m}\right)^m\leq \exp(2m\log n)$ possible vertex sets $I$, and $\binom{n}{k} \leq \exp(2k\log n)$ possible subsets $Q$. Let $T_m$ be the event that for at least one such choice of $I$ and $Q$ the event $T(I, Q)$ holds.
	By union bound,
	\[   \P{T_m} \leq 2\exp\left(2k\log n + m \cdot \left(2\log n -\frac{(\nu - p\mu)^2k}{2 \mu p}\right)\right).\]
	Since $k = Cw(n)^{1/2}$, $p = w(n)/n$ and $w(n) = O(n)$, $\frac{k}{p} = \Omega(\sqrt{n})$, so $2\log n -\frac{(\nu - p\mu)^2k}{2 \mu p} \leq -\frac{(\nu - p\mu)^2k}{3 \mu p}$,
	and $\P{T_m} \leq \exp\left(2k\log n - m\cdot\frac{(\nu - p\mu)^2k}{3\mu p}\right)$.
	It is clear that $2k\log n - m\cdot\frac{(\nu - p\mu)^2k}{3\mu p} < 0$ if and only if $m > \frac{6\mu }{(\nu - p\mu)^2}p\log n$,
	so if $m > g(n, p, t) := \frac{6 \mu}{(\nu - p\mu)^2}p\log n + t$, then $\P{T_m} \leq \exp\left(-\frac{(\nu - p\mu)^2tk}{3\mu p}\right)\xrightarrow{n\to\infty}0$.
	Therefore, 
	with probability at least $1 - \exp\left(-\frac{(\nu - p\mu)^2tk}{3\mu p}\right)$ for every set $Q$ of size $\mu k$, there are at most $g(n, p, t) = g(n, p, \mu, \nu, t)$ vertices from $V\setminus Q$ that have at least $\nu k $ neighbors in $Q$.
\end{proof}

By setting $\mu = 1$ and $\nu = \frac{1 + p}{2}$ and choosing $t = \frac{24p}{(1-p)^2}\log n$ we get an immediate corollary.

\begin{corollary}\label{Gcloseneigh}
    With probability at least $1 - \exp(-2k\log n)$ over choice of $G'$, for every $S\subset V$, $|S| = k$, there are at most $a(n, p) := \frac{48}{(1 - p)^2}p\log n$ vertices from $V\setminus S$ with at least $\frac{1 + p}{2}k$ neighbors in $Q$.
\end{corollary}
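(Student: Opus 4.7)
The plan is to obtain Corollary \ref{Gcloseneigh} as a direct specialization of Proposition \ref{Gneigh}. Since the proposition quantifies, for arbitrary constants $\mu p < \nu \le \mu \le 1$ and any $t \ge 0$, the number of vertices outside a set $Q$ of size $(\mu + o(1))k$ that have at least $(\nu - o(1))k$ neighbors inside $Q$, the task reduces to choosing the three parameters $\mu$, $\nu$, $t$ so that the resulting bound exactly matches $a(n,p)$ and the failure probability exactly matches $\exp(-2k \log n)$.

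First, I would set $\mu = 1$, which makes the quantifier range over every set $Q \subset V$ of size $k$, matching the statement of the corollary. Next, I would set $\nu = \tfrac{1+p}{2}$, so that the neighbor threshold $\nu k$ becomes $\tfrac{1+p}{2} k$, and compute $\nu - p\mu = \tfrac{1+p}{2} - p = \tfrac{1-p}{2}$, hence $(\nu - p\mu)^2 = \tfrac{(1-p)^2}{4}$. With these choices the ``deterministic'' part of the bound from Proposition \ref{Gneigh} becomes
\[
\frac{6\mu}{(\nu - p\mu)^2}\, p \log n \;=\; \frac{6}{(1-p)^2/4}\, p \log n \;=\; \frac{24\, p}{(1-p)^2}\log n,
\]
which is exactly $\tfrac{1}{2} a(n,p)$.

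Then I would pick $t$ so that the additive $+\,t$ term brings $g(n,p,\mu,\nu,t)$ up to $a(n,p)$; namely $t = \tfrac{24 p}{(1-p)^2}\log n$. Substituting into the failure probability gives
\[
\exp\!\left(-\frac{(\nu - p\mu)^2\, t\, k}{3\mu p}\right) = \exp\!\left(-\frac{\frac{(1-p)^2}{4} \cdot \frac{24 p}{(1-p)^2}\log n \cdot k}{3 p}\right) = \exp(-2 k \log n),
\]
which matches the claimed probability bound.

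Finally, I would apply Proposition \ref{Gneigh} with the parameters above: the conclusion of the proposition states that, with probability at least $1 - \exp(-2k\log n)$ over the choice of $G' \sim G(n,p)$, simultaneously for every $S \subset V$ with $|S| = k$, there are at most $a(n,p)$ vertices in $V \setminus S$ with at least $\tfrac{1+p}{2} k$ neighbors in $S$, which is exactly Corollary \ref{Gcloseneigh}. There is no real obstacle here beyond careful bookkeeping of constants; the only minor point worth checking is that the ``$o(1)$'' slack in the statement of Proposition \ref{Gneigh} is harmless, since we are applying the bound with $|S|$ exactly equal to $k$ and with the threshold exactly $\tfrac{1+p}{2} k$, so the $o(1)$ terms simply do not appear.
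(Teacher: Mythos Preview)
Your proposal is correct and is exactly the argument the paper gives: the paper simply states ``By setting $\mu = 1$ and $\nu = \frac{1 + p}{2}$ and choosing $t = \frac{24p}{(1-p)^2}\log n$ we get an immediate corollary,'' and you have carried out precisely this substitution with the same verification of constants.
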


Let $G = (V, E)$ where $V = [n]$.
Let $Q$ be the set of all vertices with at least $\frac{1 + p}{2}k$ neighbors in the planted clique $K$.
By \cref{Gcloseneigh} $k' := |Q| \leq k + a(n, p)$.
We number the vertices of $V$ in such a way that $Q = [k']$, and the planted $k$-clique of $G$ is $[k]$.

We derive an upper bound on $\vartheta(\oG)$ by presenting a particular matrix $M$, for which $\vartheta(\oG) \leq \lambda_1(M) \leq k' \leq k + a(n, p)$. We use $d(i, Q)$ to denote the number of edges between the vertex $i \in [n]\setminus[k']$ and the set $Q = [k']$.
The symmetric matrix $M$ we choose is as follows. 

\begin{itemize}
    \item The upper left $k' \times k'$ block is all-ones matrix of order $k'$.
    \item The lower right block of size $(n - k') \times (n - k')$, denoted by $C$, is defined as $c_{ij} = 1$ if $(i, j) \in E$ and $c_{ij} = -\frac{p}{1-p}$ if $(i,j) \not\in E$.
    \item The lower left block is an $(n - k') \times k'$ matrix $B$. For this matrix, $b_{ij} = 1$ if $(i, j) \in E$, and $b_{ij} = -d(i, Q)/(k' - d(i, Q))$ if $(i,j) \not\in E$. Observe that that every row of $B$ sums up to zero. 
    \item The upper right block is the transpose of the lower right block $B$.
\end{itemize}
We rewrite $b_{ij}$ for $(i, j) \notin E$ in the following way:
\[    b_{ij} = x_i - \frac{p}{1 - p}, \qquad \qquad  x_i = \frac{k'p - d(i, Q)}{(1 - p)(k' - d(i, Q))}.\]

The vector with $1$ in its first $k'$ entries and $0$ in other $n - k'$ coordinates is an eigenvector of $M$ with eigenvalue $k'$.
To show that $k'$ is the largest eigenvalue, it suffices to prove that $\lambda_2(M) < k'$.	
We represent $M$ as a sum of three symmetric matrices $M = U + V + W$, and apply Weyl theorem \cite{HJ12}:

$$\lambda_2(M) \leq \lambda_1(U) + \lambda_2(V + W) \leq \lambda_1(U) + \lambda_2(V) + \lambda_1(W).$$

Matrices $U$, $V$ and $W$ are as follows.
\begin{itemize}
    \item The matrix $U$ is derived from the adjacency matrix of the original graph $G' \sim G(n,p)$. 
    $U_{ii} = 0$ for all $i$, $U_{ij} = 1$ if $(i, j)\in E$ (in $G'$), and $U_{ij} = -p/(1 - p)$ for all other $i \not =j$.
    \item Matrix $V$ describes the modification that $G'$ undergoes by planting the clique $K$ and extending it to $Q$. For $i, j \leq k'$ we have $V_{ij} = 1 - U_{ij}$, which is $1/(1 - p)$ if $(i, j)$ was not an edge of $G'$. All other entries are~0.
    \item The matrix $W$ is the correction matrix for having the row sums of $B$ equal to~0. In its lower left block ($i > k'$ and $j \le k'$), $W_{ij} = 0$ if $b_{ij} = 1$ and $W_{ij} = x_i$ if $b_{ij} = x_i - p/(1 - p)$. Its upper right block is the transpose of the lower left block. All other entries are~0. 
\end{itemize}

\begin{claim}\label{eigenclaim}
    With probability at least $1 - \exp(-2k\log n)$, for every possible choice of $k$ vertices by adversary, we have
    \[\lambda_1(U) \leq \frac{2 + o(1)}{\sqrt{1 - p}}w(n)^{1/2}, \qquad \lambda_2(V) = o(k'), \qquad \lambda_1(W) \leq \frac{2 + o(1)}{1 - p}w(n)^{1/2}.\]
\end{claim}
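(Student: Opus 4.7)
The plan is to bound each of the three eigenvalues separately, using as a common workhorse the standard random-matrix bound
$\|A-pJ\| \leq (2+o(1))\sqrt{np(1-p)} = (2+o(1))\sqrt{w(n)(1-p)}$,
valid with extremely high probability for $G'\sim G(n,p)$ under our assumption $w(n)\gg n^{2/3}$ (see e.g.\ Feige--Ofek); here $A$ is the adjacency matrix of $G'$ and $J$ the all-ones matrix. Since this bound involves only $G'$, it automatically holds uniformly over every adversarial choice of $K$. For $\lambda_1(U)$, I would simply write $U = (1-p)^{-1}(A - pJ + pI)$ and apply the triangle inequality:
$\lambda_1(U) \leq (1-p)^{-1}(\|A-pJ\| + p) \leq \tfrac{2+o(1)}{\sqrt{1-p}}\sqrt{w(n)}$, the $p$-contribution being lower order.

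For $\lambda_2(V)$, note that $V$ is supported on the $k'\times k'$ block indexed by $Q$, so $\lambda_2(V) = \max(0,\lambda_2(V|_{Q\times Q}))$. A direct check of entries (using $V_{ii}=1$, $V_{ij}=1/(1-p)$ on non-edges of $G'$, $V_{ij}=0$ on edges of $G'$) gives the decomposition $V|_{Q\times Q} = J_{k'} - \tfrac{1}{1-p}(A_Q - pJ_{k'} + pI_{k'})$, where $A_Q$ is the adjacency matrix of $G'$ induced on $Q$. Weyl's inequality, together with $\lambda_2(J_{k'})=0$, then yields $\lambda_2(V|_{Q\times Q}) \leq (1-p)^{-1}(\|A_Q-pJ_{k'}\| + p)$. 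The key point is that $A_Q - pJ_{k'}$ is a principal submatrix of $A - pJ$, so its spectral norm is at most $\|A-pJ\|$ — a purely deterministic fact, holding no matter which set $Q$ the adversary's strategy produces. Combined with the random-matrix bound this gives $\lambda_2(V) \leq (2+o(1))\sqrt{w(n)/(1-p)}$, which is $o(k')$ once the constant $C$ in $k=Cw(n)^{1/2}$ is taken large enough relative to $(1-p)^{-1/2}$.

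The estimate on $\lambda_1(W)$ is the technical heart of the claim and the main obstacle. Since $W$ is block bipartite with only an $(n-k')\times k'$ lower-left block $W'$ (and its transpose) nonzero, $\lambda_1(W) = \|W'\|$. The plan is to bound $\|W'\|$ by its Frobenius norm; using $W'_{ij} = x_i\,\mathbf{1}[(i,j)\notin E]$ and the formula $x_i=(k'p-d(i,Q))/((1-p)(k'-d(i,Q)))$ one has
\[
\|W'\|_F^2 \;=\; \sum_{i\notin Q}\frac{(k'p - d(i,Q))^2}{(1-p)^2\bigl(k' - d(i,Q)\bigr)}.
\]
The crucial structural observation is that every $i\notin Q$ satisfies $d(i,K)<\tfrac{1+p}{2}k$ by the very definition of $Q$; since $|Q\setminus K|\leq a(n,p)=O(\log n)$, one also gets $d(i,Q) \leq \tfrac{1+p}{2}k'(1+o(1))$, hence $k'-d(i,Q)\geq \tfrac{1-p}{3}k'$ for all $n$ large — and this is precisely the reason $M$ is built on $Q$ rather than $K$, so as to keep the denominator uniformly bounded away from zero no matter what $K$ the adversary picks. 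Pulling this lower bound out of the denominator gives $\|W'\|_F^2 \leq \tfrac{3}{(1-p)^3\,k'}\sum_{i\notin Q}(d(i,Q)-k'p)^2$, so it remains to show that $\sum_{i\notin Q}(d(i,Q)-k'p)^2 \leq (1+o(1))(n-k')k'p(1-p)$ uniformly over every possible $Q$. For each fixed subset $S$ of $[n]$ of size $k'$, the bipartite edges between $V\setminus S$ and $S$ are i.i.d.\ Bernoullis, so Bernstein's inequality concentrates $\sum_{i\notin S}(d(i,S)-k'p)^2$ sharply around its mean $(n-k')k'p(1-p)$ with failure probability $\exp(-\Omega(n))$; since $\binom{n}{k'} \leq \exp(O(\sqrt{w(n)}\log n))$ and $\sqrt{w(n)}\log n \ll n$ in our regime, a union bound preserves the estimate for whichever $Q$ the adversary produces. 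Combining everything yields $\lambda_1(W) \leq \|W'\|_F \leq (2+o(1))\sqrt{w(n)}/(1-p)$, completing the claim.
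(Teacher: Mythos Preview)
Your treatment of $\lambda_1(U)$ is fine and essentially matches the paper (which uses Vu's bound plus Alon--Krivelevich--Vu eigenvalue concentration rather than Feige--Ofek, but to the same effect). There are, however, two genuine gaps in the other two bounds.

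\medskip
\textbf{The bound on $\lambda_2(V)$.} Your submatrix argument is clean and avoids the union bound, but it only gives $\lambda_2(V)\le (2+o(1))\sqrt{w(n)/(1-p)}$, which is $\Theta(k')$, not $o(k')$. Taking the constant $C$ large makes this a smaller \emph{fraction} of $k'$ but never makes the ratio tend to zero; your parenthetical ``which is $o(k')$ once $C$ is large enough'' is simply false. The paper gets a genuinely smaller bound by applying the random-matrix estimate to the $k'\times k'$ centered block itself, obtaining $O(\sqrt{k'p})=O(w(n)^{3/4}/\sqrt{n})=o(\sqrt{w(n)})$, and then paying the union bound over all $\binom{n}{k}$ adversary choices via the eigenvalue-concentration theorem. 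Your weaker bound would still suffice for the enclosing theorem with a larger $C$, but it does not establish the claim as stated.

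\medskip
\textbf{The bound on $\lambda_1(W)$.} This is the more serious gap. Your asserted failure probability $\exp(-\Omega(n))$ for the Bernstein step is wrong. The summands $U_i=(d(i,S)-k'p)^2$ are only bounded by $L\approx k'^2\approx w(n)$, and $\sum_i\E[U_i^2]\le (n-k')k'^2p$. Plugging into Bernstein with deviation $t=(n-k')k'p/\gamma$, the term $Lt/3$ dominates the denominator and the exponent is of order $(n-k')p/(\gamma k')\approx \sqrt{w(n)}/\gamma$, so for any $\gamma\to\infty$ the failure probability is only $\exp(-o(\sqrt{w(n)}))$. This does \emph{not} survive the union bound over $\binom{n}{k'}\approx\exp(\Theta(\sqrt{w(n)}\log n))$ sets. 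The paper's proof (\cref{proba}) first isolates the vertices with large $|d(i,Q)-k'p|$ into dyadic ``levels'' and bounds their total contribution combinatorially (\cref{levels}); only then does it apply Bernstein to the \emph{truncated} variables $\min(U_i,2kp\log n)$, whose much smaller uniform bound $L=2kp\log n$ makes the exponent large enough to beat the union bound. This truncation is not a technicality but the heart of the argument, and your sketch omits it entirely.
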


To bound the eigenvalues of $U$, $V$ and $W$, we shall use upper bounds on the eigenvalues of random matrices, as appear in~\cite{V07}.
\begin{theorem}\label{eigbound}
	There are constants $C'$ and $C''$ such that the following holds.
	Let $a_{ij}$, $i, j \in [n]$ be independent random variables, each of which has mean $0$ and variance at most $\sigma^2$ and is bounded in absolute value by $L$, where $\sigma \geq C''L\frac{\log^2n}{\sqrt{n}}$.
	Let $A$ be the corresponding $n\times n$ matrix.
	Then with probability at least $1 - O(1/n^3)$,
	\[     \lambda_1(A) \leq 2\sigma\sqrt{n} + C'(L\sigma)^{1/2}n^{1/4}\log n.\]
	The bound holds regardless of what the diagonal elements of $A$ are, since by subtracting the diagonal we may decrease the eigenvalues at most by $L$.
\end{theorem}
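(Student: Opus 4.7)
The plan is to prove this via the \emph{trace moment method} in the style of F\"uredi--Koml\'os and its quantitative refinement by Vu. Assume $A$ is symmetric (which is the setting where this theorem is used; the diagonal remark at the end handles shifts of size at most $L$). The starting point is the elementary inequality $\lambda_1(A)^{2k} \le \tr(A^{2k})$ valid for any positive integer $k$ and any symmetric $A$. Thus it will suffice to produce a sharp upper bound on $\E[\tr(A^{2k})]$ for $k$ chosen as a slowly growing function of $n$ (one should take $k$ of order $\log n$ so that Markov's inequality can absorb an $n^3$ loss and still yield the desired $1 - O(1/n^3)$ probability).

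The next step is to expand
\[
\tr(A^{2k}) \;=\; \sum_{i_0,i_1,\dots,i_{2k-1}} a_{i_0 i_1}\, a_{i_1 i_2}\cdots a_{i_{2k-1} i_0},
\]
and to interpret each term as a closed walk of length $2k$ on the complete graph on $[n]$. By independence and the mean-zero assumption, any walk in which some edge appears an odd number of times contributes $0$ in expectation. So only walks whose edge-multiset has every edge of multiplicity at least $2$ survive. For such a walk using $r$ distinct edges with multiplicities $2m_1,\dots,2m_r$ (here I ignore odd multiplicities $\ge 3$ at first; they are handled analogously), the contribution is bounded by $\prod_j \E|a_e|^{2m_j} \le \prod_j \sigma^2 L^{2m_j-2}$, since $|a_e|\le L$ and $\E a_e^2\le\sigma^2$.

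The heart of the argument is then the combinatorial count of walks by ``shape.'' The dominant contribution comes from \emph{tree-like} walks: those in which each edge is used exactly twice and the edges traversed form a tree on $k+1$ vertices, visited in an Euler-tour order. The number of such walks is $(n)_{k+1} C_k \sim n^{k+1} C_k$, where $C_k = \binom{2k}{k}/(k+1)$ is the Catalan number satisfying $C_k \le 4^k$, giving a main term of at most $n(2\sigma\sqrt{n})^{2k}$. All remaining walks are ``non-tree'' — either they use strictly fewer than $k+1$ vertices, or some edge is used $\ge 4$ times. Each such deviation costs a factor of roughly $L/(\sigma\sqrt n)$ per defect, so the total non-tree contribution can be bounded by $n(2\sigma\sqrt n)^{2k}\bigl(1 + O(k^2 L/(\sigma\sqrt n))\bigr)^{k}$. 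Taking $2k$-th roots and applying Markov with $k\asymp \log n$ turns the $k^2 L/(\sigma\sqrt n)$ correction into the additive $C'(L\sigma)^{1/2}n^{1/4}\log n$ term of the theorem, provided $\sigma \ge C'' L (\log n)^2/\sqrt n$ (this is exactly the hypothesis ensuring that the correction is small enough to survive the root extraction).

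The hard part will be the bookkeeping in the non-tree count: one must parametrize each walk by its \emph{skeleton graph} (the multigraph of distinct edges with their multiplicities), bound the number of walks realizing a given skeleton, and then sum over skeletons while keeping track of the number of distinct vertices and edges. This is delicate because the ``defects''—extra vertices merging, extra edge traversals, and odd multiplicities $\ge 3$—interact and one must show that each defect really does cost a uniform factor of $L/(\sigma\sqrt n)$, not some weaker amount. Once that enumeration is in place, the concentration follows by a straightforward Markov step: $\P[\lambda_1(A) > t] \le t^{-2k}\E[\tr(A^{2k})]$ with $t = 2\sigma\sqrt n + C'(L\sigma)^{1/2}n^{1/4}\log n$ and $2k \asymp \log n$ yields probability $O(1/n^3)$, completing the proof. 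The remark about diagonal entries is immediate: replacing the diagonal of $A$ by zero changes $\lambda_1$ by at most $L$, which is absorbed into the error term.
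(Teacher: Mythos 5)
First, a point of comparison: the paper does not prove this theorem at all --- it is imported verbatim from Vu's work on spectral norms of random matrices (reference \cite{V07}), so there is no internal proof to measure your sketch against. Your proposal does follow the route by which the result is actually proved (trace moments, F\"uredi--Koml\'os walk counting, Vu's refinement), but the one concrete quantitative decision it makes is wrong, and the error is fatal to the stated bound. Taking $k \asymp \log n$ cannot yield the additive error $C'(L\sigma)^{1/2}n^{1/4}\log n$: with $\mathbb{E}\bigl[\tr(A^{2k})\bigr] \approx n(2\sigma\sqrt{n})^{2k}$ and threshold $t=(1+\varepsilon)\cdot 2\sigma\sqrt{n}$, Markov gives failure probability about $n(1+\varepsilon)^{-2k}$, and forcing this below $n^{-3}$ with $k=O(\log n)$ requires $\varepsilon$ to be a \emph{constant}; you would only prove $\lambda_1(A)\le C\sigma\sqrt{n}$ for some constant $C$ strictly larger than $2$. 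To obtain the sharp constant $2$ plus a lower-order term one must take $k$ polynomially large (the optimization lands at $k \asymp (\sigma\sqrt{n}\log n/L)^{1/2}$), and that is exactly where the difficulty lives: the defect bookkeeping must be carried out for walks of polynomial length, which is what the hypothesis $\sigma \ge C''L\log^2 n/\sqrt{n}$ is there to permit.

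Second, even granting the correction factor $\bigl(1+O(k^2L/(\sigma\sqrt{n}))\bigr)^{k}$ that you assert for the non-tree walks, optimizing $k$ against it gives an additive error of order $(\sigma^2L)^{1/3}n^{1/3}(\log n)^{2/3}$ --- the original F\"uredi--Koml\'os bound --- not the $n^{1/4}$-type bound in the statement. Reaching $(L\sigma)^{1/2}n^{1/4}\log n$ requires the total correction in the exponent to be $O(k^2L/(\sigma\sqrt{n}))$ rather than $O(k^3L/(\sigma\sqrt{n}))$, i.e.\ a strictly sharper enumeration of defective walks; that improved count is the entire content of Vu's theorem and is precisely the step you defer as ``the hard part will be the bookkeeping.'' So the proposal names the right method, but with its stated parameter choice it proves only a constant-factor bound on $\lambda_1$, and with its stated defect estimate it proves only the weaker $n^{1/3}$ error term; the actual theorem is not established.
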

The matrix $U$ is a random matrix, as it is generated from the graph $G' \sim G(n, p)$.
The entries of matrix $U$ have mean zero, $|U_{ij}| = O(1)$ since $p$ is bounded by constant $c < 1$, and the variance is $\sigma^2(U_{ij}) = \E{U_{ij}^2} = p/(1 - p) \gg n^{-1/3}$, 
so by \cref{eigbound} we have 
\[\lambda_1(U) \leq 2\sqrt{\frac{np}{1 - p}} + O\left(\left(\frac{np}{1 - p}\right)^{1/4}\log n\right) \leq \frac{2}{\sqrt{1 - p}}w(n)^{1/2} + O\left(w(n)^{1/4}\log n\right) =: \Lambda_U\]
with probability at least $1 - O(1/n^3)$.
Since $|U_{ij}| = O(1)$ for all $i, j \in [n]$, $\lambda_1(U)$ is at most $O(n^2)$.
Then, the expected value of $\lambda_1(U)$ is at most $\E{\lambda_1(U)} \leq \Lambda_U + O(1/n)$.
It follows that $\P{\lambda_1(U) \geq \Lambda_U + t} \leq \P{\lambda_1(U) \geq \E{\lambda_1(U)} + t}$ for all non-negative $t$.
Hence, to show that $\lambda_1(U)$ does not exceed $\lambda_U$ by too much with extremely high probability, it suffices to show that the probability of $\lambda_1(U)$ to deviate from its mean is exponentially small in $k \log n \simeq w(n)^{1/2}\log n$.
The result by Alon, Krivilevich and Vu \cite{AKV02} ensures that eigenvalues of $U$ are well-concentrated around their means.
\begin{theorem}[Concentration of eigenvalues]\label{eigcon}
	For $1 \leq i \leq j \leq n$, let $a_{ij}$ be independent, real random variables with absolute value at most 1.
	Define $a_{ji} = a_{ij}$ for all $i, j$, and let $A$ be the $n \times n$ matrix with $A(i, j) = a_{ij}$, $i, j \in [n]$.
	Let $\lambda_1(A) \geq \lambda_2(A) \geq \ldots \geq \lambda_n(A)$ be the eigenvalues of $A$.
	For all $s \in [n]$ and for all $t = \omega(\sqrt{s})$:
	\[    \P{\left|\lambda_s(A) - \E{\lambda_s(A)}\right| \geq t} \leq \exp\left(-\frac{(1 - o(1))t^2}{32s^2}\right).\]
	The same estimate holds for $\lambda_{n - s + 1}(A)$.
\end{theorem}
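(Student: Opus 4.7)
The plan is to apply Talagrand's convex distance inequality to the function $A \mapsto \lambda_s(A)$, exploiting the variational characterization of eigenvalues. First I would reduce to concentration around the median $m_s$ of $\lambda_s(A)$: under the assumption $t = \omega(\sqrt{s})$, the gap $|\E{\lambda_s(A)} - m_s|$ is $o(t)$ and can be absorbed into the $(1-o(1))$ factor in the exponent, treating both tails symmetrically.

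The key input is the Courant--Fischer min-max characterization: $\lambda_s(A) \ge x$ if and only if there exist orthonormal vectors $v_1, \ldots, v_s \in \R^n$ with $v_i^T A v_i \ge x$ for every $i$. Let $U = \{A : \lambda_s(A) \ge m_s\}$, so $\P{U} \ge 1/2$. For any $A^* \in U$ with orthonormal witness vectors $v_1^*, \ldots, v_s^*$, plugging the same system into the min-max formula for $A$ yields
\[
\lambda_s(A) \;\ge\; \min_{i \le s} v_i^{*T} A v_i^* \;\ge\; m_s - \max_{i \le s} \bigl|v_i^{*T}(A - A^*) v_i^*\bigr|,
\]
so whenever $\lambda_s(A) \le m_s - t$ there must be some index $i^* = i^*(A^*) \in [s]$ with $|v_{i^*}^{*T}(A - A^*) v_{i^*}^*| \ge t$. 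Expanding this quadratic form and using $|a_{jk} - a^*_{jk}| \le 2$ on entries that differ produces a weight vector $\alpha^{(A^*)}_{jk} := 2|v_{i^*}^*(j)\, v_{i^*}^*(k)|$ satisfying $\|\alpha^{(A^*)}\|_2 = O(1)$, via the identity $\sum_{j,k} v_{i^*}^*(j)^2 v_{i^*}^*(k)^2 = 1$, and $\sum_{jk} \alpha^{(A^*)}_{jk}\, \I{A_{jk} \ne A^*_{jk}} \ge t$.

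A minimax / convex-duality argument over $A^* \in U$ then packages this per-$A^*$ family of certificates into a single Talagrand convex-distance certificate, i.e., a unit vector $\alpha$ satisfying $\inf_{A^* \in U} \sum_{jk} \alpha_{jk}\, \I{A_{jk} \ne A^*_{jk}} \gtrsim t/s$. The factor-$s$ loss comes from the fact that the distinguishing index $i^*(A^*)$ may take any of $s$ values as $A^*$ varies across $U$, forcing a single certificate to accommodate all $s$ candidate witness directions simultaneously (for example via a uniform superposition of the $s$ per-index directions). Applying Talagrand's inequality $\P{d_T(A, U) \ge r}\,\P{U} \le e^{-r^2/4}$ with $r = \Omega(t/s)$ then gives $\P{\lambda_s(A) \le m_s - t} \le O\!\left(e^{-t^2/(32 s^2)}\right)$, and the upper tail (as well as the analogous bound for $\lambda_{n-s+1}$) follows by the same argument applied to $-A$.

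The main obstacle will be step three: rigorously combining the family $\{\alpha^{(A^*)}\}_{A^* \in U}$, which is indexed by the continuous parameter $A^* \in U$ rather than merely by $i^* \in [s]$, into one convex-distance certificate while losing only a single factor of $s$ in the $\ell_2$-scale. The hypothesis $t = \omega(\sqrt{s})$ is precisely what is needed to absorb both the median-to-mean correction and the rounding errors from the certificate construction into the $(1-o(1))$ factor, without which the stated constant $32$ in the exponent would deteriorate.
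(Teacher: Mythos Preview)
The paper does not prove this theorem at all; it is quoted verbatim from Alon--Krivelevich--Vu \cite{AKV02} and used as a black box in the analysis of $\lambda_1(U)$ and $\lambda_2(V)$. So there is no in-paper proof to compare against, and your sketch should be measured against the original AKV argument.

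Your high-level plan --- Talagrand's convex-distance inequality, with Courant--Fischer supplying the Lipschitz-type control and a median-to-mean swap absorbed into the $(1-o(1))$ --- is exactly the AKV strategy, and the first two steps are fine.

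The third step, however, is set up backwards, and the gap you flag as ``the main obstacle'' is real for the route you chose. You build a weight vector $\alpha^{(A^*)}$ from the witness frame $v_1^*,\ldots,v_s^*$ of each $A^*\in U$ and then hope to merge these into one Talagrand certificate, attributing the factor-$s$ loss to the index $i^*(A^*)$ ranging over $[s]$. But the \emph{entire frame} $v_1^*,\ldots,v_s^*$ varies with $A^*$, not just the index, so your ``uniform superposition of the $s$ per-index directions'' still depends on $A^*$ and does not yield a single $\alpha$. No minimax/duality step is going to collapse a continuously-parametrized family of certificates to one with only an $O(s)$ loss without further structural input.

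The AKV fix is to construct the certificate from $A$ rather than from $A^*$. Let $u_1,\ldots,u_s$ be the top-$s$ eigenvectors of $A$, set $\beta_j:=\bigl(\sum_{i\le s} u_i(j)^2\bigr)^{1/2}$ and (on the upper triangle) $\alpha_{jk}\propto\beta_j\beta_k$, so that $\|\alpha\|_2=O(s)$. For \emph{any} $A'$, Courant--Fischer gives $\lambda_s(A')\ge\min_{v\in\mathrm{span}(u_1,\ldots,u_s)}v^TA'v$; since every unit $v$ in this span satisfies $|v_j|\le\beta_j$ and $v^TAv\ge\lambda_s(A)$, one gets uniformly
\[
\lambda_s(A')\;\ge\;\lambda_s(A)\;-\;2\sum_{j\le k}\alpha_{jk}\,\I{A_{jk}\ne A'_{jk}}.
\]
Hence a \emph{single} certificate of $\ell_2$-norm $O(s)$ works for all $A^*\in U$ at once, giving $d_T(A,U)\gtrsim t/s$ with no combination step, and Talagrand yields the stated $\exp(-c\,t^2/s^2)$ bound. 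The $s^2$ comes from $\|\alpha\|_2=O(s)$, not from any averaging over $s$ directions.
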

Taking $t = \Theta(w(n)^{1/4}\log n)$, from \cref{eigcon} we get
\begin{multline*}
    \P{\lambda_1(U) \geq \frac{2}{\sqrt{1 - c}}w(n)^{1/2} + \Theta\left(w(n)^{1/4}\log n\right)} \leq\\
    \leq \P{\lambda_1(U) \geq \E{\lambda_1(U)} + \Theta\left(w(n)^{1/4}\log n\right)} \leq \exp\big(-\Omega\big(w(n)^{1/2}\log^2n\big)\big),
\end{multline*}
so $\lambda_1(U) \leq \frac{2}{\sqrt{1 - c}}w(n)^{1/2} + O\left(w(n)^{1/4}\log n\right)$
with probability at least $1 - \exp\big(-\Omega\big(k\log^2n\big)\big)$.
Note that the bound holds for any choice of the adversary, as matrix $U$ does not depend on the vertices of the planted clique and is determined by initial graph $G(n, p)$ only.


As for the matrix $V$, we shift it so that all its entries have mean 0.
Precisely, we consider matrix $V'$ such that for all $i, j > k'$ we have $V'_{ij} = V_{ij} = 0$, $V'_{ii} = 0$ for $i \in [k']$, and for $i < j \leq k'$ we have $V'_{ij} = V_{ij} - 1$, which is either $-1$ with probability $p$ and $p/(1- p)$ with probability $(1 - p)$.
Basically, $V'$ is a copy of matrix $U$ of order $k'$, so from \cref{eigbound} we can obtain the bounds for $\lambda_2(V)$, which is $\lambda_1(V')$.
The variance is $\sigma^2(V'_{ij}) = p/(1 - p) \gg \frac{\log^4n}{n}$, so with probability at least $1 - O(1/k'^3)$
\[\lambda_1(V') \leq 2\sqrt{\frac{k'p}{1 - p}} + O\left(\left(\frac{k'p}{1 - p}\right)^{1/4}\log k'\right) \leq  \frac{C'w(n)^{3/4}}{\sqrt{n}} + O\left(\frac{w(n)^{3/8}}{n^{1/4}}\log n\right)\]
for some constant $C' > 0$, we will denote this bound by $\Lambda_{V'}$.
Similarly to $\lambda_1(U)$, we have $\lambda_1(V') \leq O(k'^2)$ and $\E{\lambda_1(V')} \leq (1 + o(1))\Lambda_{V'}$. 
Applying \cref{eigbound} to $\lambda_1(V')$ with $t \gg  \frac{w(n)^{3/8}}{n^{1/4}}\log n$, we get $\P{\lambda_2(V) > \frac{C'w(n)^{3/4}}{\sqrt{n}} + t} \leq \exp\left(-\Omega\left(t^2\right)\right)$.
We would like these bounds hold for any choice of the adversarial $k$-clique.
There are $\binom{n}{k} \leq \left(\frac{ne}{k}\right)^k \leq \exp(2k\log n) \leq \exp(O(w(n)^{1/2}\log n))$ possible choices, so by setting $t =\Theta(w(n)^{1/4}\log n)$ in the bound above and applying union bound over all possible choices of $k$-clique, we prove
\[  \lambda_2(V) \leq \frac{C'w(n)^{3/4}}{\sqrt{n}} + O\left(w(n)^{1/4}\log n\right) = o(k')\]
for any choice of the adversary with probability at least $1 - \exp\big(-\Omega\big(k\log^2n\big)\big)$.

It remains to bound $\lambda_1(W)$.
We will use the trace of $W^2$.
\[\lambda_1(W)^2 \leq \tr(W^2) = 2\sum_{i< j}W_{ij}^2 = 2\sum_{i = k' + 1}^n(k' - d(i, Q))x_i^2 = 2\sum_{i = k' + 1}^n\frac{\left(d(i, Q) - k'p\right)^2}{(1 - p)^2(k' - d(i, Q))}.\]
By definition of set $Q$, for every $k' + 1 \leq i \leq n$ we have $d(i, Q) \leq \frac{1 + p}{2}k$, so $k' - d(i, Q) \geq \frac{1 - p}{2}k$ and
\[   2\sum_{i = k' + 1}^n\frac{\left(d(i, Q) - k'p\right)^2}{(1 - p)^2(k' - d(i, Q))} \leq \frac{4}{(1 - p)^3k}\sum_{i = k' + 1}^n\left(d(i, Q) - k'p\right)^2.\]
It turns out that we can always bound the sum above.
\begin{theorem}\label{proba}
    With probability at least $1 - \exp(-2k\log n)$,
	\[   \sum_{i = k' + 1}^n\left(d(i, Q) - k'p\right)^2 \leq (n - k')k'p(1 - p) + o(nk'p(1 - p))\]
	for every possible choice of $k$ vertices by the adversary,
\end{theorem}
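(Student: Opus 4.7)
The plan is to reduce Theorem~\ref{proba} to a concentration statement that holds uniformly over all subsets $S \subseteq V$ that could arise as the adversary's enlarged clique $Q$. Since the adversary only modifies edges inside $K \subseteq Q$, for every $i \notin Q$ the value $d(i,Q)$ in $G$ coincides with $d_{G'}(i,Q)$ in the underlying random graph; combined with $|Q| = k' \in [k, k + a(n,p)]$ regardless of strategy (by \cref{Gcloseneigh}), it suffices to prove that with probability at least $1 - \exp(-2k\log n)$,
\[
\sum_{i \notin S}\bigl(d_{G'}(i, S) - k'p\bigr)^{2} \leq (n - k')k'p(1-p) + o\bigl(nk'p(1-p)\bigr)
\]
holds \emph{simultaneously} for every $S \subseteq V$ with $|S| = k' \in [k, k + a(n,p)]$, and then to specialize to $S = Q$.

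For a fixed such $S$, set $X_{i} := d_{G'}(i, S) - k'p$ for $i \notin S$. These are mutually independent (each depending on a disjoint set of $k'$ bipartite edges between $i$ and $S$), each a centered $\Bin(k', p)$ with $\mathbb{E}[X_{i}^{2}] = k'p(1-p)$ and, by Hoeffding, sub-Gaussian norm $\|X_{i}\|_{\psi_{2}} = O(\sqrt{k'})$; consequently $\|X_{i}^{2}\|_{\psi_{1}} = O(k')$. Applying the sub-exponential form of Bernstein's inequality to $Y_{i} := X_{i}^{2} - k'p(1-p)$ with slack $t = \varepsilon (n-k')k'p(1-p) = \Theta(\varepsilon\, nk'p)$ yields
\[
\mathbb{P}\Bigl(\sum_{i \notin S} Y_{i} > t\Bigr) \leq 2\exp\!\left(-c\,\min\!\left(\tfrac{t^{2}}{(n-k')(k')^{2}},\;\tfrac{t}{k'}\right)\right) = 2\exp\!\left(-\Omega\!\bigl(\varepsilon^{2}\,w(n)^{2}/n\bigr)\right),
\]
where the first term in the minimum dominates because $\varepsilon p = o(1)$. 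Since $w(n) \gg n^{2/3}$, one can choose $\varepsilon = \varepsilon(n) = o(1)$ (for instance $\varepsilon = (n\log n)^{1/2}\log n/w(n)^{3/4}$) for which $\varepsilon^{2}w(n)^{2}/n = \sqrt{w(n)}\log^{3}n$, comfortably dominating the $\exp(O(k'\log n)) = \exp(O(\sqrt{w(n)}\log n))$ candidate sets $S$ in the subsequent union bound. Specializing to the adversary-generated $S = Q$ then completes the proof.

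The main obstacle is choosing the correct form of Bernstein. A naive application of the classical (bounded-summand) Bernstein to $Y_{i}$ must use the worst-case $|Y_{i}| \leq (k')^{2}$, whose resulting linear term $(k')^{2}\,t$ forces $\varepsilon = \Omega(\log n)$ in the constant-$p$ regime, losing the desired $o(1)$ relative error. The sub-exponential Bernstein above is tighter precisely because it replaces $(k')^{2}$ by $\|X_{i}^{2}\|_{\psi_{1}} = O(k')$, correctly reflecting that $X_{i}^{2}$ is typically of order $k'$ (its sub-Gaussian scale via Hoeffding) rather than its worst-case magnitude $(k')^{2}$. An equivalent route is to truncate each $X_{i}^{2}$ at a sub-Gaussian threshold $T = O(\sqrt{k\,k'\log n})$ (absorbing the truncation event via Chernoff on $X_{i}$ into the union bound) and then invoke the classical Bernstein with $M = T^{2}$; both approaches produce the same conclusion.
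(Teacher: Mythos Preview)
Your reduction and overall strategy are sound, and the approach is genuinely different from the paper's. Two points are worth flagging.

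\textbf{Comparison with the paper.} The paper does \emph{not} invoke a sub-exponential Bernstein with Orlicz norms. Instead it splits the vertices outside $Q$ into dyadic level sets $M_r = \{i : k^2/2^r \le (X_i-kp)^2 \le k^2/2^{r-1}\}$, proves a combinatorial bound $|M_r|\le 2^{r+2}w(n)^{1/4}$ for each $r$ below a cutoff $R$ (this is the content of a separate lemma, argued by a Chernoff-plus-union-bound on bipartite edge counts), and shows the total contribution of these ``large-deviation'' levels is $o(nkp)$. On the remaining bulk $M_R$, where $(X_i-kp)^2\le 2kp\log n$, it applies the \emph{classical} bounded-summand Bernstein with $L=2kp\log n$ and the exact fourth-moment bound $\E\bigl[(X_i-kp)^4\bigr]\le k^2p$, yielding an exponent $\Theta(\sqrt{w(n)}\log n)$. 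Both the truncation level and the variance proxy are $p$-aware; this is what lets the paper's argument go through for all $w(n)\gg \log^4 n$, far below the $n^{2/3}$ threshold.

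\textbf{A small but real gap in your parameters.} Your route pays for its cleanliness: Hoeffding only gives $\|X_i\|_{\psi_2}=O(\sqrt{k'})$, hence $\|X_i^2\|_{\psi_1}=O(k')$, which ignores the factor $p$ in $\mathrm{Var}(X_i)=k'p(1-p)$. In the sub-exponential Bernstein the variance proxy is $\sum\|Y_i\|_{\psi_1}^2=\Theta(n(k')^2)$ rather than the true $\sum\E[Y_i^2]\le nk^2p$, so your exponent is $\Theta(\varepsilon^2 np^2)=\Theta(\varepsilon^2 w(n)^2/n)$ instead of the paper's $\Theta(\varepsilon^2 np)$. To beat the $\exp(O(k\log n))$ union bound with $\varepsilon=o(1)$ you need $w(n)^{3/2}\gg n\log n$, i.e.\ $w(n)\gg n^{2/3}(\log n)^{2/3}$; the theorem as stated only assumes $w(n)\gg n^{2/3}$, so your argument leaves a polylogarithmic sliver uncovered. (Your suggested truncation alternative with $T=O(\sqrt{kk'\log n})$ inherits the same loss, since that threshold also comes from Hoeffding rather than Chernoff.) The fix is exactly what the paper does: replace the Orlicz-norm proxy by the genuine fourth moment $\E[X_i^4]\le k^2p$ and choose a $p$-dependent truncation, at which point the level-set bookkeeping becomes necessary to control the mass above the cutoff.
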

The proof is rather technical and is presented in \cref{sec:probbound}.
From \cref{proba} we get
\[   \lambda_1(W)^2 \leq  \frac{4 + o(1)}{(1 - p)^3k}(n - k')k'p(1 - p) \leq \frac{4 + o(1)}{(1 - p)^2}(n - k')p,\]
so 
\[    \lambda_1(W) \leq \frac{2 + o(1)}{1 - p}\sqrt{(n - k')p} \leq \frac{2 + o(1)}{1 - p}w(n)^{1/2}.\]
Combining the bounds for $\lambda_1(U)$, $\lambda_2(V)$ and $\lambda_1(W)$, we get
\begin{multline*}
	\lambda_2(M) \leq \lambda_1(U)  + \lambda_2(V) + \lambda_1(W) \leq \\
	\leq \frac{2}{\sqrt{1 - p}}w(n)^{1/2} + \frac{2}{1 - p}w(n)^{1/2} + o(k') \leq \frac{4}{1 - p}w(n)^{1/2} + o(k').
\end{multline*}
By choosing $C \geq \frac{5}{1 - p}$ in $k = Cw(n)^{1/2}$, we guarantee that the expression above is less than $k'$.
Therefore, $k'$ is indeed the largest eigenvalue of matrix $M$, and $\vartheta(\oG) \leq k' \leq k + a(n, p)$ for every choice of adversarial $k$-clique with extremely high probability.
This finishes the proof of \cref{Gsemitheta}.

\section{Main algorithm}\label{sec:mainalgo}

In this section we prove \cref{thres1}.

\begin{theorem}[\cref{thres1} restated]
	For every fixed $\eps > 0$ and for every $k \geq \eps\sqrt{n}$, there is an (explicitly described) algorithm running in time $n^{O(\log(\frac{1}{\eps}))}$ which almost surely finds the maximum clique in a graph $G \sim AG(n, \frac{1}{2}, k)$. The statement holds for every adversarial planting strategy (choice of $k$ vertices as a function of $G' \sim G(n, \frac{1}{2})$), and the probability of success is taken over the choice of  $G' \sim G(n, \frac{1}{2})$.
\end{theorem}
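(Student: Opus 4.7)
The plan is to reduce the theorem to two regimes and chain together the machinery already assembled in \cref{sec:algorithm}. First I would handle the case $k \ge c\sqrt{n}$ for a sufficiently large absolute constant $c$, and then bootstrap the small-$\epsilon$ case via a guess-and-restrict trick in the spirit of~\cite{AKS98}.

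For the regime $k \ge c\sqrt{n}$, the algorithm has four steps exactly as outlined in \cref{sec:algorithm}. First, compute to sufficient precision an optimal orthonormal representation $\{s_i\}$ and handle $h$ realizing $\vartheta(\bar G)$ via SDP; by \cref{thm:ourtheta}, with extremely high probability $\vartheta(\bar G) \le k + O(\log n)$. Second, form the set $H = \{i : (h \cdot s_i)^2 \ge 3/4\}$; by \cref{lem:setH}, $H$ contains all but $O(\log n)$ vertices of the planted clique $K$ and at most $O(\log n)$ vertices outside $K$. Third, form $F$ as the set of vertices having at least $3k/4$ neighbors in $H$; by \cref{lem:setF}, $F$ contains the maximum clique $K'$ and has at most $O(\log n)$ extraneous vertices. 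Fourth, find the maximum clique in $F$ via the standard branching algorithm for vertex cover on the non-edges of $F$: since $|F \setminus K'| = O(\log n)$, the recursion has depth $O(\log n)$ and runs in polynomial time.

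To handle arbitrary $\epsilon > 0$, choose $t = \Theta(\log(1/\epsilon))$ large enough so that for $n' \simeq 2^{-t} n$ one has $k - t \ge c\sqrt{n'}$, with $c$ the constant from the preceding regime. Enumerate all $\binom{n}{t} = n^{O(\log(1/\epsilon))}$ subsets $U \subseteq V(G)$ of size $t$; for each, restrict $G$ to the common neighborhood $N(U)$ and invoke the large-$c$ algorithm on the induced subgraph. Among all outputs, return the largest clique found. Correctness hinges on the correct guesses $U \subseteq K$: for such a $U$, the induced subgraph on $N(U)$ is distributed as a random $G(n', 1/2)$ graph on which the adversary has planted a clique of size at least $k - t \ge c\sqrt{n'}$, so the previous phase recovers its maximum clique, which together with $U$ yields a maximum clique of $G$ (using \cref{pro:maxIS} to argue that the global maximum clique is essentially contained in $K$ and thus in $N(U)$ for any $U \subseteq K$).

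The main obstacle I anticipate is ensuring that \cref{thm:ourtheta} and its consequences continue to apply uniformly to every induced instance $G[N(U)]$ that the enumeration examines. This is precisely why \cref{thm:ourtheta} was proved with failure probability $\exp(-\Omega(k \log n))$ rather than merely $o(1)$: a union bound over the $n^{O(\log(1/\epsilon))}$ candidate guesses $U$, and simultaneously over all $\binom{n'}{k-t}$ possible adversarial placements within each $N(U)$, is absorbed by the exponentially small failure probability. Concretely, I would condition on the intersection of the high-probability events guaranteed by \cref{thm:ourtheta}, \cref{Gcloseneigh}, and the random-graph concentration estimates used in the proofs of \cref{lem:setH} and \cref{lem:setF}, applied to every sufficiently large vertex subset of $G'$; once this global event occurs, every correct guess $U \subseteq K$ triggers a successful run of the large-$c$ subroutine, so the overall algorithm succeeds with probability $1 - o(1)$ over $G' \sim G(n, 1/2)$, uniformly in the adversary's strategy.
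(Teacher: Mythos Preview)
Your proposal is correct and follows essentially the same approach as the paper's proof in \cref{sec:mainalgo}: the four-step SDP-based algorithm for large $c$, followed by the AKS-style enumeration of $t$-subsets and the union bound enabled by the $\exp(-\Omega(k\log n))$ failure probability of \cref{thm:ourtheta}. One small imprecision worth fixing: you assert that the global maximum clique $K'$ lies in $U\cup N(U)$ for \emph{every} $U\subseteq K$, but vertices of $K'\setminus K$ need not be adjacent to all of $K$; the paper instead uses that $|K\cap K'|\ge k-O(\log n)\ge t$, so \emph{some} enumerated $U$ satisfies $U\subseteq K\cap K'$, and for that particular $U$ one has $K'\setminus U\subseteq N(U)$.
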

As with \cref{thm:ourtheta} and \cref{Gsemitheta}, we will prove a more general version of the theorem, considering $G \sim AG(n, p, k)$ for a wide range of values of $p$, and not just $p= \frac{1}{2}$. We first prove such a theorem when $k \geq C\sqrt{np}$ for a sufficiently large constant $C$. Afterwards, we shall extend the proof to the case that $C$ can be an arbitrarily small constant.

\begin{theorem}\label{Gsemialgo}
    Let $c \in (0, 1)$ be an arbitrary constant. 
    Consider an arbitrary function $w(n)$, such that $n^{2/3} \ll w(n) \leq cn$.
    Let $G \sim AG(n, p, k)$, where $p = w(n)/n$ and $k \geq \frac{5}{1 - p}w(n)^{1/2}$.
    There is an (explicitly described) algorithm running in time $n^{O(1)}$ which almost surely finds the maximum clique in $G$, for every adversarial planting strategy.
\end{theorem}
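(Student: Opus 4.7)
The plan is to follow the four-step algorithm outlined in \cref{sec:algorithm}, with threshold constants rescaled from $p = 1/2$ to general $p$. The four steps are: (i) solve the SDP~(\ref{eq:thetaPrimal}) for $\vartheta(\bar G)$, obtaining an orthonormal representation $\{s_i\}$ and handle $h$; (ii) extract $H := \{i : (h \cdot s_i)^2 \ge 3/4\}$; (iii) extract $F := \{i : |N_G(i) \cap H| \ge \tau k\}$ for $\tau := (1+p)/2$; (iv) find a maximum clique in $G[F]$ via the standard FPT branching-on-non-edges algorithm. The engine is \cref{Gsemitheta}, which gives $k \le \vartheta(\bar G) \le k + a(n,p) = k + O(\log n)$ with probability at least $1 - \exp(-2k\log n)$ simultaneously over all adversarial plantings (using $p \le c < 1$).

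The correctness of step (ii) generalizes \cref{lem:setH}. Let $T := K \setminus H$. The graph $G_T$ obtained by deleting $T$ is a member of $AG(n-|T|, p, k-|T|)$, to which \cref{Gsemitheta} also applies. Restricting $\{s_i\}_{i \notin T}$ and $h$ to $\bar G_T$ gives a feasible representation of value at least $\vartheta(\bar G) - \sum_{i \in T}(h \cdot s_i)^2 \ge k - \tfrac{3}{4}|T|$, while \cref{Gsemitheta} caps this value at $k - |T| + a(n,p)$; comparing the two forces $|T| = O(a(n,p)) = O(\log n)$. The union bound over the $\le n^{|T|}$ choices of $T$ is absorbed by the $\exp(-2k\log n)$ failure probability. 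For $|H \setminus K| = O(\log n)$: any $i \in H \setminus K$ non-adjacent in $G$ to some $j \in K \setminus T$ would satisfy $s_i \cdot s_j = 0$ together with $(h \cdot s_j)^2 \ge 3/4$, forcing $(h \cdot s_i)^2 \le 1/4$ --- a contradiction. Hence every such $i$ is adjacent in $G$ to all of $K \setminus T$, and a first-moment argument on $G' \sim G(n,p)$ --- valid for any $p$ bounded away from~$1$ --- caps $|H \setminus K|$ at $O(\log n)$.

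For step (iii), a routine adaptation of \cref{pro:maxIS} (using $\omega(G(n,p)) = O(\log n / \log(1/p)) = O(\log n)$ for $p \le c < 1$, together with the same $A \times B$ bi-adjacency argument) shows that the maximum clique $K'$ of $G$ differs from $K$ on at most $O(\log n)$ vertices. Consequently every vertex of $K'$ is adjacent in $G$ to at least $k - O(\log n) > \tau k$ vertices of $H$, so $K' \subseteq F$; conversely, \cref{Gneigh} applied with $\mu = 1 + O(\log n/k)$ (reflecting $|H| = k \pm O(\log n)$) and $\nu = \tau = (1+p)/2$ --- which satisfies $\nu > \mu p$ by a constant margin since $p$ is bounded away from~$1$ --- bounds the number of non-$K$ vertices of $F$ by $O(\log n)$. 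Step (iv) then runs in $2^{O(\log n)} \cdot \mathrm{poly}(n) = n^{O(1)}$ time, since the minimum set of vertices whose removal turns $G[F]$ into a clique has size $|F| - \omega(G[F]) = O(\log n)$.

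The main obstacle is the bookkeeping of constants: verifying that the thresholds $3/4$ and $\tau = (1+p)/2$ still separate the clique from the non-clique regime uniformly across the wide range $n^{2/3} \ll np \le cn$, and that every intermediate probabilistic bound --- in particular \cref{Gneigh} and the first-moment arguments for $A \times B$ bi-adjacency --- remains strong enough to survive the $\binom{n}{k} \le \exp(2k \log n)$-way union bound over all adversarial plantings. Both of these are afforded by the extremely-high-probability guarantees $1 - \exp(-2k \log n)$ already established in \cref{Gsemitheta} and \cref{Gneigh}.
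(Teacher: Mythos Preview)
Your proposal is correct and follows essentially the same four-step approach as the paper's proof: solve the SDP, extract $H$ via the $3/4$ inner-product threshold, extract $F$ via a neighbor-count threshold in $H$, then FPT-branch on non-edges. The only substantive difference is your threshold $\tau = (1+p)/2$ for $F$, whereas the paper uses the fixed constant $3/4$; your choice is the more natural generalization (it guarantees $\nu > \mu p$ in \cref{Gneigh} for all $p < 1$, which the paper's $3/4$ does not when $p$ exceeds $3/4$), and otherwise the arguments coincide, with the paper citing \cref{Gaddtheta} directly where you re-invoke \cref{Gsemitheta} plus a union bound over $T$.
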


\begin{proof}
As described in \cref{sec:algorithm}, we solve the optimization problem
\begin{equation}\label{thetaP}
       \vartheta(G) = \max_{h,\{s_i\}} \sum_{i\in V} (h \cdot s_i)^2,
\end{equation} 
finding the optimal orthonormal representation $\{s_i\}$ and handle $h$, using the SDP formulation.

Suppose that we solved $\vartheta(\oG)$ in (\ref{thetaP}) for $G\sim AG(n, p, k)$ (with $p$ and $k$ as in \cref{Gsemialgo}).
By \cref{Gsemitheta}, 
$k \leq \vartheta(\oG)\leq k + a(n, p)$.
Let $G = (V, E)$, let $K$ denote the set of vertices chosen by the adversary.

As $h$ and $s_i$ are unit vectors, we have that for all $i \in V$, $(h \cdot s_i)^2 \leq 1$.
Let $T$ be the set of vertices $i \in K$ with $(h \cdot s_i)^2 < 3/4$.
We claim that $|T| \leq 4a(n, p)$.
Suppose the contrary, so $|T| > 4a(n, p)$. 
Delete $|T|$ from the graph $G$ and consider $\vartheta(\overline{G\setminus |T|})$.
We get
\[\vartheta(\overline{G\setminus T}) \geq \sum_{i \in V\setminus T}(h\cdot s_i)^2 = \vartheta(\oG) - \sum_{j \in T}(h\cdot s_j)^2,\]
hence by applying \cref{Gaddtheta} to $G\setminus T$ we get
\begin{multline*}
\vartheta(\oG) \leq \vartheta(\overline{G\setminus T}) + \sum_{j \in T}(h\cdot s_j)^2 \leq k - |T| + a(n, p)  + (3/4)|T| =\\
= k + a(n, p) - (1/4)|T|< k + a(n, p) -  a(n, p) = k,
\end{multline*}
a contradiction.
So, there are at most $4a(n, p)$ vertices $j \in K$ with $(h \cdot s_j)^2 < 3/4$, implying that there are at least $k - 4a(n, p)$ vertices in $K$ with $(h \cdot s_j)^2 \geq 3/4$.
Denote this set by $K_{3/4}$.

Observe that if $i \in V\setminus K$ is not connected to some $j \in K_{3/4}$, then $(h\cdot s_i)^2 \leq 1/4$.
Indeed, $(i, j)\notin E$ implies $s_i \cdot s_j = 0$, so $(h\cdot s_i)^2 + (h\cdot s_j)^2 \leq 1$ and therefore $(h\cdot s_i)^2 \leq 1/4$.
Hence, if $i \in V\setminus K$ has $(h\cdot s_i)^2\geq 3/4$, it must be connected to the whole set $K_{3/4}$.
The set $K_{3/4}$ has size at least $k - 4a(n, p)$, so by 
\cref{Gcloseneigh} there are less than $a(n, p)$ vertices $i \in V\setminus K$ with $(h\cdot s_i)^2 \geq 3/4$.
As a result, for the set $H$ of vertices $i \in V$ with $(h\cdot s_i)^2 \geq 3/4$, we have $k - 4a(n, p) \leq |H| \leq k + a(n, p)$.

Let $F \subset V$ be the set of all vertices that have at least $3k/4$ neighbors in $H$.
Similarly to \cref{lem:setF}, with extremely high probability $F$ contains the maximum clique in $G$. Moreover, by \cref{Gneigh} there are at most $O(a(n, p))$ vertices from $V\setminus H$ that have at least $3/4k$ neighbors in $H$, implying $|F| \leq k + O(a(n, p))$.

If follows that the maximum clique of $G[F]$, the subgraph of $G$ induced on $F$, is the maximum clique of $G$.
Moreover, $K\subseteq F$, so $F$ contains a clique of size at least $k$, and $|F| \leq k + O(a(n, p))$.
The maximum clique in $G[F]$ can be found in polynomial time by a standard algorithm (used for example to show that vertex cover is fixed parameter tractable). For every non-edge in the subgraph induced on $F$, at least one of its end-vertices needs to be removed, so we try both possibilities in parallel, and recurse on each subgraph that remains. Each branch of the recursion is terminated either when the graph is a clique, or when $k$ vertices remain (whichever happens first). At least one of the branches of the recursion finds the maximum clique.
The depth of the recursion is at most $O(a(n, p)) = O\left(\frac{p}{(1 - p)^2}\log n\right)$. Consequently the running time (which is exponential in the depth) is in the order of $n^{O(1)} 2^{O(a(n,p))} = n^{O(1/(1-p)^2)}$. This running time is polynomial if $p$ is upper bounded by a constant smaller than~1. 
This finishes the description of the algorithm, proving \cref{Gsemialgo}.
\end{proof}




We now return to \cref{thres1}, which considers $G \sim AG(n, \frac{1}{2}, k)$ and $k \geq \eps\sqrt{n}$. By plugging in $p = \frac{1}{2}$ in \cref{Gsemialgo}, we prove \cref{thres1} when $\epsilon \ge \frac{10}{\sqrt{2}}$, as \cref{Gsemialgo} assumes the condition $k \geq \frac{5}{1 - p}w(n)^{1/2}$ (where $w(n) = np$). To prove \cref{thres1} we need to handle arbitrarily small constant $\epsilon > 0$. For this, we extend the proof of \cref{Gsemialgo} to handle the case that $k \geq \eps w(n)^{1/2}$ for arbitrarily constant $\epsilon > 0$. 

Suppose that $k = \eps\sqrt{np}$ for $0 < \eps < \frac{5}{1 - p}$.
Similar to the approach of \cite{AKS98}, we can use the algorithm that works for the case $k \geq \frac{5}{1 - p}\sqrt{np}$ in order to obtain the algorithm for $k = \eps\sqrt{np}$.

Let $s$ be the smallest integer satisfying $k = \eps\sqrt{np} > 2\cdot \frac{5}{1-p}\sqrt{np}\cdot p^{s/2}$. This gives $s > \frac{2\log \frac{10}{(1-p)\eps}}{\log (1/p)}$, which is a constant for constant $\eps > 0$ and $p$ bounded away from~1. Observe that if $p < \frac{\eps^2}{100}$ then $s = 1$. Given graph $G \sim AG(n, p, k)$, we try all $n\choose s$ possible choices for sets $S\subset V$ of size $s$. For each such choice, if $S$ is a clique in $G$, then we apply the algorithm of \cref{Gsemialgo} on $G[N(S)]$, the subgraph induced on the common neighborhood of $S$ (not including $S$ itself). The size of this subgraph is at most roughly $p^s n + k$, and we chose the value of $s$ so that $k - s \ge \frac{5}{1-p}\sqrt{|N(S) p|}$. As we show below, for all $k \choose s$ choices in which $S\subset K$, the algorithm will return the largest clique in $G[N(S)]$. As the largest clique $K^*$ of $G$ contains at least $s$ vertices of $K$, for at least one choice of $S \subset K$ we also have $S \subset K^*$. For this case, the union of $S$ and the largest clique in $G[N(S)]$ is the largest clique of $G$, as desired.

It remains to show that if $S \subset K$, then the algorithm of \cref{Gsemialgo} finds the maximum clique in $G[N(S)]$. In more details, what we need to show is that with high probability over the choice of $G' \sim G(n, p)$, for every choice of $k$ vertices as the adversarial planted clique $K$ (giving the graph $G$), and for every choice of $S \subset K$ of size $s$, the algorithm succeeds on $G[N(S)]$.

We shall employ a union bound over all possible choices of $S$ and $K$. Given that we consider all possible $K$ (and not just the one selected by the adversary), we may describe the generation of $G[N(S)]$ in the following way. 

\begin{enumerate}
    \item Start with the empty graph on a set $V$ of $n$ vertices.
    \item Pick a set $K \subset V$ of $k$ vertices, and a set $S \subset K$ of $s$ vertices.
    \item Generate $G' \sim G(n, p)$ in a need to know basis. First reveal only those edges between $S$ and $V \setminus S$. Let $V'$ denote the set of vertices that each has all of $S$ as its neighbors. Observe that the expected size of $V'$ is exactly $\E{|V'|} = (n-k)p^s$. 
    \item Form the set $N(S) = V' \cup (K \setminus S)$. We now reveal the edges of $G'$ inside the set $N(S)$, giving a graph that we call $G'_{S,K}$. Crucially, this graph is distributed exactly like $G(|N(S)|, p)$.
    \item Turn $K \setminus S$ into a clique in $G'_{S, K}$, effectively planting a clique of size $k - s = k - O(1)$.
\end{enumerate}

\begin{claim}
    With probability at least $1 - \exp(-2k\log n)$ over the choice of $G' \sim G(n, p)$, for all possible choices of $K \subset V$ and $S \subset K$ it holds simultaneously that $|N(S)| = (1 + o(1))np^s$. 
\end{claim}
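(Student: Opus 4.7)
The plan is to apply a multiplicative Chernoff bound for each fixed $S$, then take a union bound over the at most $\binom{n}{s}\leq n^s = \exp(O(\log n))$ choices of $S$ of size $s$. Since the claim depends on $K$ only through $|K \setminus S| = k - s$ (a deterministic quantity), no separate union bound over $K$ is required.

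For fixed $S$, membership of each vertex $v \in V \setminus S$ in $V'$ depends on the $s$ independent $\Bern(p)$ edges of $G'$ between $v$ and $S$, so $|V'| \sim \Bin(n - s, p^s)$ with mean $\mu := (n-s)p^s = (1-o(1))np^s$. I would set $t := \sqrt{7 k \log n / \mu}$ and apply the standard multiplicative Chernoff estimate
\[
	\P{\bigl| |V'| - \mu \bigr| \geq t\mu} \leq 2\exp\bigl(-t^2 \mu / 3\bigr) = 2\exp\bigl(-\tfrac{7}{3}k \log n\bigr),
\]
which after the union bound over $S$ is dominated by $\exp(-2k \log n)$ for $n$ large.

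The remaining step is to verify two quantitative $o(1)$ claims: $t = o(1)$ (so that $|V'| = (1 + o(1))\mu$), and $k - s = o(np^s)$ (so that the discrepancy $|N(S)| - |V'| \leq |K \setminus S| = k - s$ is absorbed). Both reduce via $k = \eps\sqrt{np}$ to bounds of the form $\eps \log n / (n^{1/2} p^{s - 1/2}) = o(1)$. The construction of $s$ in the previous paragraphs gives a clean case split: when $p \leq \eps^2/100$ the choice $s = 1$ is already admissible, and the quantity becomes $\eps \log n / w(n)^{1/2}$, which is $o(1)$ because $w(n) \gg n^{2/3}$; when $p > \eps^2/100$, $p$ is bounded below by a positive constant depending only on $\eps$, so $p^{s - 1/2} = \Omega(1)$ and the quantity is $O(\log n / \sqrt{n}) = o(1)$.

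Combining these yields $|N(S)| = |V'| + O(k) = (1 + o(1))np^s$ simultaneously for every $(S, K)$, with probability at least $1 - \exp(-2k\log n)$, as claimed. The only non-mechanical step is the case split for $s$ used to verify that the two $o(1)$ bounds survive the full range $n^{2/3} \ll w(n) \leq cn$; once that is isolated, the remainder is a routine Chernoff-plus-union-bound calculation.
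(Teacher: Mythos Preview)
Your proof is correct and follows essentially the same approach as the paper's: a Chernoff bound on $|V'|$ together with the same case split on whether $p \le \eps^2/100$ (so $s=1$) or $p$ is bounded below by a constant, in order to verify that the deviation and the additive $k-s$ term are both $o(np^s)$. Your version is in fact slightly cleaner in two respects: you use a two-sided Chernoff bound (matching the two-sided statement $|N(S)| = (1+o(1))np^s$, whereas the paper only writes out the upper tail), and you make explicit that the union bound need only run over the $\binom{n}{s} = n^{O(1)}$ choices of $S$, not over $K$.
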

\begin{proof}
    Fix some particular choices of $K$ and $S$.
    By construction, set $N(S)$ is a union of $V'$ and vertices from $K\setminus S$.
    We are going to show that
    \begin{itemize}
        \item for $k = \eps\sqrt{np}$, it holds $k\log n = o\left(np^s\right)$;
        \item with probability at least $1 - \exp(-2k\log n)$, $|N(S)| \leq np^s + k + 3\sqrt{np^sk\log n}$.
    \end{itemize}
    Given these two statements, the claim follows directly.
    
    First consider the case $p < \eps^2/100$, so $s = 1$.
    We can assume that $p \gg n^{-1/3}$, as when $p = O(n^{-1/3})$ we can find the maximum clique using the enumeration algorithm from \cref{sec:enumeration}.
    Then $k = O(\sqrt{np}) = O(n^{1/2})$, while $np^s = np = \Omega(n^{2/3})$, therefore $k\log n = o(np^s)$ holds.
    Since $\E{|V'|} = (n - k)p^s$, by Chernoff bound the probability of $|V'| > (n - k)p^s + 3\sqrt{(n - k)p^sk\log n}$ is at most $\exp(-2k\log n)$, and we get the desired.
    
    Now consider the case $\eps^2/100 \leq p \leq c$, so $p$ is a constant.
    But then $k\log n = O(\sqrt{n}\log n) = o(n) = o(np^s)$ since $s$ is a constant, and Chernoff bound again gives us $|V'| \leq np^s + 3\sqrt{np^sk\log n}$ with probability at least $1 - \exp(-2k\log n)$.
    
    
    Uniting $V'$ with $K\setminus S$ cannot add more than $|K| = k$ vertices, therefore $|N(S)| \leq |V'| + |K| \leq np^s + k + 3\sqrt{np^sk\log n}$, with probability at least $1 - \exp(-2k\log n)$.
\end{proof}

By the above claim and our choice of $s$  we now have that $k -s > \frac{5}{1-p}\sqrt{|N(S)|p}$, where $k-s$ is the size of the clique planted in $G'_{S, K}$. Consequently, we are in a position to apply 
\cref{Gsemialgo} on $G[N(S)]$, and conclude that the algorithm given in the proof of the theorem finds the maximum clique in $G[N(S)]$. This indeed holds almost surely for every particular choice of $K\subset V$ and $S\subset K$, but we are not done yet, as we want this to hold for all choices of $K$ and $S$ in $G' 
\sim G(n, p)$. To reach such a conclusion we need to analyse the failure probability of \cref{Gsemialgo} more closely, so as to be able to take a union bound over all choices of $K$ and $S$. This union bound involves ${n \choose k} \cdot {k \choose s} \simeq \exp(k\log n)$ events (the term  ${k \choose s}$ is negligible compared to ${n \choose k}$, because $s$ is a constant).

Indeed the failure probability for \cref{Gsemialgo} can withstand such a union bound. This is because the proof of \cref{Gsemialgo} is based on earlier claims whose failure probability is at most $\exp(-2k\log n)$. This upper bound on the failure probability is stated explicitly in  \cref{Gsemitheta} and \cref{Gcloseneigh}, and can be shown to also hold in claims that do not state it explicitly (such as \cref{pro:maxIS}, \cref{lem:setH} and \cref{lem:setF}, and versions of them  generalized to arbitrary $p$), using analysis similar to that of the proof of \cref{Gneigh}.

\section{Hardness}\label{sec:newhardnessproof}

\subsection{Maximum Independent Set in balanced graphs}

\begin{definition}
	Given a graph $H$, denote its average degree by $\alpha$.
	A graph $H$ is balanced if every induced subgraph of $H$ has average degree at most $\alpha$.
\end{definition}

\begin{theorem}\label{nph}
	For any $0 < \eta \leq 1$, determining the size of the maximum independent set in a balanced graph with average degree $2 < \alpha < 2 + \eta$ is NP-hard.
\end{theorem}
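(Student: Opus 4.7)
The plan is to reduce Maximum Independent Set on connected cubic (3-regular) graphs, which is classically NP-hard, to our problem. Given $\eta \in (0,1]$, I fix an even integer $\ell$ satisfying $\tfrac{6(\ell+1)}{3\ell+2} < 2 + \eta$; such $\ell$ exists since the left-hand side tends to $2$ as $\ell \to \infty$. For a connected cubic graph $G$ on $n$ vertices, let $H = G_\ell$ be obtained by subdividing each edge of $G$ exactly $\ell$ times, replacing every edge $uv \in E(G)$ by a path $u, w_1^{uv}, \dots, w_\ell^{uv}, v$. Then $|V(H)| = n + \tfrac{3n\ell}{2}$ and $|E(H)| = \tfrac{3n(\ell+1)}{2}$, so the average degree of $H$ is $\tfrac{6(\ell+1)}{3\ell+2}$, which lies strictly in $(2, 2+\eta)$ by construction.

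Next I would verify that the reduction preserves NP-hardness by showing $\alpha(H) = \alpha(G) + \tfrac{\ell}{2}|E(G)|$. Given a choice $S \subseteq V(G)$ of which original vertices to include in an independent set of $H$, the standard formula for the maximum independent set of a path with forbidden endpoints, together with $\ell$ being even, gives that the $\ell$ internal subdivisions of the path $P_{uv}$ contribute $\ell/2$ to the IS if at most one of $u,v$ lies in $S$ and $\ell/2 - 1$ if both lie in $S$. Summing across edges, the total IS size equals $|S| + \tfrac{\ell}{2}|E(G)| - e_G(S)$, where $e_G(S)$ is the number of edges of $G$ inside $S$. The identity $\max_S (|S| - e_G(S)) = \alpha(G)$ holds: the upper bound follows by iteratively removing one endpoint of each edge inside $G[S]$ to produce an IS of size at least $|S| - e_G(S)$, and the lower bound by taking $S$ to be a maximum IS of $G$.

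The main obstacle is showing that $H$ is balanced. I would analyse a densest induced subgraph $S^* \subseteq V(H)$, with density $\rho^* = e(H[S^*])/|S^*|$. Since $\rho(H) > 1$ and $S^*$ is a maximizer, $\rho^* > 1$; and by the standard fact that in a densest subgraph every vertex has degree at least the density (otherwise deleting it strictly increases the density), combined with integrality, every $v \in S^*$ has $d_{S^*}(v) \geq 2$. Subdivision vertices have degree $2$ in $H$, so each subdivision $w \in S^*$ has both of its path-neighbors in $S^*$; iterating along the containing path shows that if any subdivision of a path $P_e$ belongs to $S^*$, then the entire path $P_e$, including both original endpoints, lies in $S^*$. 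Setting $V^* = S^* \cap V(G)$ and $E^* = \{e \in E(G) : P_e \subseteq S^*\}$ gives $|S^*| = |V^*| + \ell|E^*|$ and $e(H[S^*]) = (\ell+1)|E^*|$. The inequality $\rho^* \leq \rho(H) = \tfrac{3(\ell+1)}{3\ell+2}$ simplifies algebraically to $|E^*| \leq \tfrac{3}{2}|V^*|$, which holds because $E^* \subseteq E(G[V^*])$ and $G[V^*]$, being a subgraph of the $3$-regular graph $G$, has at most $\tfrac{3}{2}|V^*|$ edges.

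Combining these pieces, the constructed $H$ is balanced, has average degree in $(2, 2+\eta)$, and $\alpha(G)$ is recoverable in polynomial time from $\alpha(H)$. Since Max IS on connected cubic graphs is NP-hard, so is Max IS on balanced graphs of average degree in $(2, 2+\eta)$.
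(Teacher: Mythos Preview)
Your proof is correct and follows essentially the same approach as the paper: both arguments reduce from Maximum Independent Set on cubic graphs by subdividing every edge an even number of times (the paper's $2t$ is your $\ell$), yielding the same average degree $2 + \frac{1}{3t+1} = \frac{6(\ell+1)}{3\ell+2}$ and the same independent-set shift. The only cosmetic differences are in how balancedness is argued---the paper phrases it via lower-bounding the length of degree-2 paths between degree-3 vertices, while you count $|V^*|$ and $|E^*|$ directly and reduce to $|E^*| \le \tfrac{3}{2}|V^*|$---and in how the independent-set correspondence is established, where your $\max_S(|S|-e_G(S))=\alpha(G)$ computation is a clean alternative to the paper's direct exchange argument.
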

\begin{proof}
	It is well known that given a parameter $k$ and a 3-regular graph $H$, determining whether $H$ has an independent set of size $k$ is NP-hard. For simplicity of upcoming notation, let $2n$ denote the number of vertices in $H$. Given a positive integer parameter $t$, we describe a polynomial time reduction $\Rr$ such that given a 3-regular graph $H$ it holds that: 
	\begin{itemize}
		\item $\Rr(H)$ is a balanced graph with average degree $2 + \frac{1}{3t + 1}$.
		\item $\Rr(H)$ has an independent set of size $k + 3nt$ if and only if $H$ has an independent set of size $t$.
	\end{itemize}
	
	By choosing $t > \frac{1}{3\eta - 6}$, the theorem is proved.
	
	Let $H$ be a 3-regular graph on $2n$ vertices. The graph $\Rr(H)$ is obtained from $H$ 
	by replacing every edge $(u,v)$ of $H$ by a path with $2t$ intermediate vertices that connects between $u$ and $v$.
	There are $3n$ edges in $H$, so by doing so we add $2t \cdot 3n$ vertices of degree $2$.
	The average degree of the resulting graph $\Rr(H)$ is
	\[    \alpha = \frac{2t \cdot 3n \cdot 2 + 2n \cdot 3}{2t \cdot 3n + 2n} = \frac{6t + 3}{3t + 1} = 2 + \frac{1}{3t + 1} \]
	as desired.

	To see that the graph $\Rr(H)$ is balanced, consider a subset of vertices $S^*\subseteq \Rr(H)$, and let $\alpha^* > 2$ denote the average degree of the induced subgraph $\Rr(H)[S^*]$.
	W.l.o.g., we can assume that $\Rr(H)[S^*]$ has minimum degree at least 2 (because if $\Rr(H)[S^*]$ has a vertex of degree at most~1, removing it would result in a subgraph of higher average degree). Let $V_3$ be the set of vertices of degree~3 in $\Rr(H)[S^*]$. All remaining vertices of $\Rr(H)[S^*]$ have degree~2. As no two degree~3 vertices in $\Rr(H)$ are neighbors, $\Rr(H)[S^*]$ is composed of degree~3 vertices, and non-empty disjoint paths connecting between them. As no path connecting two degree~3 vertices in $\Rr(H)$ has fewer than $2t$ vertices (it may have more than $2t$ vertices, if it goes through original vertices of $H$), the number of degree~2 vertices in $\Rr(H)[S^*]$ is at least $\frac{3|V_3|}{2}\cdot 2t$. Hence $\alpha^* \le 2 + \frac{1}{3t + 1}$, as desired.
	
	Every independent set $I$ of size $k$ in $H$ gives rise to an independent set of size $k + 3nt$ in $\Rr(H)$, because in $\Rr(H)$ we can take the vertices of $I$ and $t$ vertices from each of the $3n$ length $t$ paths (at least one of the two end vertices of each path is not adjacent to a vertex in $I$). Likewise, every independent set of size $k + 3nt$ in $\Rr(H)$ gives rise to an independent set of size $k$ in $\Rr(H)$. Note that $I$ contains at most $t$ vertices from any single path of $\Rr(H)$, and moreover, can be assumed to contain exactly $t$ vertices from any single path of $\Rr(H)$ (if $I$ contains fewer than $t$ vertices from the path connecting $u$ and $v$, then by taking all even vertices of the path one gains a vertex, and this compensates for the at most one vertex that is lost from $I$ due to the possible need to remove $v$ from $I$). As $I$ contains $3nt$ path vertices, its remaining $k$ vertices are from $H$. Moreover, they form an independent set in $H$ (no two vertices $u$ and $v$ adjacent in $H$ can be in this set, because then the path connecting them in $\Rr(H)$ cannot contribute $t$ vertices to $I$).
\end{proof}

\subsection{Notation to be used in the proof of \cref{thm:closeStatistics}}
\label{sec:notation}

In the coming sections we prove \cref{thm:closeStatistics}. For simplicity of the presentation (and without affecting the implications towards the proof of \cref{thres3}), we describe the distributions $G_H(n, p)$, $G_H(n, p, k)$ and $A_HG(n, p, k)$ in a way that differs from their description in \cref{sec:hardness}. Based on these descriptions, we will present $X_H(G)$, a key random variable associated with these distributions. This random variable is easier to work with than the random variable referred to in \cref{lem:manyH}, and hence we shall later slightly change the formulation of \cref{lem:manyH} (without affecting the correctness of  \cref{thm:closeStatistics}). 

It will be convenient for us to think of $G$ as an $n$ vertex graph with vertices numbered from~1 to $n$, and of $H$ as an $m$ vertex graph with vertices numbered from~1 to $m$. For simplicity, we assume that $m$ divides $n$ (this assumption can easily be removed with only negligible effect on the results).  Given an $n$-vertex graph $G$, we partition the vertex set of $G$ into $m$ disjoint subsets of vertices, each of size $\frac{n}{m}$. Part $i$ for $1 \le i \le m$ contains the vertices $[(i-1)\frac{n}{m} + 1, i\frac{n}{m}]$. A vertex set $S$ of size $m$ that contains one vertex in each part is said to \textit{obey the partition}. 

\begin{definition}
\label{def:G_H}
    Let $H$ be an arbitrary $m$-vertex graph, and let $n$ be such that $m$ divides $n$, let $k' \le m$ be a parameter (specifying the conjectured size of the maximum independent set in $H$), and let $k$ satisfy $k' \le k \le n - m$. We say $G_H$ is distributed by $G_H(n, p)$ (for $p \in (0, 1)$) and that $\wG_H$ is distributed by $G_H(n, p, k)$ if they are created by the following random process. 
    \begin{enumerate}
    \item Generate a random graph $G' \sim G(n, p)$, with a partition of its vertex set into $m$ parts.    
        \item Choose a random subset $M$ of $m$ vertices from $G'$ that obeys the partition.
        \item For every $1 \le i \le m$, associate vertex $i$ of $H$ with the vertex of $M$ in the $i$th part, and replace the induced subgraph of $G'$ on $M$ by the graph $H$. This gives $G_H \sim G_H(n,  p)$.
        \item Within the non-neighbors of $M$, plant at random an independent set $I'$ of size $k-k'$, giving the graph $\wG_H \sim G_H(n, p, k)$. (If $M$ has fewer than $k - k'$ non-neighbors in $G_H$, an event that will happen with negligible probability for our choice of parameters, then we say that this step {\em fails}, and instead we plant a random independent set of size $k$ in $G_H$.)
    \end{enumerate}
\end{definition}

Though the description is different, it is not difficult to show that the distributions $G_H(n, p)$ and $G_H(n, p, k)$ are identical to the corresponding distributions described in \cref{sec:hardness}.

We also change the description of distribution $A_HG(n, p, k)$ from \cref{sec:hardness} in a way analogous to the above, by fixing a partition of the vertices of $G' \sim G(n,  p)$ and requiring the adversary to choose in $G'$ an induced copy of $H$ that obeys the partition (vertex $i$ of $H$ must be in part $i$ of the partition, for every $1 \le i \le m$). As in \cref{sec:hardness}, the adversary also plants a random independent set of size $k - k'$ among the non-neighbors of $H$. If either $G'$ does not have an induced copy of $H$ that obeys the partition, of there are too few non-neighbors of $H$, we say that the adversary {\em fails}, and we revert to the default procedure of planting a random independent set of size $k$ in $G'$.

We note that there is a (negligible) difference in the probability of failure in the above description of $A_HG(n,  p, k)$ compared to that of \cref{sec:hardness}, because it might be that $G'$ has an induced copy of $H$, but no induced copy of $H$ that obeys the partition.

For a graph $G$ and a given partition, $X_H(G)$ denotes the number of sets $S$ of size $m$ obeying the partition, such that the subgraph of $G$ induced on $S$ is $H$ (with vertex $i$ of $H$ in part $i$ of the partition, for every $1 \le i \le m$).
For a graph $G$ chosen at random from some distribution, $X_H(G)$ is a random variable.

\subsection{Proof of \cref{lem:manyH}}
\label{sec:proofLem2.3}

As noted in \cref{sec:notation}, we slightly change \cref{lem:manyH}. Instead of referring to all induced copies of $H$, we refer only to induced copies of $H$ that obey the partition. The random variable $X_H(G)$ denotes their number. The main technical content of this modified \cref{lem:manyH} is handled by the following lemma.

\begin{lemma}\label{copy}
    Let $0 < \eps < 1/7$ be a constant, and let $G \sim G(n, p)$ be a random graph with $p \in (0, 1)$.
    Let $H$ be a balanced graph on $m$ vertices with average degree $2 < \alpha < 3$.
	If $m \le \min[\sqrt{\frac{\eps}{p}}, 2^{-1/4}p^{\alpha/4}\sqrt{\eps n}]$ (or equivalently, $\eps \geq m^2p$ and $\eps^2 \geq 2\frac{m^4}{n^2p^\alpha}$), then for every $\beta \in [0, 1)$
	\[     \P{X_H(G) \leq \beta\E{X_H(G)}} \leq \frac{4\eps}{(1 - \beta)^2}.\]
\end{lemma}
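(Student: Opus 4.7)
The plan is to apply the second moment method. Chebyshev's inequality gives
\[
\P{X_H(G) \le \beta\, \E{X_H(G)}} \le \frac{\V{X_H(G)}}{(1-\beta)^2\, \E{X_H(G)}^2},
\]
so it suffices to prove $\V{X_H(G)}/\E{X_H(G)}^2 \le 4\eps$. Throughout, I will write $X_H$ as a sum of indicators: $X_H = \sum_S \mathds{1}_S$, where $S$ ranges over all $m$-tuples $(v_1,\dots,v_m)$ obeying the partition and $\mathds{1}_S$ is the indicator that the natural labelled copy of $H$ is induced on $S$.

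For the second moment, the key calculation is a bookkeeping on how two tuples $S=(v_i)$ and $S'=(w_i)$ interact. Let $J(S,S') \subseteq [m]$ be the set of coordinates on which $v_i = w_i$. Since every non-intersecting pair of edge-slots is independent, while the $\binom{|J|}{2}$ edge-slots lying inside $J$ are shared by $S$ and $S'$ (and must agree with $H[J]$), a direct computation yields
\[
\P{\mathds{1}_S = \mathds{1}_{S'} = 1} = \frac{\P{\mathds{1}_S=1}^2}{f_H(J)}, \qquad f_H(J) := p^{e(H[J])}(1-p)^{\binom{|J|}{2}-e(H[J])}.
\]
The number of ordered pairs $(S,S')$ with $J(S,S')$ exactly equal to a given $J$ of size $j$ is $(n/m)^m(n/m - 1)^{m-j} \le (n/m)^{2m-j}$; the $j=0$ term contributes at most $\E{X_H}^2$, so after cancellation
\[
\frac{\V{X_H(G)}}{\E{X_H(G)}^2} \;\le\; \sum_{j=1}^{m} \Big(\tfrac{m}{n}\Big)^{j} \sum_{|J|=j} \frac{1}{f_H(J)}.
\]

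I then split the sum at $j=1$ and $j\ge 2$. For $j=1$ every $H[J]$ is a single vertex so $f_H(J)=1$, and the contribution equals $m^2/n$; the hypothesis $m^4 \le (\eps^2/2)\,n^2 p^{\alpha}$ together with $p^\alpha \le 1$ gives $m^2/n \le \eps/\sqrt{2}$. For $j\ge 2$ I invoke balancedness: every induced subgraph satisfies $e(H[J]) \le \alpha|J|/2$, hence $p^{-e(H[J])} \le p^{-\alpha j/2}$. The $(1-p)$ factor in $f_H(J)$ is controlled by the first hypothesis: $\eps \ge m^2 p$ and $\eps < 1/7$ force $p\binom{m}{2} \le \eps/2 < 1/2$, so $(1-p)^{\binom{j}{2}} \ge 1/2$ and $1/f_H(J) \le 2p^{-\alpha j/2}$. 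Using $\binom{m}{j}\le (em/j)^j$ and the hypothesis in the form $m^2/(n p^{\alpha/2}) \le \eps/\sqrt{2}$, the $j$-th term is bounded by $2(e\eps/(j\sqrt{2}))^j$, a decreasing-in-$j$ sequence whose sum for $j\ge 2$ is $O(\eps^2)$. Combining, $\V{X_H}/\E{X_H}^2 \le \eps/\sqrt{2} + O(\eps^2) \le 4\eps$ for $\eps<1/7$.

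The main conceptual obstacle is not any single step but rather the combined treatment of the $p^{-e(H[J])}$ and $(1-p)^{-(\binom{|J|}{2}-e(H[J]))}$ factors: $p^{-e(H[J])}$ is kept in check by the balanced assumption (without it, a dense subgraph $H[J]$ could make the variance blow up even for large $j$), while $(1-p)^{-\binom{j}{2}}$ is kept in check by $\eps \ge m^2 p$. The quantitative assumptions $\eps \ge m^2 p$ and $\eps^2 \ge 2m^4/(n^2 p^\alpha)$ are exactly what is needed to make the $j=1$ and $j\ge 2$ contributions each absorb into a constant multiple of $\eps$.
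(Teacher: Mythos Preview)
Your proof is correct and follows the same second-moment-plus-Chebyshev template as the paper, but the execution differs in a way worth noting. The paper introduces an auxiliary count $Y_H(G)$ of \emph{not-necessarily-induced} labelled copies of $H$, bounds $\E{Y_H^2}\le (1+\eps)\E{Y_H}^2$ (where the absence of $(1-p)$ factors makes the pair calculation cleaner), and then transfers to $X_H$ via $X_H\le Y_H$ together with $\E{X_H}\ge(1-\eps)\E{Y_H}$; the hypothesis $\eps\ge m^2p$ enters only in this last comparison. You instead work with $X_H$ directly and carry the $(1-p)^{-(\binom{j}{2}-e(H[J]))}$ factor through the variance calculation, using $\eps\ge m^2p$ to cap it by $2$. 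The paper also parametrizes overlapping pairs by $t=m-j$ (the number of \emph{differing} coordinates) and splits the sum at $t=m/2$, whereas you split at $j=1$ versus $j\ge 2$; in both arguments the dominant contribution is the single-vertex overlap ($j=1$, equivalently $t=m-1$), giving the $\eps/\sqrt{2}$ term. Your route is a bit more direct and avoids the detour through $Y_H$; the paper's route has the minor advantage that the second-moment computation for $Y_H$ involves only powers of $p$, which keeps the bookkeeping lighter.
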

\begin{proof}	
    Let $w(n) := np$, so $p = w(n)/n$.
	Let $Y_H(G)$ be a random variable counting the number of sets $S$ obeying the partition that have $H$ as an edge induced subgraph of $G$, but may have additional internal edges.
	By definition, $X_H(G) \leq Y_H(G)$ and
	\[   \E{Y_H(G)} = \left(\frac{n}{m}\right)^mp^{\frac{\alpha m}{2}} = \left(\frac{n}{m}\right)^m\cdot \left(\frac{w(n)}{n}\right)^{\frac{\alpha m}{2}} = \left(\frac{w(n)^{\alpha}}{m^2n^{\alpha - 2}}\right)^{\frac{m}{2}}.\]
	A set $S$ in $Y_H(G)$ contributes to $X_H(G)$ if it has no internal edges beyond those of $H$.
	This happens with probability 
	\[   (1 - p)^{\binom{m}{2} - \frac{\alpha m}{2}} \geq 1 - \binom{m}{2}\frac{w(n)}{n} \geq 1 - \frac{m^2w(n)}{n} \geq 1 - \eps,\]
	so
	\[   \E{X_H(G)} \geq (1 - p)^{\binom{m}{2} - \frac{\alpha m}{2}}\E{Y_H(G)} \geq (1 - \eps)\E{Y_H(G)}.\]
	
	Note that if $\E{Y_H(G)^2} \leq (1 + \eps)\E{Y_H(G)}^2$, the inequality above gives us $\E{X_H(G)^2} \leq \frac{1 + \eps}{(1 - \eps)^2}\E{X_H(G)}^2$.
	We will now compute $\E{Y_H(G)^2}$.
	Given the occurrence of $H$, consider another potential occurrence $H'$ that differs from it by $t$ vertices.
	Since $H$ is balanced graph,
	\[|E(G[H'])| - |E(G[H'\cap H])| \geq \frac{\alpha |V(G[H'])|}{2} - \frac{\alpha|V(G[H\cap H'])|}{2} \geq \frac{\alpha t}{2}.\]
	Hence, the probability that $H'$ realized conditioned on $H$ being realized is at most $p^{\frac{\alpha t}{2}}$.
	The number of ways to choose $t$ other vertices is $\binom{m}{t}\left(\frac{n}{m}\right)^t$ (first choose $t$ groups out of $m$ in the partition, then choose one vertex in each group).
	Hence, the expected number of such occurrences is 
	\[    \mu_t \leq \binom{m}{t}\left(\frac{n}{m}\right)^tp^{\frac{\alpha t}{2}} = \binom{m}{t}\left(\frac{w(n)^{\alpha}}{m^2n^{\alpha - 2}}\right)^{\frac{t}{2}}. \]
	It follows that $\mu_m \leq \E{Y_H(G)}$.
	Moreover,
	\[ \sum_{t = 1}^{m/2}\frac{\mu_t}{\E{Y_H(G)}} \leq \sum_{t = 1}^{m/2}\binom{m}{t}\left(\frac{w(n)^{\alpha}}{m^2n^{\alpha - 2}}\right)^{\frac{t - m}{2}} \leq 2^m\left(\frac{w(n)^{\alpha}}{m^2n^{\alpha - 2}}\right)^{-\frac{m}{4}} = \left(\frac{16m^2n^{\alpha - 2}}{w(n)^{\alpha}}\right)^{\frac{m}{4}}.\]
	Furthermore,
	\[  \sum_{t  = m/2}^{m - 1}\frac{\mu_t}{\E{Y_H(G)}} \leq \sum_{t = m/2}^{m - 1}m^{m - t}\left(\frac{w(n)^{\alpha}}{m^2n^{\alpha - 2}}\right)^{\frac{t - m}{2}} = \sum_{t  = m/2}^{m - 1}\left(\frac{w(n)^{\alpha}}{m^4n^{\alpha - 2}}\right)^{\frac{t - m}{2}}.\]
	When $w(n)^\alpha \geq 2m^4n^{\alpha - 2}$ the term $t = m - 1$ dominates, and hence the sum is at most roughly $\sqrt{\frac{m^4n^{\alpha - 2}}{w(n)^{\alpha}}}$.
	Since $\eps^2 \geq 2\frac{m^4n^{\alpha  - 2}}{w(n)^\alpha}$ we have 
    \[
	    \sum_{t = 1}^{m  - 1}\frac{\mu_t}{\E{Y_H(G)}} \leq \left(\frac{16m^2n^{\alpha - 2}}{w(n)^{\alpha}}\right)^{\frac{m}{4}} + (1 + o(1))\sqrt{\frac{m^4n^{\alpha - 2}}{w(n)^{\alpha}}}
	    \leq (1 + o(1))\sqrt{\frac{m^4n^{\alpha - 2}}{w(n)^{\alpha}}} \leq \eps\]
	and $\sum_{t = 1}^m\mu_t \leq (1 + \eps)\E{Y_H(G)}$.
	Hence can bound $\E{Y_H(G)^2} \leq \E{Y_H(G)}\sum_{t = 1}^m\mu_t \leq (1 + \eps)\E{Y_H(G)}^2$, and
	\[\E{X_H(G)^2} \leq \E{Y_H(G)^2} \leq (1 + \eps)\E{Y_H(G)}^2 \leq \frac{1 + \eps}{(1 - \eps)^2}\E{X_H(G)}^2 \leq (1 + 4\eps)\E{X_H(G)}^2.\]
	The last inequality holds since $\eps < 1/7$.
	We get that $\V{X_H(G)} = \E{X_H(G)^2} - \E{X_H(G)}^2 \leq 4\eps\E{X_H(G)}^2$.
	By Chebyshev's inequality we conclude that
	\begin{multline*}
	    \P{X_H(G) \leq \beta\E{X_H(G)}} = \P{X_H(G) \leq \E{X_H(G)} - (1 -\beta)\E{X_H(G)}} \leq\\\leq  \frac{\V{X_H(G)}}{(1 - \beta)^2\E{X_H(G)}^2}\leq \frac{4\eps}{(1 -\beta)^2},
	\end{multline*}
	as desired.
\end{proof}
\begin{corollary}[\cref{lem:manyH} restated]\label{subcopy}
	For every $0 < \delta < 1$, $0 < \eps < 1/7$, $2 < \alpha < \min(\frac{2}{1 - \delta}, 3)$ and $0 < \rho < \min[\frac{1 - \delta}{2}, \frac{2 - \alpha(1 - \delta)}{4}]$ the following holds for large enough $n$.
	Let $G \sim G(n, p)$ be a random graph with $p = n^{\delta - 1}$, and let $H$ be a balanced graph on $m = n^{\rho}$ vertices and with average degree $\alpha$. Then $\E{X_H(G)}\xrightarrow{n\to\infty}+\infty$, and for every $\beta \in [0, 1)$
	\[     \P{X_H(G) \leq \beta\E{X_H(G)}} \leq \frac{4\eps}{(1 - \beta)^2}.\]
\end{corollary}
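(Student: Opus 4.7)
The plan is to derive the corollary as an essentially direct instantiation of \cref{copy}. The only substantive work is verifying that the two hypotheses of \cref{copy}, namely $\eps \geq m^2 p$ and $\eps^2 \geq 2\frac{m^4}{n^2 p^\alpha}$, are implied by the constraints $\rho < \frac{1-\delta}{2}$ and $\rho < \frac{2 - \alpha(1-\delta)}{4}$ for all sufficiently large $n$, and separately that $\E{X_H(G)} \to \infty$.

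First I would plug in $p = n^{\delta-1}$ and $m = n^{\rho}$ to rewrite each hypothesis as a power of $n$. For the first hypothesis, $m^2 p = n^{2\rho + \delta - 1}$, and since $\rho < \frac{1-\delta}{2}$ we have $2\rho + \delta - 1 < 0$, so $m^2 p \to 0$, and in particular $m^2 p \leq \eps$ for $n$ large enough (recall $\eps$ is a fixed positive constant). For the second hypothesis, $\frac{m^4}{n^2 p^\alpha} = n^{4\rho - 2 + \alpha(1-\delta)}$, and since $\rho < \frac{2 - \alpha(1-\delta)}{4}$ the exponent is strictly negative, so this ratio tends to $0$ and is eventually at most $\eps^2/2$. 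Thus both hypotheses of \cref{copy} hold for all sufficiently large $n$, and the variance inequality of the corollary follows at once from the corresponding inequality in \cref{copy}.

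It remains to verify divergence of the expectation. Using the bound $\E{X_H(G)} \geq (1-\eps)\E{Y_H(G)}$ established inside the proof of \cref{copy}, together with the formula $\E{Y_H(G)} = \bigl(\frac{w(n)^{\alpha}}{m^2 n^{\alpha-2}}\bigr)^{m/2}$ with $w(n) = np = n^{\delta}$, I get
\[
    \E{X_H(G)} \;\geq\; (1-\eps)\, n^{\,\frac{m}{2}(2 - \alpha(1-\delta) - 2\rho)}.
\]
The exponent of $n$ is strictly positive because $\rho < \frac{2 - \alpha(1-\delta)}{4} < \frac{2 - \alpha(1-\delta)}{2}$, and $m = n^{\rho} \to \infty$, so $\E{X_H(G)} \to \infty$ as $n \to \infty$.

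There is no real obstacle here beyond bookkeeping; the only point that requires care is ensuring that the exponent constraint $\rho < \frac{2 - \alpha(1-\delta)}{4}$ is genuinely what is needed for the second-moment hypothesis of \cref{copy}, and checking (as above) that the weaker constraint $\rho < \frac{2 - \alpha(1-\delta)}{2}$ already guarantees divergence of the mean. The condition $\alpha < \frac{2}{1-\delta}$ is used implicitly to guarantee that the interval for $\rho$ is non-empty (so that $2 - \alpha(1-\delta) > 0$), and the condition $\alpha < 3$ just carries over the standing assumption of \cref{copy}.
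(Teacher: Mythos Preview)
Your proposal is correct and follows essentially the same approach as the paper: verify that the two hypotheses of \cref{copy} reduce to the stated exponent inequalities on $\rho$, apply \cref{copy} for the deviation bound, and then use the lower bound $\E{X_H(G)} \ge (1-\eps)\E{Y_H(G)}$ from inside the proof of \cref{copy} to show the expectation diverges. The paper additionally sharpens the divergence step by observing that $2 - \alpha(1-\delta) - 2\rho > 2\rho$ (so $\E{X_H(G)} \ge (1-\eps)n^{m\rho}$), but your weaker positivity argument is already sufficient.
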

\begin{proof}
We first note that $\alpha < \frac{2}{1 - \delta}$ implies that $2 - \alpha(1 - \delta) > 0$, and hence we can take $\rho > 0$ in the above Corollary.
	The inequality $\rho < (1 - \delta)/2$ implies (for large enough $n$) that 
	\[ m = n^{\rho} < \sqrt{\frac{\eps}{n^{\delta - 1}}} = \sqrt{\frac{\eps}{p}} \]
	Likewise, $\rho < (2 - \alpha(1 - \delta))/4$ implies (for large enough $n$) that
	\[ 2m^4 \le \eps^2 n^{2 - \alpha(1 - \delta)} = \eps^2 n^2 p^{\alpha} \]
	The above bounds on $m$ satisfy the requirements of \cref{copy}, and hence  
		\[     \P{X_H(G) \leq \beta\E{X_H(G)}} \leq \frac{4\eps}{(1 - \beta)^2}.\]
		
	To show that $\E{X_H(G)}\xrightarrow{n\to\infty}+\infty$, recall the notation $w(n) = np$ and the following bound from the proof of \cref{copy} 
	\[   \E{X_H(G)}\geq (1 - \eps) \left(\frac{w(n)^{\alpha}}{m^2n^{\alpha - 2}}\right)^{\frac{m}{2}}.\]
	As $w(n) = n^\delta$ and $m = n^\rho$,
	\[  \frac{w(n)^{\alpha}}{m^2n^{\alpha - 2}} = n^{-2\rho - \alpha(1 - \delta) + 2}.\]
	Since $\rho < (2 - \alpha(1 - \delta))/4$, $2 - \alpha(1 - \delta) - 2\rho > 2\rho$, so $\E{X_H(G)} \geq (1 - \eps)n^{m\rho} \xrightarrow{n\to\infty}+\infty$.
\end{proof}

\subsection{Proofs of \cref{lem:closeStatistics} and \cref{thm:closeStatistics}}
\label{sec:proofThm2.3}

\begin{lemma}[\cref{lem:closeStatistics} restated]\label{closedist}
Let $p(G)$ denote the probability to output $G$ according to $G(n, p)$, and let $p_H(G)$ denote the probability to output $G$ according to $G_H(n, p)$.
	For every constant $\beta \in [0, 1)$, with probability at least $1 - \frac{4\eps}{(1 - \beta)^2}$ over the choice of graph $G \sim G(n, p)$, it holds that $p_H(G) \geq \beta p(G)$.
\end{lemma}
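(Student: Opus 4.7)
The plan is to compute the ratio $p_H(G)/p(G)$ explicitly and identify it with $X_H(G)/\E{X_H(G)}$, after which \cref{subcopy} gives the lemma for free. First I would enumerate the ways in which the process $G_H(n,p)$ of \cref{def:G_H} can output a specific graph $G$. A sample is determined by the underlying $G' \sim G(n,p)$ together with the uniformly random set $M$ of $m$ vertices obeying the partition; the output equals $G$ iff $M$ is chosen as some set $S$ of size $m$ that obeys the partition and satisfies $G[S] = H$ (with the labelling forced by the partition), and $G'$ agrees with $G$ on every edge slot outside $S$ (slots inside $S$ are overwritten by $H$ and therefore do not constrain $G'$). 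For each such $S$, the joint probability of this event is $(m/n)^m \cdot p^{e(G)-e(H)}(1-p)^{\binom{n}{2}-\binom{m}{2}-(e(G)-e(H))}$, where $e(H) = \alpha m/2$.

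Second, I would sum over the $X_H(G)$ admissible choices of $S$ and divide by $p(G) = p^{e(G)}(1-p)^{\binom{n}{2}-e(G)}$ to obtain
\[
\frac{p_H(G)}{p(G)} \;=\; \frac{X_H(G)}{(n/m)^m \cdot p^{e(H)}(1-p)^{\binom{m}{2}-e(H)}}.
\]
The denominator is exactly $\E{X_H(G)}$ under $G \sim G(n,p)$: by linearity, summing over the $(n/m)^m$ sets obeying the partition, each contributes probability $p^{e(H)}(1-p)^{\binom{m}{2}-e(H)}$ that the induced subgraph is $H$ with the correct labelling. Hence $p_H(G)/p(G) = X_H(G)/\E{X_H(G)}$, a clean pointwise identity that is the heart of the argument.

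Finally, applying \cref{subcopy} with the same $\beta$ tells us that $\P{X_H(G) \le \beta \E{X_H(G)}} \le 4\eps/(1-\beta)^2$, and on the complementary event we have $p_H(G) > \beta\, p(G)$, which gives the claimed bound. The only real subtlety is the bookkeeping in the first step: one must treat the edge slots inside $M$ as \emph{overwritten} (hence unconstrained in $G'$) rather than conditioned upon, so that the $\binom{m}{2}$ missing $(1-p)$-factors cancel cleanly and the ratio $p_H(G)/p(G)$ collapses to $X_H(G)/\E{X_H(G)}$. I do not expect any genuine obstacle beyond this; the quantitative content has already been absorbed into \cref{subcopy}.
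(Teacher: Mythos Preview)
Your proposal is correct and follows essentially the same route as the paper: both compute $p_H(G)$ by summing over the $X_H(G)$ admissible placements of $M$, obtain the pointwise identity $p_H(G)/p(G)=X_H(G)/\E{X_H(G)}$ (with the same recognition that the denominator $(n/m)^m p^{e(H)}(1-p)^{\binom{m}{2}-e(H)}$ equals $\E{X_H(G)}$), and then invoke \cref{subcopy}. Your remark about the internal edge slots being overwritten rather than conditioned upon is exactly the bookkeeping the paper carries out silently.
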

\begin{proof}
	Let $e$ be the number of edges in $G$ and consider $p_H(G)$.
	Out of the $\left(\frac{n}{m}\right)^m$ options to choose a subset $M$ in $G_H(n, p)$, only $X_H(G)$ options are such that the subgraph induced on $M$ is $H$, so that the resulting graph could be $G$.
	Since $H$ has average degree $\alpha$, it has exactly $\alpha m / 2$ edges.
	Note that 
	\[     \E{X_H(G)} = \left(\frac{n}{m}\right)^mp^{\frac{\alpha m}{2}}(1 - p)^{\binom{m}{2} - \frac{\alpha m}{2}}.\]
	Given that we chose a suitable $M$, the rest of the edges $(e - \frac{\alpha m}{2})$ of $G_H(n, p)$ should agree with $G$.
	It follows that 
	\begin{multline*}
	p_H(G) = \frac{X_H(G)}{\left(\frac{n}{m}\right)^m}p^{e - \frac{\alpha m}{2}}(1 - p)^{\binom{n}{2}- \left(\binom{m}{2} + e - \frac{\alpha m}{2}\right)} = \\
	= \frac{X_H(G)}{\left(\frac{n}{m}\right)^mp^{\frac{\alpha m}{2}}(1 - p)^{\binom{m}{2} - \frac{\alpha m}{2}}}p^e(1 - p)^{\binom{n}{2} - e} = \\=\frac{X_H(G)}{\E{X_H(G)}}p^e(1 - p)^{\binom{n}{2} - e} = \frac{X_H(G)}{\E{X_H(G)}}p(G).
	\end{multline*}
	By \cref{subcopy}, for every $\beta \in [0, 1)$,
    \[ \P[G\sim G(n, p)]{X_H(G) \leq \beta\E{X_H(G)}} \leq \frac{4\eps}{(1 - \beta)^2}.\]
    It follows that for every $G$ with $X_H(G) \geq \beta\E{X_H(G)}$ we have 
	\[  p_H(G) = \frac{X_H(G)}{\E{X_H(G)}}p(G) \geq \frac{\beta\E{X_H(G)}}{\E{X_H(G)}}p(G) = \beta p(G).\]
	Therefore, by \cref{subcopy}, for every $\beta \in [0, 1)$, for at least $1 - \frac{4\eps}{(1 - \beta)^2}$ fraction of all graphs $G \sim G(n, p)$ we will have $p_H(G) \geq \beta p(G)$.
\end{proof}

We now restate and prove \cref{thm:closeStatistics}. Recall that now $G_H(n, p, k)$ and $AG_H(n, p, k)$ refer to the distributions as defined in \cref{sec:notation}, rather that those defined in \cref{sec:hardness}.  

\begin{theorem}[\cref{thm:closeStatistics} restated]\label{thm:closedist}
Let $f$ be an arbitrary function that gets as input an $n$ vertex graph and outputs either $0$ or $1$. Let $p_A$ denote the probability that $f(G) = 1$ when $G \sim A_HG(n, p, k)$, and let $p_H$ denote the probability that $f(G) = 1$ when $G \sim G_H(n, p, k)$.
	For every constant $\beta \in [0, 1)$, it holds that $p_H \geq \beta (p_A - \frac{4\eps}{(1 - \beta)^2})$.
\end{theorem}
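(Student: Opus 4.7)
The plan is to view both $G_H(n,p,k)$ and $A_HG(n,p,k)$ as two-stage processes that first produce an intermediate pair $(G_0, M)$, where $G_0$ is the $n$-vertex graph just before the independent set of size $k-k'$ is planted and $M$ is the set of $m$ vertices supporting the copy of $H$, and then plant that independent set uniformly among the non-neighbors of $M$. Since the second stage is defined identically in both models, the discrepancy between $p_H$ and $p_A$ is entirely controlled by the joint law of $(G_0, M)$. Moreover, conditional on $G_0$, the law of $M$ is the same in both (uniform over induced copies of $H$ in $G_0$ that obey the partition), so both probabilities can be written as expectations of a common function $\overline{q}(G_0)\in[0,1]$ obtained by averaging $f$'s output over $M \mid G_0$ and the IS-planting randomness.

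Unwinding the definitions, the marginal of $G_0$ under $G_H(n,p,k)$ is exactly the density $p_H(G_0) = \tfrac{X_H(G_0)}{\E{X_H(G)}}\,p(G_0)$ that appears in the proof of \cref{closedist}, whereas the marginal under $A_HG(n,p,k)$ is $p(G_0)$ on $\{X_H(G_0)\ge 1\}$ with a failure branch (a default random IS planting) on $\{X_H(G_0)=0\}$. Thus $p_H = \sum_{G_0} p_H(G_0)\,\overline{q}(G_0)$ and $p_A \le \sum_{G_0 :\, X_H(G_0)\ge 1} p(G_0)\,\overline{q}(G_0) + \P{X_H(G_0)=0}$.

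The crucial step is to localize the sums to $\mathrm{GOOD} := \{G_0 :\, X_H(G_0)\ge \beta\,\E{X_H(G)}\}$. By \cref{closedist} (a direct consequence of \cref{subcopy} applied to $X_H$) the $G(n,p)$-mass of $\mathrm{GOOD}^c$ is at most $\tfrac{4\eps}{(1-\beta)^2}$, and for $\beta>0$ with $\E{X_H(G)}\ge 1$ (which holds for large $n$ by \cref{subcopy}) the failure event $\{X_H(G_0)=0\}$ is itself contained in $\mathrm{GOOD}^c$, so it is absorbed by the same slack. Since on $\mathrm{GOOD}$ one has $p_H(G_0)\ge \beta\,p(G_0)$ and everywhere $\overline{q}\le 1$, this gives $p_H \ge \beta\sum_{G_0\in\mathrm{GOOD}} p(G_0)\,\overline{q}(G_0) \ge \beta\bigl(p_A - \tfrac{4\eps}{(1-\beta)^2}\bigr)$, which is the desired conclusion.

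The main subtlety, rather than a serious obstacle, lies in the bookkeeping around the adversary's failure branch and the gap between $\mathrm{GOOD}$ and $\{X_H\ge 1\}$; the inclusion $\{X_H=0\}\subseteq \mathrm{GOOD}^c$ is what allows a single slack term $\tfrac{4\eps}{(1-\beta)^2}$ to cover both effects at once, so no additional union bound is needed beyond \cref{closedist}/\cref{subcopy}.
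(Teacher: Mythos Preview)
Your proof is correct and follows essentially the same approach as the paper. Both arguments couple the two distributions through the intermediate graph $G_0$ (the paper calls the favorable event ``typical'', you call it $\mathrm{GOOD}$), invoke \cref{closedist} to bound the $G(n,p)$-mass of the complement by $\tfrac{4\eps}{(1-\beta)^2}$, and use the key fact that conditioned on $G_0$ the planted copy $M$ is uniform over partition-obeying induced copies of $H$ in both models; your exposition makes the density ratio $p_H(G_0)=\tfrac{X_H(G_0)}{\E{X_H}}\,p(G_0)$ and the common averaging function $\overline{q}$ more explicit, but the logic is the same.
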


\begin{proof}
    For clarity of the analysis, let break into small steps the computation of $f(G)$ when $G \sim A_HG(n, p, k)$. 
    
    \begin{enumerate}
        \item Generate a graph $G' \sim G(n, p)$.
        \item Choose in $G'$ a random induced copy of $H$ that obeys the partition. If there is no such induced copy this step is said to {\em fail}, and one invokes the {\em default} (explained in item~4).
        \item Plant at random an independent set of size $k - k'$ among the non-neighbors of the chosen induced copy of $H$, giving the graph $G$. If this induced copy has fewer than $k - k'$ non-neighbors, this step is said to {\em fail}, and one invokes the {\em default} (explained in item~4).
        \item If the default is invoked, plant at random an independent set of size $k$ in $G'$, giving the graph $G$.
        \item Compute $f(G)$.
    \end{enumerate}
    
    The computation of $f(G)$ when $G \sim G_H(n, p, k)$ is {\em identical} to the above, except that step~2 is replaced by the following:

    \begin{itemize}
        \item Plant in $G'$ at random an induced copy of $H$ that obeys the partition, giving the graph $G_H \sim G_H(n,p)$. 
    \end{itemize}

    Call a graph $G$ {\em typical} if the probability of generating $G'$ under $G_H(n, p)$ is at least $\beta$ times the probability of generating $G$  under $G(n, p)$. Let $T$ denote the event the a graph $G' \sim G(n, p)$ is typical. By \cref{closedist}, $\P{T} \ge 1 - \frac{4\eps}{(1 - \beta)^2}$. Let $F_A$ denote the event that $f(G) = 1$ for a graph $G \sim A_HG(n, p, k)$, and recall that $\P{F_A} = p_A$. Let $T \cap F_A$ denote the coupled event that both $T$ and $F_A$ happen, when the respective $G'$ is the outcome of step~1 in the generation of the respective $G \in A_HG(n, p, k)$. If follows that $\P{T \cap F_A} \ge p_A - \frac{4\eps}{(1 - \beta)^2}$. 
    
    Observe that given that a graph $G'$ is typical, then step~2 of the process of generating  $G \sim A_HG(n, p, k)$ does not fail. Moreover, if the same graph $G'$ is obtain as $G_H$ in step~2 of the generation of $G \sim G_H(n, p, k)$, then afterwards the process of generating $G \sim G_H(n, p, k)$ is {\em identical} to that of generating $G \sim G_H(n, p, k)$. (This uses the fact that for any such $G'$, each of the induced copies of $H$ that obey the partition has exactly the same probability of being the planted one under $G_H(n, p)$.) Hence the event of generating from this $G_H$ a graph $G \in G_H(n, p, k)$ for which $f(G) = 1$ is {\em exactly} the same event as that of generating from the respective $G'$ a graph $G \in G_H(n, p, k)$ for which $f(G) = 1$. As for every {\em typical} graph the probability of generating it under $G_H(n, p)$ is at least $\beta$ times the probability of generating it under $G(n, p)$, we conclude that $p_H \ge \beta \P{T \cap F_A} \ge \beta(p_A - \frac{4\eps}{(1 - \beta)^2})$.
\end{proof}

\subsection{Proof of \cref{thres3}}

To prove \cref{thres3} we shall use \cref{thm:closedist} together with a few relatively simple lemmas. 
\cref{nonneigh} implies that the probability that $G_H(n, p, k)$ fails to produce an output graph is negligible. (For $A_HG(n, p, k)$, the same is implied by the combination of \cref{nonneigh} and  \cref{subcopy}.) 

\begin{lemma}\label{nonneigh}
    Let $\delta \in (0, 1)$ and $0 < \rho < (1-\delta)/2$.
    Let $G \sim G(n, p)$ be a random graph with $p = n^{\delta - 1}$.
    For every set $S$ of $n^{\rho}$ vertices of $G$ the size of the common non-neighborhood of $S$ is at least $n - 2n^{\delta + \rho}$ with probability at least $1 - \exp(-\frac{1}{4}n^{\delta})$.
\end{lemma}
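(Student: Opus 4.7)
The plan is a routine first-moment/Chernoff combined with a union bound over all choices of $S$. Fix a vertex set $S$ of size $s := n^{\rho}$, and let $X_S$ denote the number of edges of $G \sim G(n,p)$ between $S$ and $V\setminus S$. Then $X_S$ is a sum of $s(n-s)$ independent Bernoulli$(p)$ indicators, with $\E{X_S} = s(n-s)p \le snp = n^{\delta+\rho}$. A vertex of $V\setminus S$ that is a non-neighbor of every element of $S$ contributes no edges to $X_S$; conversely, the number of vertices with at least one neighbor in $S$ is bounded above by $X_S$. Hence the common non-neighborhood of $S$ has size at least $n - |S| - X_S \ge n - n^{\rho} - X_S$, and it suffices to show $X_S \le 2n^{\delta+\rho} - n^{\rho}$ uniformly in $S$.

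First I would apply the multiplicative Chernoff bound to the sum $X_S$. Choosing a constant multiplicative slack (e.g. $X_S \le (2 - n^{-\delta}) \E{X_S}$, which is much more than $\E{X_S}$ and absorbs the lower-order $n^{\rho}$ term) yields
\[
  \P{X_S > 2n^{\delta+\rho} - n^{\rho}} \;\le\; \exp\!\left(-c\,n^{\delta+\rho}\right)
\]
for some absolute constant $c>0$ and all sufficiently large $n$. This is the whole probabilistic content of the argument, and the only ``obstacle'' is making the numerics tight enough that the final union bound wins.

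Next, I would take a union bound over all $\binom{n}{s} \le \exp(s\log n) = \exp(n^{\rho}\log n)$ choices of $S$. The total failure probability is at most
\[
  \exp\!\left(n^{\rho}\log n \;-\; c\,n^{\delta+\rho}\right)
  \;=\; \exp\!\left(n^{\rho}\bigl(\log n - c\,n^{\delta}\bigr)\right).
\]
Since $\delta>0$, for $n$ large enough $cn^{\delta} - \log n \ge \tfrac14 n^{\delta}$, so the above is bounded by $\exp(-\tfrac14 n^{\delta+\rho}) \le \exp(-\tfrac14 n^{\delta})$, which is the claimed bound. Note that the constraint $\rho < (1-\delta)/2$ is not actually needed for this lemma (it is $\rho < 1-\delta$ which would be needed to make $2n^{\delta+\rho}$ nontrivial, i.e.\ $o(n)$, so that the statement is meaningful); the stronger assumption is inherited from the context in which the lemma is invoked. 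The main ``obstacle'', such as it is, is just verifying that the Chernoff tail $\exp(-\Theta(n^{\delta+\rho}))$ comfortably dominates the combinatorial cost $\exp(n^{\rho}\log n)$ of the union bound, which it does because $n^{\delta}$ beats $\log n$ for any fixed $\delta>0$.
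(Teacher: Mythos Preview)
Your proof is correct, but the paper takes a shorter route. Rather than union-bounding over all $\binom{n}{n^{\rho}}$ choices of $S$, the paper simply bounds the maximum degree of $G$: a single Chernoff bound plus a union bound over the $n$ vertices shows that with probability at least $1-\exp(-\tfrac14 n^{\delta})$ every vertex has degree at most $2n^{\delta}$. On this event the conclusion is deterministic, since any set $S$ of size $n^{\rho}$ can have at most $n^{\rho}\cdot 2n^{\delta}=2n^{\delta+\rho}$ neighbors in total. So the paper's union bound is over $n$ objects rather than $\exp(n^{\rho}\log n)$ objects, and the edge-counting variable $X_S$ never enters. Your approach does buy something in exchange for the extra bookkeeping: the tail you obtain is actually $\exp(-\tfrac14 n^{\delta+\rho})$ before you discard the extra $n^{\rho}$ factor, which is strictly stronger than the stated bound. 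Your observation that the hypothesis $\rho<(1-\delta)/2$ is not used in the proof (only $\rho+\delta<1$ is needed, for the statement to be nonvacuous) applies equally to the paper's argument.
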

\begin{proof}
    We clearly have $\rho + \delta < 1$.
    By Chernoff bound, the maximum degree of $G$ is at most $2n^{\delta}$ with probability at least $1 - \exp(-\frac{1}{4}n^{\delta})$.
    Hence, any set $S$ of $n^{\rho}$ vertices has at most $2n^{\delta + \rho}$ neighbors.
    Then, for any set $S$ of this size the common non-neighborhood of $S$ has size at least $n - 2n^{\delta + \rho}$ with probability at least $1 - \exp(-\frac{1}{4}n^{\delta})$.
\end{proof}

The following lemmas establish that with high probability the graph $G \sim G_H(n, p, k)$  has no independent set that has more than $k - k'$ vertices outside the induced copy of $H$. The notation used in these lemmas is as in \cref{def:G_H}.

\begin{lemma}
\label{lem:noOtherIS}
Let $p = n^{\delta - 1}$ and $k - k' \geq 4n^{1 - \delta}\log n$.
	With probability at least $1 - \exp(-\frac{1}{2}n^{1 - \delta}\log^2 n)$, there is no independent set of size $\frac{k - k'}{2}$ in $G_H[V\setminus M]$.
\end{lemma}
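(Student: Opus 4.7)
The plan is to reduce the statement to a first-moment calculation on the Erdős--Rényi graph $G(n-m, p)$. The key observation is that the construction of $G_H$ from $G' \sim G(n,p)$ only overwrites edges whose both endpoints lie in $M$, so the induced subgraph $G_H[V \setminus M]$ literally equals $G'[V \setminus M]$, and is therefore distributed as $G(n-m, p)$ irrespective of the (random) choice of $M$. Thus one may condition on $M$ and it suffices to bound the probability that an Erdős--Rényi graph $G(n-m,p)$ contains an independent set of size $s := \tfrac{k-k'}{2}$.

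By the first-moment method, this probability is at most the expected number of such sets, which satisfies
\[
    \binom{n-m}{s}(1-p)^{\binom{s}{2}} \;\leq\; \exp\!\left(s\log\frac{en}{s} \;-\; p\binom{s}{2}\right).
\]
I would then substitute $p = n^{\delta-1}$ and the hypothesis $s \geq 2n^{1-\delta}\log n$. On one hand, $ps \geq 2\log n$, so $p\binom{s}{2} \geq (1-o(1))\, s\log n$. On the other hand, $\log s = (1-\delta)\log n + \log\log n + O(1)$, hence $\log(en/s) \leq \delta\log n - \log\log n + O(1)$. Subtracting, the exponent is at most $-(1-\delta-o(1))\, s\log n$, which is $\leq -\tfrac{1}{2}n^{1-\delta}\log^2 n$ for sufficiently large $n$. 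Applying Markov's inequality to the nonnegative random variable counting independent sets of size $s$ yields the stated probability bound; note that the event ``no independent set of size $s$'' implies ``no independent set of size $\geq s$'' by downward closure, so showing existence fails at exactly this size is enough.

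The only real obstacle is the exponent bookkeeping: the margin between $p\binom{s}{2}$ and $s\log(en/s)$ is only of order $(1-\delta)\, s\log n$, so the cruder estimate $\log(en/s)\leq \log n$ is not sharp enough and one has to keep the $-\log\log n$ correction coming from $\log s$. Beyond this, no probabilistic tools beyond the union bound/Markov are needed, and the reduction via $G_H[V\setminus M] \stackrel{d}{=} G(n-m,p)$ eliminates any interaction with the planted copy of $H$ or with the subsequent planting of the independent set $I'$ (which takes place \emph{within} the same non-neighborhood but is irrelevant to the subgraph we analyze here).
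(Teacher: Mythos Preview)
Your proposal is correct and follows essentially the same approach as the paper: a first-moment bound on the number of independent sets of size $t = \frac{k-k'}{2}$ in $G_H[V\setminus M]$, using $\binom{n}{t}(1-p)^{\binom{t}{2}}$ and the Stirling-type estimate $\binom{n}{t}\leq (en/t)^t$. The paper's computation is the same as yours line by line; you are simply more explicit about the (correct and useful) observation that $G_H[V\setminus M]\stackrel{d}{=}G'[V\setminus M]\sim G(n-m,p)$, which the paper leaves implicit.
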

\begin{proof}
	By first moment method the probability that there exists an independent set of size $t$ is at most
	\begin{multline*}
	    \binom{n}{t}(1 - p)^{\frac{t(t - 1)}{2}} \leq \exp\left(t\log n + t - t\log t -\frac{t(t - 1)}{2}n^{\delta - 1}\right) =\\= \exp\left(\log n + 1 - \log t - \frac{t - 1}{2}n^{\delta - 1}\right)^t,
	\end{multline*}
	which for $t \geq \frac{k - k'}{2} \geq 2n^{1 - \delta}\log n$ is at most $\exp(-\frac{1}{2}n^{1 - \delta}\log^2 n)\xrightarrow{n\to\infty}0$.
\end{proof}

\begin{lemma}
\label{lem:noIntersectingIS}
Let $p = n^{\delta - 1}$ and $k - k' \geq 6n^{1 - \delta}\log n$.
	For every integer $t$ satisfying  $1 \le t \le \frac{k-k'}{2}$, with probability at least $1 - 2/n$ every subset $Q \subset V\setminus (M\cup I')$ of vertices of graph $\wG_H[V\setminus M]$, $|Q| = t < k - k'$, has at least $t + 1$ neighbors in $I'$. 
\end{lemma}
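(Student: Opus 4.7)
The plan is a straightforward first-moment union bound, whose only subtlety is making sure we may treat certain edges as independent Bernoulli($p$). The key structural observation is that in the generation of $\wG_H$ the only edges of $G'\sim G(n,p)$ that get modified lie either entirely inside $M$ (replaced by $H$) or entirely inside $I'$ (removed). In particular, the edges between $I'$ and $V\setminus(M\cup I')$ are unaltered copies of the $G'$ edges. Moreover, the random choice of the set $I'$ uses only the edges between $M$ and $V\setminus M$ (to identify the non-neighbors of $M$) and auxiliary randomness to pick a uniform subset; it does not condition on any edge strictly inside $V\setminus M$. Hence, conditional on $M$, the planted copy of $H$, and the vertex set $I'$, the edges between $I'$ and $V\setminus(M\cup I')$ are still independent Bernoulli($p$).

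Given this independence, fix $t$ in the stated range and a subset $Q\subset V\setminus(M\cup I')$ with $|Q|=t$. The event $|N(Q)\cap I'|\le t$ is equivalent to the existence of a set $R\subset I'$ with $|R|=t$ such that all $t(k-k'-t)$ potential edges between $Q$ and $I'\setminus R$ are missing. For any fixed $(Q,R)$ this probability is $(1-p)^{t(k-k'-t)}\le\exp(-pt(k-k'-t))$. Union-bounding over the at most $\binom{n}{t}$ choices of $Q$ and at most $\binom{k-k'}{t}$ choices of $R$ yields
\[
\Pr\bigl[\exists Q\subset V\setminus(M\cup I'),\,|Q|=t,\;|N(Q)\cap I'|\le t\bigr]\;\le\;\binom{n}{t}\binom{k-k'}{t}\exp\bigl(-pt(k-k'-t)\bigr).
\]

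Plugging in the parameters finishes the job. Since $t\le(k-k')/2$, we have $k-k'-t\ge(k-k')/2\ge 3n^{1-\delta}\log n$, and then $p=n^{\delta-1}$ gives $pt(k-k'-t)\ge 3t\log n$. Using $\binom{n}{t}\le n^{t}$ and $\binom{k-k'}{t}\le n^{t}$, the right hand side is at most $n^{2t}\exp(-3t\log n)=n^{-t}$. For each individual $t\ge 1$ this is already below $2/n$; summing the geometric series $\sum_{t\ge 1}n^{-t}\le 2/n$ actually delivers the conclusion simultaneously for all $t$ in the range $1\le t\le (k-k')/2$, which is the strongest reading of the statement.

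There is no genuine hard step: the only thing one must verify carefully is the conditional independence claim in the first paragraph, which is a direct consequence of the order in which $\wG_H$ is sampled (reveal $G'$, place $M$, overwrite $G'[M]$ by $H$, choose $I'$ among the non-neighbors of $M$, erase the edges inside $I'$). All subsequent steps are routine counting, and the exponential slack of $n^{-t}$ per value of $t$ leaves ample room.
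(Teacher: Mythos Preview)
Your proof is correct and follows essentially the same approach as the paper: both set up the conditional independence of the $Q$--$I'$ edges via the deferred-decisions view of the sampling, then union-bound over the $\le n^t$ choices of $Q$ and the $\le n^t$ choices of a size-$t$ ``allowed neighbor'' set in $I'$, use $t\le (k-k')/2$ to get $(1-p)^{t(k-k'-t)}\le n^{-3t}$, and sum the resulting $n^{-t}$ bounds over $t$ to obtain $\le 2/n$.
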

\begin{proof}
    To prove this, view the process of generating $\wG_H$ in a following way.
	Initially, we have the graph $H$ and $n - m$ isolated vertices.		Then, for every pair of vertices $u, v$ where $u \in M$ and $v \in V\setminus M$, draw an edge $(u, v)$ with probability $p$.
	By doing so, we determine the set $W\subseteq V\setminus M$ of vertices that have no neighbors in $H$.
	Select a random subset $I' \subset W$ of size $k - k'$. For every pair of vertices from $V\setminus M$, if at least one of them does not belong to $I'$, draw an edge with probability $p$.
	
	There are at most $\binom{n}{t} \leq n^t$ possible choices for the set $Q$. There are at most $\binom{k-k'}{t} \leq k^t \le n^t$ possible choices for the set $Y$ of at most $t$ neighbors of $Q$ within $I'$. The probability that $Q$ has no neighbors in $I' \setminus Y$ is $(1-p)^{t(k-k'-t)} \le (1-p)^{t(k-k')/2} \le n^{-3t}$. By a union bound the probability that some subset $Q\subset V\setminus (M\cup I')$ of size $t$ has at most $t$ neighbors in $I'$ is at most $n^{-t}$. 	The probability of this happening for some value $t \le \frac{k - k'}{2}$ is at most $\sum_{t = 1}^{(k - k')/2}n^{-t} \leq \frac{2}{n}$, as desired.
\end{proof}

Combining the above lemmas we have the following Corollary. 

\begin{corollary} \label{cor:feasibleStrategy}
Let $p = n^{\delta - 1}$ and $6n^{1 - \delta}\log n \le k - k' \le \frac{2n}{3}$. Then with probability at least $1 - 4/n$ over the choice of graph $G \sim G_H(n, p, k)$, every independent set of size $k$ in $G$ contains at least $k'$ vertices in the planted copy of $H$.
\end{corollary}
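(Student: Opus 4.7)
The plan is to take an arbitrary independent set $I$ of size $k$ in $G \sim G_H(n, p, k)$ and show it must contain at least $k'$ vertices of the planted copy $M$. I would partition $I$ into three disjoint pieces: $A = I \cap M$, $B = I \cap I'$, and $Q = I \setminus (M \cup I')$ (these are disjoint since $M \cap I' = \emptyset$ by construction of $I'$ as a subset of the non-neighbors of $M$). Writing $a = |A|$, $b = |B|$, $c = |Q|$, we have $a + b + c = k$, and the goal reduces to showing $b + c \le k - k'$.

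First, I would invoke \cref{nonneigh} with $m = n^{\rho}$ to ensure, with probability $1 - \exp(-n^\delta/4)$, that $M$ has at least $n - 2n^{\delta+\rho} \ge n/3 \ge k - k'$ non-neighbors, so the planting of $I'$ does not default. Next, observe that $Q \subseteq V \setminus (M \cup I')$ and $Q$ is independent in $G = \wG_H$. Since the planting of $I'$ only removes edges (specifically, edges inside $I'$) and $Q$ is disjoint from $I'$, the edges inside $Q$ are identical in $\wG_H$ and in $G_H$, so $Q$ is independent in $G_H[V \setminus M]$ as well. \cref{lem:noOtherIS} therefore forces $c < (k-k')/2$. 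If $c \ge 1$, I would apply \cref{lem:noIntersectingIS} to $Q$ (valid because $1 \le c \le (k-k')/2$): $Q$ has at least $c+1$ neighbors in $I'$ inside $\wG_H$. Since $I$ is independent and $B \subseteq I'$, every vertex of $B$ must be non-adjacent to every vertex of $Q$, so $B$ avoids these $c+1$ neighbors, giving $b \le (k-k') - (c+1)$ and hence $b + c \le k - k' - 1$. If $c = 0$, then trivially $b \le |I'| = k - k'$. In either case $a = k - b - c \ge k'$.

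To conclude, I would union-bound the failure probabilities of the three lemmas invoked: $\exp(-n^\delta/4) + \exp(-\tfrac{1}{2}n^{1-\delta}\log^2 n) + 2/n \le 4/n$ for sufficiently large $n$, under the parameter ranges in the hypothesis. I do not expect a real obstacle here; the only subtlety worth flagging is the careful bookkeeping between $G_H$ and $\wG_H$ needed to legitimately feed $Q$ into \cref{lem:noOtherIS} (which speaks about $G_H$) and the argument that $B$'s independence from $Q$ translates into a numerical bound via \cref{lem:noIntersectingIS} (which speaks about $\wG_H$). Both are handled by the fact that planting $I'$ alters only edges strictly inside $I'$, leaving every other pair untouched.
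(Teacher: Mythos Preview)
Your proposal is correct and follows essentially the same approach as the paper: you invoke \cref{nonneigh}, \cref{lem:noOtherIS}, and \cref{lem:noIntersectingIS} in the same roles, and the union bound matches. The only cosmetic difference is that the paper phrases the argument as a list of three failure events, whereas you explicitly partition an arbitrary independent set of size $k$ into the pieces $A,B,Q$ and carry out the counting; the underlying logic and the handling of the $G_H$-versus-$\wG_H$ edge bookkeeping are identical.
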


\begin{proof}
There are three events that might cause the Corollary to fail.

\begin{itemize}
    \item $G_H(n, p, k)$ fails to produce an output. By \cref{nonneigh} and the upper bound on $k$, the probability of this event is smaller than $\frac{1}{n}$.
    \item Even before planting $I'$, there is an independent set larger than $\frac{k - k'}{2}$ in $G_H[V\setminus M]$. By \cref{lem:noOtherIS} the probability of this event is smaller than $\frac{1}{n}$.
    \item After planting $I'$, one can obtain an independent set larger than $I'$ in $G_H[V\setminus M]$ by combining an independent set $Q \subset V\setminus (M\cup I')$ with some of the vertices of $I'$. As we already assume that \cref{lem:noOtherIS} holds, $Q$ can be of size at most $\frac{k-k'}{2}$. \cref{lem:noIntersectingIS} then implies that the probability of this event is at most $\frac{2}{n}$.
\end{itemize}

The sum of the above three failure probabilities is at most $\frac{4}{n}$. 
\end{proof}

Now we restate and prove \cref{thres3}.

\begin{theorem}
For $p = n^{\delta-1}$ with $0 < \delta < 1$, $0 < \gamma < 1$, and $6n^{1 - \delta}\log n \le k \le \frac{2}{3}n$ the following holds.
There is no polynomial time algorithm that has probability at least $\gamma$ of finding an independent set of size $k$ in $G \sim A\bar{G}(n, p, k)$, unless NP has randomized polynomial time algorithms (NP=RP). 
\end{theorem}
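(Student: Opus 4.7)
The plan is a randomized reduction from the NP-hard problem of deciding whether a balanced graph of average degree between $2$ and $2+\eta$ has an independent set of size $k'$ (\cref{nph}) to the problem of finding an IS of size $k$ in $G \sim A\bar G(n,p,k)$. Assume for contradiction that ALG is a polynomial time algorithm that, against every adversarial strategy, outputs such an IS with probability at least $\gamma > 0$. Given a balanced instance $(H, k')$, I would pick a constant $\rho < \min\bigl[(1-\delta)/2,\, (2-\alpha(1-\delta))/4\bigr]$ satisfying the hypotheses of \cref{subcopy} and rescale $H$ so that it has $m = n^\rho$ vertices; the NP-hardness of \cref{nph} holds for every vertex count. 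Then I form $\wG_H \sim G_H(n,p,k)$ by (i) drawing $G' \sim G(n,p)$ with its partition into $m$ equal parts, (ii) planting a uniformly random partition-obeying copy of $H$, and (iii) planting a uniform IS of size $k - k'$ in the common non-neighborhood of the planted copy. Finally, run ALG on $\wG_H$ and output the restriction of its answer to the planted copy of $H$.

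\textbf{Correctness given the distribution.} Because $\rho < 1 - \delta$ we have $k' \le n^\rho = o(n^{1-\delta}\log n)$, so for large $n$ the gap $k - k'$ lies in the range $6n^{1-\delta}\log n \le k - k' \le \tfrac{2n}{3}$ whenever $k$ lies in the range of the theorem. Consequently \cref{nonneigh} ensures that step~(iii) of the construction succeeds with high probability, and \cref{cor:feasibleStrategy} guarantees that, with probability $1 - O(1/n)$, every IS of size $k$ in $\wG_H$ meets the planted copy of $H$ in at least $k'$ vertices. Since a subset of an induced copy of $H$ is independent in $\wG_H$ iff it is independent in $H$, the reduction outputs an IS of size $k'$ in $H$ whenever ALG succeeds; and if $\alpha(H) < k'$, no output of the reduction can ever be such an IS, so the reduction is one-sided.

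\textbf{Distributional closeness and main obstacle.} To show that ALG succeeds on $\wG_H$ with constant probability, I invoke \cref{thm:closedist} to compare $G_H(n,p,k)$ to the distribution $A_HG(n,p,k)$, in which the adversary selects a uniformly random partition-obeying induced copy of $H$ in $G'$ and plants an IS of size $k - k'$ in its non-neighborhood (defaulting to a uniform IS of size $k$ when such a copy or enough non-neighbors do not exist). By \cref{subcopy} and \cref{nonneigh} the default is triggered with probability $o(1)$, so this is a legal adversarial strategy and ALG succeeds on it with probability at least $\gamma - o(1)$. Applying \cref{thm:closedist} with (say) $\beta = 1/2$ and $\eps$ chosen small enough as a function of $\gamma$ transfers this to success probability $\Omega(\gamma)$ on $\wG_H$. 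Combined with the preceding paragraph, the reduction outputs an IS of size $k'$ in $H$ with probability $\Omega(\gamma)$ whenever one exists, which after standard amplification places balanced max-IS in RP, yielding NP $=$ RP. The main obstacle is jointly calibrating the three constants $\rho$, $\eps$, and the slack in $k - k' \ge 6n^{1-\delta}\log n$: they must simultaneously make \cref{subcopy} give $\eps$-closeness of $G_H(n,p)$ to $G(n,p)$, leave $m = n^\rho$ large enough to encode an NP-hard instance from \cref{nph}, and respect the feasibility bounds of \cref{nonneigh} and \cref{cor:feasibleStrategy} across the full range $6n^{1-\delta}\log n \le k \le \tfrac{2n}{3}$; the assumption $\delta \in (0,1)$ leaves just enough slack for such a calibration to exist.
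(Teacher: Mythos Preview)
Your proposal is correct and follows essentially the same route as the paper: reduce from the NP-hard balanced-graph independent-set problem of \cref{nph}, sample $\wG_H \sim G_H(n,p,k)$, invoke \cref{thm:closedist} (with suitable constants $\beta,\eps$) to transfer ALG's guaranteed success on the adversarial distribution $A_HG(n,p,k)$ to $G_H(n,p,k)$, use \cref{cor:feasibleStrategy} to force any size-$k$ IS to meet the planted $H$ in $\ge k'$ vertices, and amplify. The paper instantiates with $\beta=\tfrac15$, $\eps=\tfrac{\gamma}{25}$ and $10\tfrac{\log n}{\gamma}$ repetitions, but your choice $\beta=\tfrac12$ with $\eps$ small in terms of $\gamma$ works equally well; the only cosmetic slip is the phrase ``rescale $H$'' --- what is meant (and what the paper does) is to set $n=|V(H)|^{1/\rho}$ so that $m=n^{\rho}$.
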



\begin{proof}
Suppose for the sake of contradiction that algorithm ALG has probability at least $\gamma$ of finding an independent set of size $k$ in the setting of the Theorem. 

Choose $2 < \alpha < \min[\frac{2}{1 - \delta} , 3]$ and $0 < \rho < \min[\frac{1 - \delta}{2}, \frac{2 - \alpha(1 - \delta)}{4}]$. Let $\cal{H}$ be the class of balanced graphs of average degree $\alpha$ on $m = n^{\rho}$ vertices. By \cref{nph}, given a graph $H \in \cal{H}$ and a parameter $k'$, it is NP-hard to determine whether $H$ has an independent set of size $k'$. We now show how ALG can be leveraged to design a randomized polynomial time algorithm that solves this NP-hard problem with high probability.

Repeat the following procedure $10\frac{\log n}{\gamma}$ times.

\begin{itemize}
    \item Sample a graph $G \sim G_H(n, p, k)$ (as in \cref{def:G_H}).
    \item Run ALG on $G$. If ALG returns an independent set of size $k$ that has at least $k'$ vertices in the planted copy of $H$, then answer {\em yes} ($H$ has an independent set of size  $k'$) and terminate.
\end{itemize}

If $10\frac{\log n}{\gamma}$ iterations are completed without answering {\em yes}, then answer {\em no} ($H$ probably does not have an independent set of size  $k'$).

Clearly, the above algorithm runs in random polynomial time. Moreover, if it answers {\em yes} then its answer is correct, because it actually finds an independent set of size $k'$ in $H$. It remains to show that if $H$ has an independent set of size $k'$, the probability of failing to give a {\em yes} answer is small.

We now lower bound the probability that a single run of ALG on $G \sim G_H(n, p, k)$ fails to output {\em yes}. Recall that ALG succeeds (finds an independent set of size $k$) with probability at least $\gamma$ over graphs with adversarially planted independent sets, and in particular, over the distribution $A_HG(n, p, k)$. 

In \cref{subcopy}, choose $\eps = \frac{\gamma}{25}$ and $\beta = \frac{1}{5}$. Our choice of $m = n^{\rho}$ satisfies the conditions of \cref{copy}, and hence we can apply \cref{thm:closedist}.
In \cref{thm:closedist} use the function $f$ that has value~1 if ALG succeeds on $G$. It follows from \cref{thm:closedist} that ALG succeeds with probability at least $\beta(\gamma - \frac{4\eps}{(1 - \beta)^2})=\frac{3\gamma}{20}$ over graphs $G \sim G_H(n, p, k)$. \cref{cor:feasibleStrategy} implies that there is probability at most $\frac{4}{n}$ that there is an independent set of size $k$ in $G$ that does not contain $k'$ vertices in the induced copy of $H$. Hence  a single iteration returns {\em yes} with probability at least $\frac{3\gamma}{20} - \frac{4}{n} \ge \frac{\gamma}{10}$ (for sufficiently large $n$).


Finally, as we have $10\frac{\log n}{\gamma}$ iterations, the probability that none of the iterations finds an independent set of size $k$ is at most $(1 - \frac{\gamma}{10})^{10\frac{\log n}{\gamma}} \simeq \frac{1}{n}$.
\end{proof}

\section{Probabilistic bound} \label{sec:probbound}

In this section we prove \cref{proba}.

Let $c \in (0, 1)$ and $C > 0$ be arbitrary constants.
Let $G \sim G(n, p)$, $G = (V, E)$, where $p = w(n)/n$ for $\log^{4} n \ll w(n) < cn$, and let $k = Cw(n)^{1/2}$.
Let $K \subset V$ be arbitrary, $|K| = k$.
We number the vertices of $G$ so that $V = [n]$, $K = [k]$ and $V\setminus K = [n]\setminus [k]$.
For $k + 1 \leq i \leq n$ let $X_i$ be a random variable equal to the number of edges from $i$ to vertices in $K$.
It is clear that $X_i \sim \Bin(k, p)$, so $\E{X_i} = kp$ and the variance $\V{X_i} = \E{(X_i - kp)^2} = kp(1 - p)$.
Since for $i \neq j$, $X_i$ and $X_j$ are independent, 
\[  \V{\sum_{i = k + 1}^n X_i} = \sum_{i = k + 1}^n\V{X_i} = (n - k)kp(1 - p) = \E{\sum_{i = k + 1}^n\left(X_i - kp\right)^2}.\]
Our goal is to show that the sum $\sum_{i = k + 1}^n\left(X_i - kp\right)^2$ does not exceed its mean too much. 
\cref{varbound} is a restatement of \cref{proba}, with somewhat different notation.
\begin{theorem}\label{varbound}
    With probability at least $1 - \exp\left(-2k\log n \right)$,
	\[    \sum_{i = k + 1}^n\left(X_i - kp\right)^2 \leq (n - k)kp(1 - p) + o(nkp(1 - p))\]
	for every possible choice of the set $K\subset V$.
\end{theorem}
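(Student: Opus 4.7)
The plan is to exploit independence for fixed $K$, apply a (sub-exponential) Bernstein inequality, and union-bound over $K$.

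Fix any $K\subset V$ with $|K|=k$. For $i\in V\setminus K$ the quantity $X_i = X_i^K$ depends only on the $k$ potential edges between $i$ and $K$; since for distinct $i,i'\in V\setminus K$ these edge sets are disjoint, the variables $\{X_i\}_{i\notin K}$ are mutually independent copies of $\mathrm{Bin}(k,p)$. Therefore $W_i := (X_i-kp)^{2}$ are i.i.d.\ with $\E W_i = \mu := kp(1-p)$ and $\V W_i = kp(1-p)\bigl(1+(2k-6)p(1-p)\bigr)$, both obtained from the fourth central moment of the binomial. For each fixed $K$ the statement is thus a concentration bound for $\sum_{i\notin K}W_i$ around $(n-k)\mu$, to be promoted to all $K$ simultaneously at the end by a union bound over the $\binom{n}{k}\le\exp(k\log n)$ possible sets.

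For the concentration I apply Bernstein's inequality in its sub-exponential form. The centered binomial $Z_i = X_i-kp$ satisfies the standard Bernstein MGF bound $\E e^{\lambda Z_i}\le\exp\!\bigl(\lambda^{2}kp/(2(1-|\lambda|/3))\bigr)$ for $|\lambda|\le 3$; propagating this through the Gaussian integration identity $e^{\lambda z^{2}} = \E_g\,e^{\sqrt{2\lambda}\,gz}$ shows that $W_i - \E W_i$ is sub-exponential, and the Cramer--Chernoff method then yields
\[
\P{\Bigl|\sum_{i\notin K}(W_i-\mu)\Bigr|>t}\le 2\exp\!\Bigl(-c\min\!\bigl(\tfrac{t^{2}}{n k^{2}p^{2}},\ \tfrac{t}{k}\bigr)\Bigr)
\]
for an absolute constant $c>0$. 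Picking $t = \eta(n-k)\mu$ with $\eta = \eta(n)\to 0$ slightly larger than $\sqrt{k\log n/(np^{2})}$ makes both branches of the minimum exceed $3k\log n$; since the hypothesis $w(n)=np\gg\log^{4}n$ makes this $\eta$ of order $o(1)$, the deviation $t$ is indeed $o(nkp(1-p))$ as required, and the per-$K$ failure probability is at most $\exp(-3k\log n)$. A union bound over all $K$ then yields overall probability at least $1-\exp(-2k\log n)$, proving \cref{varbound}.

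The main technical hurdle is obtaining the sub-exponential tail at the correct scale. A naive Bernstein based on the worst-case bound $|W_i|\le k^{2}$ makes the linear term of the Bernstein exponent dominate and forces the allowed deviation to be $\Omega(\log n)$ times the mean, which is \emph{not} $o(nkp(1-p))$. The needed refinement -- replacing the $L^{\infty}$ bound on $W_i$ by its sub-exponential ($\psi_1$) control through the MGF of $Z_i$ -- is exactly what the Bernstein MGF bound on $Z_i$, combined with Gaussian integration, delivers; once this is in hand, the rest of the argument is routine bookkeeping.
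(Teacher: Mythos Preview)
Your plan diverges from the paper's proof and, as written, contains a genuine gap.

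\textbf{Where the argument breaks.} Your choice $\eta \gtrsim \sqrt{k\log n/(np^{2})}$ does \emph{not} tend to zero under the stated hypothesis $w(n)=np\gg\log^{4}n$. Writing $k=C\sqrt{w}$ and $p=w/n$ gives
\[
\eta^{2}\;\asymp\;\frac{k\log n}{np^{2}}\;=\;\frac{C\sqrt{w}\,\log n}{w^{2}/n}\;=\;\frac{Cn\log n}{w^{3/2}},
\]
so $\eta=o(1)$ requires $w\gg (n\log n)^{2/3}$, i.e.\ essentially $w\gg n^{2/3}$. The theorem is stated for the full range $w\gg\log^{4}n$; for $\log^{4}n\ll w\ll n^{2/3}$ your deviation $t=\eta(n-k)\mu$ is \emph{not} $o(nkp(1-p))$, and the conclusion fails. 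Underlying this is that your claimed sub-exponential parameters $(\nu^{2},b)=(\Theta(k^{2}p^{2}),\Theta(k))$ for $W_i-\mu$ are not correct when $kp\ll 1$: in that regime $\V W_i\asymp kp$, not $k^{2}p^{2}$, so the Gaussian branch $\exp(-ct^{2}/(nk^{2}p^{2}))$ is stronger than the CLT allows and cannot hold. Separately, the Gaussian--integration step is incomplete: the Bernstein MGF bound for $Z_i$ is only valid for $|\lambda'|<3$, while $\lambda'=\sqrt{2\lambda}\,g$ with $g$ Gaussian exceeds this with positive probability. Using instead Hoeffding's sub-Gaussian proxy $k/4$ makes the integration go through but yields $\|W_i\|_{\psi_1}=O(k)$ and hence $t^{2}/(nk^{2})$ in the Gaussian branch, which again only suffices for $w\gg n^{2/3}$.

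\textbf{How the paper handles it.} The paper does \emph{not} try to control the full sub-exponential tail of $W_i$. Instead it truncates: it first proves a ``levels'' lemma showing that, uniformly over all $K$, at most $O(2^{r}w^{1/4})$ vertices satisfy $U_i\in[k^{2}2^{-r},k^{2}2^{-r+1}]$ for each $r$ up to $R=\log(Cn/(w^{1/2}\log n))$. Summing, the heavy vertices contribute $o(k)\cdot k = o(nkp)$ in total. The remaining vertices satisfy $U_i\le 2kp\log n$ deterministically, and the paper applies the \emph{bounded} Bernstein inequality to $\sum\min(U_i,2kp\log n)$ with $L=2kp\log n$ (rather than $L=k^{2}$). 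With this $L$, the linear Bernstein term gives deviation $t\asymp k^{2}p\,\gamma(n)^{-1}\log n$ with $\gamma(n)\asymp (w^{1/2}/\log n)^{1/2}\to\infty$, which is $o(nkp)$ precisely when $w\gg\log^{2}n$ --- comfortably covered by $w\gg\log^{4}n$. The truncation is what buys the full parameter range; your attempt to replace it by a one-shot $\psi_1$ bound loses exactly this.
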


We prove \cref{varbound} in several steps.
Let $U_i = \left(X_i - kp\right)^2$, $\E{U_i} = kp(1 - p)$.
We need to prove that the value $\sum_{i = k + 1}^nU_i$ doesn't deviate from its mean, $(n - k)kp(1 - p)$, too much.
However, the maximum possible value of $U_i$ is $k^2(1 - p)^2$, which can be close to $k^2$.

Partition all vertices $k + 1 \leq i \leq n$ into $R$ groups, defined by the following rules.
For $r \leq R - 1$ the vertex $i$ belongs to the group $M_r$, if $\frac{k^2}{2^r} \leq U_i \leq \frac{k^2}{2^{r - 1}}$.
If $U_i$ is at least $k^2 \cdot 2^{-r}$, then $X_i$ differs from $kp$ by at least $k2^{-r/2}$, and if $U_i$ is at most $k^2\cdot 2^{-r + 1}$, then $X_i$ differs from $kp$ by at most $k2^{-(r - 1)/2}$.
This means that if $i \in M_r$ for $r \leq R - 1$, either
\[     k\left(p + 2^{-r/2}\right) \leq X_i \leq k\left(p + 2^{-(r - 1)/2}\right) \quad \text{or}\quad k\left(p - 2^{-(r - 1)/2}\right) \leq X_i \leq k\left(p - 2^{-r/2}\right)\] must hold.
For $r = R$ the group $M_R$ contains all the remaining vertices, those $i$ for which $U_i \leq \frac{k^2}{2^{R - 1}}$.
The exact value of $R$ will be determined later, and will depend on $w(n) = np$.

We can rewrite the sum above based on the group partitioning:
\[  \sum_{i = k + 1}^n\left(X_i - kp\right)^2 = \sum_{i = k + 1}^nU_i 
= \sum_{r = 1}^{R}\sum_{i \in M_r}U_i
\leq  \sum_{i \in M_{R}}U_i + \sum_{r = 1}^{R - 1}|M_r|\cdot \frac{k^2}{2^{r- 1}}\]
where the last inequality follows from the definition of $M_r$.
We will show that $\sum_{i \in M_{R}}U_i \le (n - k)kp(1 - p) + o(nkp(1 - p))$, and that $\sum_{r = 1}^{R}|M_r|\cdot k^22^{-(r - 1)} \le o(nkp(1 - p))$, with high probability.

We start with the second sum.
Since $k = O(\sqrt{np})$ and $1 - p = O(1)$, it suffices to show that for any choice of $K$, 
$ \sum_{r = 1}^{R - 1}\frac{|M_r|}{2^{r - 1}} = o\left(k\right)$. Note that the failure probability $\exp\big(-\Omega(nw(n)^{-1/4})\big)$ in \cref{levels} is negligible compared to the error probability $\exp\left(-2k\log n \right)$ allowed in \cref{varbound} (for our choice of $w(n)$ and $k$). 

\begin{lemma}\label{levels}
	Denote $\wR := \log Cn- \log(w(n)^{1/2}\log n)$.
	For all $r \leq \wR - 1$,
	\[      |M_r| \leq 2^{r + 2}w(n)^{1/4}\]
	with probability at least $1 - \exp\big(-\Omega(nw(n)^{-1/4})\big)$, for every choice of $K$.
\end{lemma}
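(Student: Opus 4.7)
My plan is to write $|M_r|$, for a fixed set $K$ chosen by the adversary, as a sum of independent Bernoulli indicators, apply Chernoff, and then union-bound over $K$. Once $K$ is fixed, the variables $X_i = |N(i)\cap K|$ for $i\not\in K$ are mutually independent $\Bin(k,p)$ (their only randomness sits on disjoint edge sets), so $|M_r|=\sum_{i\not\in K}\I{i\in M_r}$ is a sum of $(n-k)$ i.i.d.\ Bernoulli indicators with a common success probability $p_r$.

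I would bound $p_r$ by Bernstein's inequality applied to the $\Bin(k,p)$ tail:
\begin{equation*}
p_r\;\le\;\P{|X_i-kp|\geq k\cdot 2^{-r/2}}\;\le\;2\exp\!\left(-\frac{k^2\,2^{-r}}{2kp(1-p)+\tfrac{2}{3}k\cdot 2^{-r/2}}\right).
\end{equation*}
This naturally splits into a sub-Gaussian regime ($2^{-r/2}\lesssim p$, giving $p_r\le 2\exp(-\Omega(k\cdot 2^{-r}/p))$) and a Poisson-like regime ($2^{-r/2}\gtrsim p$, giving $p_r\le 2\exp(-\Omega(k\cdot 2^{-r/2}))$).

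Set $T_r := 2^{r+2}w(n)^{1/4}$ and $\mu_r := (n-k)p_r$. The standard Chernoff bound for sums of independent Bernoulli's gives $\P{|M_r|\geq T_r}\leq (e\mu_r/T_r)^{T_r}$. The crucial algebraic fact, obtained by substituting $k=Cw(n)^{1/2}$ and $p=w(n)/n$ into the sub-Gaussian tail, is that the product $T_r\cdot\bigl(k\cdot 2^{-r}/(4p)\bigr)$ equals $Cn/w(n)^{1/4}$ \emph{independent of $r$}. This yields
\begin{equation*}
T_r\log\!\left(\frac{e\mu_r}{T_r}\right)\;\le\;-\frac{Cn}{w(n)^{1/4}}\;+\;T_r\log\!\left(\frac{2en}{T_r}\right)\;\le\;-\frac{Cn}{2w(n)^{1/4}},
\end{equation*}
where the second inequality uses that $T\mapsto T\log(2en/T)$ is increasing on $(0,2n)$, so its maximum over $r\le\wR-1$ is attained at $r=\wR-1$, where a direct substitution using $w(n)\le cn$ gives exactly $Cn/(2w(n)^{1/4})$ up to lower-order terms (for $C$ large enough). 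In the Poisson-like regime $p_r$ is doubly exponentially small in $k\sim w(n)^{1/2}$, so the bound is strictly stronger there.

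Finally, I take a union bound over the $\binom{n}{k}\le\exp(2k\log n)=\exp(O(w(n)^{1/2}\log n))$ choices of $K$ and over the $O(\log n)$ relevant values of $r$. The union bound is absorbed since $n/w(n)^{1/4}\gg w(n)^{1/2}\log n$ whenever $w(n)\le cn$ (equivalently, $w(n)\ll n^{4/3}/\log^{4/3} n$), yielding the desired failure probability $\exp(-\Omega(n w(n)^{-1/4}))$. The main obstacle I anticipate is precisely the uniform-in-$r$ bookkeeping in the Chernoff exponent: the definition $T_r=2^{r+2}w(n)^{1/4}$ is tailored so that the leading $-Cn/w(n)^{1/4}$ contribution is independent of $r$, and one must verify that the subleading term $T_r\log(2en/T_r)$ does not overwhelm it, both in the sub-Gaussian regime and across the crossover into the large-deviation regime.
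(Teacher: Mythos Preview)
Your proposal is correct and follows the same overall strategy as the paper: fix $K$, bound $\P{|M_r|\ge T_r}$ via Chernoff-type tail estimates, then union bound over the $\binom{n}{k}$ choices of $K$ and the $O(\log n)$ values of $r$. The packaging differs slightly. The paper does not exploit the i.i.d.\ Bernoulli structure of $|M_r|$ directly; instead it union-bounds over candidate subsets $I_r\subset V\setminus K$ of size $m_r$ and, for each $I_r$, applies Chernoff to the \emph{aggregate} edge count $e(I_r,K)$. Your route---bound the single-vertex tail $p_r$ by Bernstein on $\mathrm{Bin}(k,p)$ and then Chernoff the Bernoulli sum---is cleaner and avoids the extra union bound over $I_r$. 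Both routes land on the same $r$-independent exponent $T_r\cdot k\,2^{-r}/p=\Theta(n/w(n)^{1/4})$, which is the heart of the argument.

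One point to tighten: your statement that in the Poisson-like regime ``$p_r$ is doubly exponentially small in $k$'' and hence ``the bound is strictly stronger there'' is not right. Bernstein gives $p_r\le 2\exp(-\Omega(k\,2^{-r/2}))$ in that regime, which is only singly exponential; after Chernoff on the Bernoulli sum the failure exponent is $\Omega(T_r\cdot k\,2^{-r/2})=\Omega(2^{r/2}w(n)^{3/4})$, and for small $r$ this can be \emph{smaller} than $n/w(n)^{1/4}$ (they coincide exactly at the crossover $2^{-r/2}=p$). So the Poisson regime is not strictly stronger. It is, however, still amply sufficient: even at $r=1$ the exponent $\Omega(w(n)^{3/4})$ dominates the union-bound cost $O(w(n)^{1/2}\log n)$ under the standing assumption $w(n)\gg\log^4 n$, which is all the downstream application needs. (The paper's own proof glosses over this regime split by applying the sub-Gaussian Chernoff form uniformly; your explicit split is more careful, just mis-summarized.)
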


\begin{proof}
	Let $M_r = M_r' \sqcup M_r''$, where $i \in M_r'$ if $k\left(p + 2^{-(r - 1)/2}\right) \geq X_i \geq k\left(p + 2^{-r/2}\right)$ and $i \in M_r''$ if $k\left(p - 2^{-(r - 1)/2}\right) \leq X_i \leq k\left(p - 2^{-r/2}\right)$.
	Fixing $|M_r| = m_r$ is equivalent to fixing $|M_r'| = m_r'$ and $|M_r''| = m_r''$ where $m_r' + m_r'' = m_r$.
	For the sake of simplicity, $m_r'' = 0$ and $m_r' = m_r$, so $M_r' = M_r$.
	Let $I_r$ be a fixed set of vertices from $[n]\setminus[k]$, of size $m_r$.
	We are going to bound the probability $\P{M_r = I_r}$.
	Consider a random bipartite subgraph $B(I_r, K, p)$, where one part is $I_r$  and another part is $K$.
	Let $e_r$ be the number of edges in $B(I_r, K, p)$, it is clear that $\E{e_r} = m_rkp$.
	Since $M_r = M_r'$, by definition of $M_r'$, $e_r \geq m_rk\left(p + 2^{-r/2}\right)$.
	So, the event $M_r = I_r$ implies in the event $e_r \geq m_r'k\left(p + 2^{-r/2}\right)$, hence by Chernoff bound
	\[
	\P{M_r = I_r}
	\leq \P{e_r \geq m_rk\left(p + 2^{-r/2}\right)} \leq 2\exp\left(-\frac{\left(m_rk\right)^2}{2^{r + 1}m_rkp}\right) \leq 2\exp\left(-\frac{Cm_rn}{2^{r + 1}w(n)^{1/2}}\right).
	\]
	There are $\binom{n - k}{m_r}$ possible choices of the set $I_r$, so by union bound the probability $\P{|M_r| = m_r}$ is at most
	\begin{multline*}
	\binom{n - k}{m_r}\P{M_r = I_r} \leq\\\leq \exp\left(m_r\log(n - k) + m_r - m_r\log m_r\right)\cdot 2\exp\left(-\frac{Cm_rn}{2^{r + 1}w(n)^{1/2}}\right)\leq \\
	\leq2\exp\left(m_r\cdot \left(\log n + 1 - \log m_r - \frac{Cn}{2^{r + 1}w(n)^{1/2}}\right)\right).
	\end{multline*} 
	Observe that when $r \leq\wR - 1 = \log Cn- \log(w(n)^{1/2}\log n) - 1$ the value under the exponent, $\log n + 1 - \log m_r - \frac{Cn}{2^{r + 1}w(n)^{1/2}}$,
	is at most $1 -\log m_r$, which approaches $-\infty$ as long as $m_r  \to +\infty$.
	
	Let's find the largest possible value of $m_r$ for which the event $|M_r| = m_r$ might happen at least for one choice of $K$, at least for some value of $r \leq\wR - 1$.
	There are exactly $\binom{n}{k}$ possible choices of the set $K$ and the total of $\wR - 1$ groups, so by union bound we need to find the biggest $m_r$ for which $(\wR - 1)\binom{n}{k}\P{|M_r| = m_r}$ does not converge to zero.
	Since $\binom{n}{k} \leq \left(\frac{ne}{k}\right)^k\leq\exp(2k\log n)$ and $\wR - 1\leq \exp(\log\log Cn)$, it is enough to find the smallest $m_r$ for which
	\[  3k\log n \leq O(w(n)^{1/2}\log n)\ll m_r\cdot \left(\frac{n}{2^{r + 1}w(n)^{1/2}} + \log m_r - \log n - 1\right).\]
	
	Suppose that $m_r > 2^{r + 1}w(n)^{1/4}$ for $r \leq\wR - 1$.
	For $r = 1$, $m_r > 4w(n)^{1/4}$, and:
	\begin{multline*}
	m_r\cdot \left(\frac{n}{2^{r + 1}w(n)^{1/2}} + \log m_r - \log n - 1\right) >\\
	>4w(n)^{1/4}\left(\frac{n}{4w(n)^{1/2}} + \frac{1}{4}\log\log w(n) - \log n\right) =\\
	= \frac{n}{w(n)^{1/4}} + 
	w(n)^{1/4}\log\log w(n) - 4w(n)^{1/4}\log n \gg w(n)^{1/2}\log n,
	\end{multline*}
	as $w(n)= O(n)$, so $\frac{n}{w(n)^{1/4}} = \Omega(n^{3/4})$.
	For $r =\wR- 1 = \log Cn- \log(w(n)^{1/2}\log n) - 1$, $m_r > w(n)^{1/4}\cdot \frac{Cn}{w(n)^{1/2}\log n} = \frac{Cn}{w(n)^{1/4}\log n}$ and (by the bound above)
	\begin{multline*}
	m_r\cdot \left(\frac{n}{2^{r + 1}w(n)^{1/2}} + \log m_r - \log n - 1\right) > m_r(\log m_r - 1)>\\
	>  \frac{Cn}{w(n)^{1/4}\log n}\cdot \left(\log Cn - \frac{1}{4}\log w(n) - \log\log n - 1\right) \gg w(n)^{1/2}\log n,
	\end{multline*}
	since $w(n)= O(n)$ and $w(n) < n$, so $\frac{n}{w(n)^{1/4}\log n}\cdot \left(\log n - \frac{1}{4}\log w(n) - \log\log n - 1\right) = \Omega(n^{3/4})$.
	Since $2^{r + 1}w(n)^{1/4}$ is monotone and continuous in $r$, we get that for all $1 \leq r \leq\wR - 1$ if $m_r > 2^{r + 1}w(n)^{1/4}$ then 
	\[k\log n + k - k\log k + \log\log n \leq 3k\log n \ll m_r\cdot \left(\frac{n}{2^{r + 1}w(n)^{1/2}} + \log m_r - \log n - 1\right),\]
	which means that $(\wR - 1)\binom{n}{k}\P{|M_r| = m_r} \leq \exp\big(-\Omega(nw(n)^{-1/4})\big) \xrightarrow{n\to\infty}0$.
	In other words, the probability that there exists such choice of $k$-subset and such $1 \leq r \leq\wR - 1$ that for the corresponding set of vertices $M_r$ we have $|M_r| = |M_r'| > 2^{r + 1}w(n)^{1/4}$ tends to zero.
	
	Earlier we assumed that $M_r = M_r'$, but in general $M_r = M_r' \sqcup M_r''$, and $m_r = m_r' + m_r''$.
	The opposite case is $M_r = M_r''$, and the analysis transfers without any changes, and $|M_r''| \leq 2^{r + 1}w(n)^{1/4}$ with probability at least $1 - \exp\big(-\Omega(nw(n)^{-1/4})\big)$.
	Hence, with probability of at least $1 - \exp\big(-\Omega(nw(n)^{-1/4})\big)$ for every choice of $K$ and every $1 \leq r \leq\wR - 1$ we have $|M_r| = |M_r'| + |M_r''| \leq 2^{r + 2}w(n)^{1/4}$.
\end{proof}

Since $w(n) \gg \log^{4}n$, $\log n \ll w(n)^{1/4}$, and we set the number of groups $R = \wR = \log Cn- \log(w(n)^{1/2}\log n)$.
By \cref{levels}, with probability at least $1 - \exp\big(-\Omega(nw(n)^{-1/4})\big)$,
\[\sum_{r = 1}^{R - 1}\frac{|M_r|}{2^{r - 1}} \leq \sum_{r = 1}^{R - 1}\frac{2^{r + 2}w(n)^{1/4}}{2^{r - 1}} \leq 8R \cdot w(n)^{1/4} 
=O(\log n \cdot w(n)^{1/4}) = o\left(w(n)^{1/2}\right) = o\left(k\right).\]

Now we move to the first sum, for $i \in M_R$ with $R = \wR$ we have $U_i \leq \frac{k^2}{2^{R - 1}} = 2\frac{k^2w(n)^{1/2}\log n}{Cn} = 2\frac{k \cdot Cw(n)\log n}{Cn} = 2kp\log n$.
We need to prove that with extremely high probability for any choice of $k$-subset  $\sum_{i \in M_R}U_i \leq (n - k)kp(1 - p) + o(nkp(1 - p))$.

We will do this by applying the Bernstein inequality \cite{B46}.
\begin{theorem}[Simple form of Bernstein inequality]\label{bernstein}
	Let $Z_1, \ldots, Z_n$ be independent random variables, $\E{Z_i} = 0$ for $1 \leq i \leq n$.
	Suppose that $|Z_i| \leq L$ 
	for all $1 \leq i \leq n$.
	Then, for all $t > 0$,
	\[    \P{\sum_{i = 1}^nZ_i > t} \leq 2\exp\left(-\frac{\frac{1}{2}t^2}{\sum_{i = 1}^n\E{Z_i^2} + \frac{1}{3}Lt}\right).\]
\end{theorem}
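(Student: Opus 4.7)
The plan is to prove the stated one-sided tail bound directly by the Cramér--Chernoff exponential Markov method. For any $\lambda > 0$,
\[
\P{\sum_{i=1}^n Z_i > t} \le e^{-\lambda t}\prod_{i=1}^n \E{e^{\lambda Z_i}},
\]
so the task reduces to bounding each moment generating function and then optimizing over $\lambda$ in a suitable range.

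The first step is the single-variable estimate. Expanding the exponential and using $\E{Z_i}=0$,
\[
\E{e^{\lambda Z_i}} = 1 + \sum_{k\ge 2}\frac{\lambda^{k}\E{Z_i^{k}}}{k!}.
\]
The boundedness hypothesis $|Z_i|\le L$ yields $|\E{Z_i^{k}}|\le L^{k-2}\E{Z_i^{2}}$ for every $k\ge 2$. The elementary inequality $k!\ge 2\cdot 3^{k-2}$ (immediate by induction from $k=2,3$) then dominates the remainder by a geometric series:
\[
\sum_{k\ge 2}\frac{(\lambda L)^{k}}{k!} \le \frac{(\lambda L)^{2}}{2}\sum_{k\ge 0}\left(\frac{\lambda L}{3}\right)^{k} = \frac{(\lambda L)^{2}/2}{1-\lambda L/3},
\]
valid for $\lambda L < 3$. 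Combined with $1+x\le e^{x}$, this gives
\[
\E{e^{\lambda Z_i}} \le \exp\!\left(\frac{\lambda^{2}\E{Z_i^{2}}/2}{1-\lambda L/3}\right).
\]

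Writing $V := \sum_{i}\E{Z_i^{2}}$, taking the product over $i$, and inserting this into the Chernoff step yields
\[
\P{\sum_{i=1}^{n} Z_i > t} \le \exp\!\left(-\lambda t + \frac{\lambda^{2} V/2}{1-\lambda L/3}\right), \qquad 0 < \lambda < 3/L.
\]
I will close the proof by choosing $\lambda := t/(V+Lt/3)$, which lies in $(0,3/L)$ since $Lt/(V+Lt/3) < 3$. A direct calculation using $1-\lambda L/3 = 3V/(3V+Lt)$ collapses the exponent to exactly $-t^{2}/(2(V+Lt/3))$, producing the one-sided bound
\[
\P{\sum_{i=1}^{n} Z_i > t} \le \exp\!\left(-\frac{t^{2}/2}{V + Lt/3}\right),
\]
which is strictly stronger than the claimed $2\exp(\cdots)$. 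The factor of $2$ in the statement is harmless slack; if one also wants the two-sided bound $\P{|\sum_i Z_i| > t}$, repeating the argument with $-Z_i$ in place of $Z_i$ and applying a union bound accounts for it exactly.

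The only genuinely delicate point is the geometric-series comparison that produces the constant $1/3$ in the denominator: a coarser bound such as $\sum_{k\ge 2}(\lambda L)^{k}/k! \le e^{\lambda L}-1-\lambda L$ would fail to collapse into a rational function of $\lambda L$ with the clean denominator $1-\lambda L/3$, and would not lead to the crisp exponent $V+Lt/3$ after optimization. Getting this constant right is what makes the final choice $\lambda = t/(V+Lt/3)$ the natural optimizer; everything else in the proof is a routine algebraic simplification.
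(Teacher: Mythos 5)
Your proof is correct. Note that the paper itself gives no proof of \cref{bernstein}: it is quoted as a known result with a citation to Bernstein, so there is no in-paper argument to compare against. Your derivation is the standard Cram\'er--Chernoff one: the moment comparison $|\E{Z_i^k}|\le L^{k-2}\E{Z_i^2}$ together with $k!\ge 2\cdot 3^{k-2}$ gives the MGF bound $\E{e^{\lambda Z_i}}\le \exp\bigl(\tfrac{\lambda^2\E{Z_i^2}/2}{1-\lambda L/3}\bigr)$ for $\lambda L<3$, and the choice $\lambda=t/(V+Lt/3)$ collapses the exponent exactly as you compute; you in fact obtain the one-sided bound without the factor $2$, which is strictly stronger than the stated inequality and certainly suffices for its use in the proof of \cref{varbound}.
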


By definition of $M_R$, $\sum_{i \in M_R}U_i \leq \sum_{i = k + 1}^n\min\left(U_i, 2kp\log n\right)$. It is clear that for all $i \in M_R$, $\E{\min(U_i, 2kp\log n)} \leq \E{U_i} = kp$.
Also, since $0 \leq \min(U_i, 2kp\log n) \leq U_i$ almost surely,  \[\V{\min(U_i, 2kp\log n)} \leq \E{\min(U_i, 2kp\log n)^2} \leq \E{U_i^2} = \E{(X_i - kp)^4} \leq  k^2p.\]
The last inequality holds because $\E{(X_i - kp)^4}$ is the fourth central moment of a binomial random variable, and as such its value is known to be $kp(1-p)(1 + (3k-6)p(1-p)) \le kp(1-p)(1 + \frac{3k-6}{4})$.

Let $Z_i := \min(U_i, 2kp\log n) - \E{\min(U_i, 2kp\log n)}$ for all $k + 1 \leq i \leq n$, then $\E{Z_i} = 0$ and $\E{Z_i^2} = \V{\min(U_i, 2kp\log n)} \leq k^2p$.
Moreover, $|Z_i| \leq 2kp\log n$.
Recall that $w(n) \gg \log^4 n$, let $\gamma(n) := \sqrt{\frac{w(n)^{1/2}}{13C\log n}}$.
By \cref{bernstein}:
\begin{multline*}
\P{\sum_{i = k + 1}^n\min\left(U_i, 2kp\log n\right) > (n - k)kp(1 - p) + \frac{(n - k)kp(1 - p)}{\gamma(n)}} = \\
= \P{\sum_{i = k + 1}^n\min\left(U_i, 2kp\log n\right) > \sum_{i = k + 1}^n\E{U_i} + \frac{(n - k)kp(1 - p)}{\gamma(n)}} \leq \\
\leq \P{\sum_{i = k + 1}^n\min\left(U_i, 2kp\log n\right) > \sum_{i = k + 1}^n\E{\min(U_i, 2kp\log n)} + \frac{(n - k)kp(1 - p)}{\gamma(n)}} = \\
= \P{\sum_{i = k + 1}^nZ_i >  \frac{(n - k)kp(1 - p)}{\gamma(n)}} \leq\qquad \qquad \qquad \qquad \qquad \qquad 
\end{multline*}
\begin{multline*}
\qquad \qquad \qquad\leq 2\exp\left(-\frac{(n - k)^2k^2p^2(1 - p)^2}{2\gamma(n)^2\left(\sum_{i = k+1}^n\E{Z_i^2} + kp\log n \cdot \frac{(n - k)kp(1 - p)}{3\gamma(n)}\right)}\right)\leq  \\
\qquad 
\leq 2\exp\left(-\frac{(n - k)^2k^2p^2}{2\gamma(n)^2\left((n - k)k^2p + (n - k)k^2p^2\frac{\log n}{3\gamma(n)}\right)}\right) =\\= 2\exp\left(-\frac{(n - k)p}{2\gamma(n)^2\left(1 + \frac{p\log n}{3\gamma(n)}\right)}\right).
\end{multline*}
As $\frac{p\log n}{3\gamma(n)} = O\left(\frac{w(n)^{3/4}\log^{3/2}}{n}\right) = o(1)$, $(n - k)p \simeq w(n)$, $\gamma(n) = \sqrt{\frac{w(n)^{1/2}}{13C\log n}}$, and  $k = Cw(n)^{1/2}$, we have:
\[     \frac{(n -k)p}{2\gamma(n)^2\left(1 + \frac{p\log n}{3\gamma(n)}\right)} \geq \frac{(n - k)p}{4\gamma(n)^2} \simeq \frac{13C\log n \cdot w(n)}{4\sqrt{w(n)}} > 3k\log n.\]

There are $\binom{n}{k} \leq \exp(k\log n)$ choices of $k$ vertices, so the probability that  at least for one choice of adversarial $k$-subset $\sum_{i = k + 1}^n\min(U_i, 2kp\log n) > (n - k)kp(1 - p) + \frac{(n - k)kp(1 - p)}{\gamma(n)}$ is at most

\begin{multline*}
\binom{n}{k}\P{\sum_{i = k + 1}^n\min\left(U_i, 2kp\log n\right) > (n - k)kp(1 - p) + \frac{(n - k)kp(1 - p)}{\gamma(n)}} \leq \\
\leq \binom{n}{k} \cdot 2\exp(-3k\log n) < \exp(-2k\log n).
\end{multline*}
Thus, with probability at least $1 - \exp(-2k\log n)$, for every choice of the $k$-subset we get
$$\sum_{i = k + 1}^n\min\left(U_i, 2kp\log n\right) \leq (1 + o(1))(n - k)kp(1 - p),$$
which finishes the proof of \cref{varbound}.

\end{document}